\documentclass{revtex4}
\usepackage{amsmath,amsthm}
\usepackage{rotating}
\usepackage{multirow}

\textheight 20.0 cm
\topmargin 0 cm
\textwidth 17 cm
\oddsidemargin 0.0 cm 

\usepackage{graphicx}

\usepackage{amsfonts}
\newtheorem{theorem}{Theorem}[section]
\newtheorem{lemma}[theorem]{Lemma}
\newtheorem{proposition}[theorem]{Proposition}
\newtheorem{cor}[theorem]{Corollary}

\theoremstyle{remark}
\newtheorem{remark}[theorem]{Remark}

\theoremstyle{definition}
\newtheorem{definition}[theorem]{Definition}

\theoremstyle{example}
\newtheorem{example}[theorem]{Example}

\theoremstyle{notation}
\newtheorem{notation}[theorem]{Notation}
\newcommand{\bra}[1]{\langle#1|}
\newcommand{\ket}[1]{|#1\rangle}

\begin{document}

\title{Independence and totalness of subspaces in phase space methods}            
\author{A. Vourdas}
\affiliation{Department of Computer Science,\\
University of Bradford, \\
Bradford BD7 1DP, UK\\}

\begin{abstract}
The concepts of independence and totalness of subspaces are introduced in the context of 
quasi-probability distributions in phase space, for quantum systems with finite-dimensional Hilbert space.
It is shown that due to the non-distributivity of the lattice of subspaces, there are 
various levels of independence, from pairwise independence up to (full) independence. 
Pairwise totalness, totalness and other intermediate concepts are also introduced, which roughly express that the subspaces 
overlap strongly among themselves, and they
cover the full Hilbert space.
A duality between independence and totalness, that involves orthocomplementation (logical NOT operation), is discussed.
Another approach to independence is also studied, using
Rota's formalism on independent partitions of the Hilbert space. This  is used to define informational independence, which is proved to be equivalent to independence.
As an application, the pentagram (used in discussions on contextuality) is analyzed using these concepts.
\end{abstract}
\maketitle

\section{Introduction}

Phase space methods\cite{P1,P2,P3} play an important role in quantum mechanics.
They study various quasi-probability distributions which are the analogues of joint probabilities, 
for non-commuting variables like the position and momentum.
In this paper we consider quantum systems with finite Hilbert space $H(d)$, and study a wide class of such functions $R(i)$
related to projectors in subspaces $H_1,...,H_n$ of $H(d)$.
Special cases are the $Q$-function, the probability distribution in position space, etc.
In this context we study two concepts, independence and totalness. 

Linear independence (which for simplicity we call independence) is a very fundamental concept in the theory of vector spaces, and in other areas like quantum mechanics that depend on it.
A deeper study of this concept led to the subject of matroids\cite{MA1,MA2, MA3}, which defines independence through some axioms, and then defines the concepts of rank and basis.
In a different context, independence has been used within the general framework of the continuous geometries by von Neumann\cite{CG1,CG2}.
Another approach by Rota and collaborators \cite{A0,A1,A2,A3,A4,A5,A6}, defines independent partitions of the Hilbert space.

Ideas from all these areas are incorporated in the present formalism.
We show that there are various levels of independence for the subspaces $\{H_1,...,H_n\}$.
We use indistinguishably the terms independence and disjointness, but for simplicity in most of the paper we use the term independence only.
We also introduce the concept of totalness, which requires strong overlap between the subspaces, and which is dual to the concept of independence.
We show that there are various levels of totalness.
The existence of various levels of independence and totalness, is intimately related to the non-distributive nature of the lattice of subspaces.

Distributivity is a very fundamental property in classical physics and classical (Boolean) logic. For example, 
a student studies a compulsory module $C$, and in addition to that he has to choose one of two optional modules $O_1$ or $O_2$.   
The following statements are equivalent:
\begin{itemize}
\item
He will study the module $C$ and in addition to that module $O_1$ or module $O_2$.
\item
He will study the modules $C$ and $O_1$ or he will study the modules $C$ and $O_2$.
\end{itemize}
The equivalence looks trivial, because distributivity is deeply embedded in our everyday language and the classsical world, which are formally  described with set theory and Boolean algebra.
In Quantum Mechanics distributivity does not hold,  and we need to develop appropriate language that describes this and plays complementary role to non-commutativity.
Concepts which are trivially equivalent in a distributive structure, might become inequivalent in a non-distributive structure.
The various levels of independence (or disjointness), and the various levels of totalness of
sets $\{H_1,...,H_n\}$ of subspaces of $H(d)$, are examples of this.  

More specifically, in this paper:
\begin{itemize}
\item
We introduce the concept of $n$  independent subspaces, which is generalization of  independent vectors.
We show that for $n\ge 3$,  independence is stronger concept than pairwise independence.
This is related to the non-distributive nature of the lattice of subspaces.
We also introduce various intermediate concepts of independence, and the degree of independence.

\item
We introduce total sets of $n$ subspaces, which are extensions of total sets of vectors. 
Totalness means not just covering of the full Hilbert space $H(d)$, but also strong overlap between the subspaces.
We  show that for $n\ge 3$ 
totalness is stronger than pairwise totalness.
This is related to the fact that the lattice of subspaces is non-distributive.
We also introduce various intermediate concepts of totalness, and the degree of totalness.
\item
There is a duality between a set of independent subspaces, and the total set of 
the orthocomplements of these subspaces.
Orthocomplementation (logical NOT operation) transforms independence into totalness.
These ideas are interpreted in terms of measurements with projectors to these subspaces.

\item
A measurement with a projector to a particular subspace, might give the same result for two different states.
For a given measurement, we partition the Hilbert space into sets (blocks) of states, such that this measurement gives the same result for all states in each block
(when the outcome is `yes'). We then introduce the concept of informationally independent measurements, and show that it is equivalent to independence.
This part, links the present work with Rota's formalism on independent partitions\cite{A0,A1,A2,A3,A4,A5,A6}.

\item
Using these concepts we discuss the pentagram, which is used in contextuality\cite{C0,C1,C2,C3,C4,C5,C6,C7,C8}.
The pentagram within a non-contextual hidden variable theory, uses marginals of joint probability distributions.
They are based on the law of total probability, which in turn depends on distributivity.
In non-distributive structures joint probabilities and their marginals are problematic (e.g., joint probabilities
of non-commuting variables). 

\item
Within the full lattice of subspaces which is non-distributive, there are sublattices which are distributive
(e.g., a sublattice generated by commuting subspaces). In these `islands', independence is equivalent to pairwise independence,  
totalness  is equivalent to pairwise totalness, joint probabilities and their marginals are well defined, etc.

\end{itemize}
Overall, the development of such concepts provides a complementary approach to non-commutativity.
Quantum theory is usually described through non-commutativity, and in this paper it is described through non-distributivity.

In section II we introduce within set theory, the concept of disjointness or independence and also the dual concept of totalness. 
In set theory distributivity holds, and there is a single concept of independence and a single concept of totalness.
These two concepts define partitions, which are used 
in the law of total probability, and in defining marginals from probability distributions.

In section III we present briefly the lattice of subspaces. 
We also use projectors to the subspaces $\{H_i\}$ to define a generalized phase space function $R(i)$.

In section IV we introduce various levels of independence, and define the degree of independence. 
In section V we introduce various levels of totalness, and define the degree of totalness.

In section VI we use independent partitions, to define the concept of informationally independent subspaces and measurements.
We also show that informational independence is equivalent to independence.
Weaker concepts of independence (like pairwise independence), are not informationally independent.

As an application of these ideas, we discuss in section VII the pentagram, which is used in discussions on contextuality.
We conclude in section VIII with a discussion of our results.

\section{Disjointness and totalness in set theory}\label{two}

We consider the set of all subsets of  a finite set $\Omega$ (the  powerset $2^{\Omega}$).
In it we define the conjunction (logical AND), disjunction (logical OR), and negation (logical NOT), as the
intersection, union and complement:
\begin{eqnarray}
A_1\wedge A_2=A_1\cap A_2;\;\;\;\;\;A_1\vee A_2=A_1 \cup A_2;\;\;\;\neg A=\Omega \setminus A.
\end{eqnarray}
The powerset $2^{\Omega}$
with these operations is a Boolean algebra.
The corresponding partial order $\prec$ is `subset'.
The smallest element is the empty set ${\cal O}=\emptyset$, and 
the largest element is ${\cal I}=\Omega$. 

\begin{definition}
\mbox{}
\begin{itemize}
\item[(1)]
The subsets $A_1,...,A_n$ of $\Omega$ are  independent or disjoint, if 
\begin{eqnarray}\label{X1}
(A_1\vee...\vee A_{i-1}\vee A_{i+1}\vee...\vee A_n)\wedge A_i=\emptyset.
\end{eqnarray}
for all $i=1,...,n$.
\item[(2)] 
The subsets $A_1,...,A_n$ of $\Omega$ are pairwise independent or pairwise disjoint, if $A_i\wedge A_j=\emptyset$ for all $i,j$.
\item[(3)]
The subsets $A_1,...,A_n$ of $\Omega$ are weakly independent or weakly disjoint, if 
\begin{eqnarray}
A_1\wedge ...\wedge A_n=\emptyset.
\end{eqnarray}

\end{itemize}
\end{definition}
\begin{definition}
\mbox{}
\begin{itemize}
\item[(1)]
The subsets $A_1,...,A_n$ of $\Omega$ form a total set, if
\begin{eqnarray}\label{X10}
(A_1\wedge...\wedge A_{i-1}\wedge A_{i+1}\wedge...\wedge A_n)\vee A_i=\Omega.
\end{eqnarray}
for all $i=1,...,n$:
\item[(2)]
The subsets $A_1,...,A_n$ of $\Omega$ form a pairwise total set, if $A_i\vee A_j=\Omega$ for all $i,j$.
\item[(3)]
The subsets $A_1,...,A_n$ of $\Omega$ form a weakly total set of subsets, if 
\begin{eqnarray}
A_1\vee ...\vee A_n=\Omega.
\end{eqnarray}
\end{itemize}
The subsets $A_1,...,A_n$ form a partition, if they are disjoint (independent) and they also form a weakly total set. 
\end{definition}
\begin{proposition}
In set theory:
\begin{itemize}
\item[(1)]
Independence is equivalent to pairwise independence.
Independence is stronger concept than weak independence (they are equivalent for $n=2$).
\item[(2)]
Totalness is equivalent to pairwise totalness.
Totalness is stronger concept than weak totalness (they are equivalent for $n=2$).
\end{itemize}
\end{proposition}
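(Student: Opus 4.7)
The plan is to exploit distributivity of the Boolean algebra $2^{\Omega}$ throughout, since both parts of the proposition reduce to a one-line algebraic expansion once distributivity is available.

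For part (1), I would expand the defining equation~(\ref{X1}) by distributing $\wedge A_i$ over the union:
\begin{eqnarray*}
(A_1\vee\cdots\vee A_{i-1}\vee A_{i+1}\vee\cdots\vee A_n)\wedge A_i
\;=\;\bigcup_{j\ne i}(A_j\cap A_i).
\end{eqnarray*}
This union equals $\emptyset$ if and only if every term $A_j\cap A_i$ is empty, which gives the equivalence of independence and pairwise independence in a single symmetric step (running over all $i$). For the comparison with weak independence, note that $A_1\cap\cdots\cap A_n\subseteq A_i\cap A_j$ for any two indices $i\ne j$, so (pairwise) independence implies weak independence. To see the converse fails for $n\ge 3$, I would exhibit the standard example $\Omega=\{1,2,3\}$, $A_1=\{1,2\}$, $A_2=\{2,3\}$, $A_3=\{1,3\}$: the triple intersection is empty but no pair is disjoint. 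For $n=2$ both conditions collapse to $A_1\cap A_2=\emptyset$, giving equivalence.

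Part (2) is the exact dual via De Morgan. Setting $B_i=\neg A_i=\Omega\setminus A_i$ interchanges $\wedge$ with $\vee$ and $\emptyset$ with $\Omega$, so equation~(\ref{X10}) for $\{A_i\}$ is equation~(\ref{X1}) for $\{B_i\}$, and similarly for the pairwise and weak versions. Hence every statement in (2) follows from the corresponding statement in (1). Equivalently, one can re-run the distributive argument directly: $(A_1\cap\cdots\cap A_{i-1}\cap A_{i+1}\cap\cdots\cap A_n)\cup A_i=\bigcap_{j\ne i}(A_j\cup A_i)$ equals $\Omega$ iff each factor equals $\Omega$. The counterexample separating weak totalness from totalness for $n\ge 3$ is obtained by complementing the previous one.

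The main obstacle is merely bookkeeping: ensuring the distributive expansion is applied symmetrically so the conclusion holds for every index $i$, and being explicit that the separating example needs $n\ge 3$ (so that the two notions genuinely coincide at $n=2$). No step involves real difficulty once distributivity of $2^\Omega$ is invoked, which is precisely the point the authors wish to contrast with the lattice of subspaces in the subsequent sections.
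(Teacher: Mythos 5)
Your proposal is correct and follows essentially the same route as the paper: distributivity turns the defining meet/join into a union of pairwise intersections (resp.\ an intersection of pairwise unions), which vanishes (resp.\ equals $\Omega$) iff each term does. You additionally supply the containment $A_1\cap\cdots\cap A_n\subseteq A_i\cap A_j$, the explicit three-set counterexample separating weak independence from independence, and the De Morgan duality for part (2) --- details the paper's own proof leaves implicit --- but these are elaborations of the same argument rather than a different approach.
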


\begin{proof}
\begin{itemize}
\item[(1)]
Using the distributivity property of set theory, we rewrite Eq.(\ref{X1}) as
\begin{eqnarray}
(A_1\wedge A_i)\vee...\vee (A_{i-1}\wedge A_i)\vee (A_{i+1}\wedge A_i)\vee...\vee (A_n\wedge A_i)=\emptyset.
\end{eqnarray}
This shows that $A_i\wedge A_j=\emptyset$ and therefore independence is equivalent to pairwise independence.
\item[(2)]
Using the distributivity property of set theory, we rewrite Eq.(\ref{X10}) as
\begin{eqnarray}
(A_1\vee A_i)\wedge...\wedge (A_{i-1}\vee A_i)\wedge (A_{i+1}\vee A_i)\wedge...\wedge (A_n\vee A_i)=\Omega.
\end{eqnarray}
This shows that $A_i\vee A_j=\Omega$ and therefore  totalness is equivalent to pairwise totalness.
\end{itemize}
\end{proof}

\begin{proposition}
The $A_1,...,A_n$ are a total set of subsets, if and only if
the $\neg A_1,...,\neg A_n$ are independent subsets of $\Omega$. 
\end{proposition}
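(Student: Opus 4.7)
The plan is to prove the equivalence by a direct application of De Morgan's laws, which hold in the Boolean algebra $2^\Omega$, combined with the fact that $\neg$ is an involution ($\neg\neg A = A$). The totalness condition and the independence condition are dual under $\neg$, so the equivalence should fall out mechanically once we apply the correct identities.

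First I would start with the totalness condition from Eq.~(\ref{X10}): for each $i$,
\begin{equation*}
(A_1\wedge\cdots\wedge A_{i-1}\wedge A_{i+1}\wedge\cdots\wedge A_n)\vee A_i=\Omega.
\end{equation*}
Applying $\neg$ to both sides gives $\emptyset$ on the right. Then, using De Morgan repeatedly on the left — namely $\neg(X\vee Y)=\neg X\wedge\neg Y$ and $\neg(X\wedge Y)=\neg X\vee\neg Y$ — I would rewrite the left-hand side as
\begin{equation*}
(\neg A_1\vee\cdots\vee\neg A_{i-1}\vee\neg A_{i+1}\vee\cdots\vee\neg A_n)\wedge\neg A_i=\emptyset,
\end{equation*}
which is exactly the independence condition (\ref{X1}) written for the subsets $\neg A_1,\ldots,\neg A_n$.

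For the converse direction I would run the same computation in reverse: start from the independence relation for $\neg A_1,\ldots,\neg A_n$, negate both sides, apply De Morgan, and use $\neg\neg A_j=A_j$ to recover the totalness condition for $A_1,\ldots,A_n$. Since each step is an equivalence (negation is a bijection on $2^\Omega$, and De Morgan is an identity), both directions follow simultaneously.

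There is essentially no obstacle here; the argument is purely formal manipulation inside a Boolean algebra. The only thing to be careful about is quantifying over $i$: the totalness definition requires the identity to hold for every $i=1,\ldots,n$, and the independence definition does as well, so the equivalence must be stated index-by-index before collecting over all $i$. This is straightforward bookkeeping rather than a genuine difficulty.
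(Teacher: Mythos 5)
Your proof is correct, but it takes a genuinely different route from the paper's. The paper's one-line proof never touches the $n$-ary conditions (\ref{X1}) and (\ref{X10}): it leans on the preceding proposition, which (via distributivity of $2^{\Omega}$) reduces totalness to pairwise totalness and independence to pairwise independence, and then applies de Morgan only to the pairwise relation, converting $A_i\vee A_j=\Omega$ into $\neg A_i\wedge \neg A_j=\emptyset$. You instead negate the full defining identity (\ref{X10}) for each $i$ and push $\neg$ through with de Morgan to land directly on (\ref{X1}) for $\neg A_1,\ldots,\neg A_n$, with the converse following because $\neg$ is an involution and every step is reversible. The trade-off: the paper's argument is shorter given what it has already established, but it secretly depends on distributivity through the pairwise reduction; your argument uses only de Morgan and $\neg\neg A=A$, so it transfers verbatim to any orthocomplemented lattice --- and it is in fact exactly the argument the paper itself deploys later for the quantum analogue, Proposition \ref{pro39}, where the pairwise reduction is unavailable because ${\cal L}(d)$ is not distributive.
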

\begin{proof}
Using de Morgan's rule, the negation of $A_i\wedge A_j=\emptyset$, gives $\neg A_i\vee \neg A_j=\Omega$.
\end{proof}

\subsection{Marginal distributions: distributivity and the law of total probability}
The marginals of joint probability distributions are based on the law of total probability in Kolmogorov's probability theory.
\begin{proposition}\label{pro1q}
Let $\Omega$ be a set of alternatives, $B_1,...,B_n$ a partition of the set $\Omega$ and $A \subseteq \Omega$. The law of total probability states that
\begin{eqnarray}\label{12q}
p(A)=\sum _ip(A\cap B_i).
\end{eqnarray}
\end{proposition}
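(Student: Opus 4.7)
The plan is to decompose the event $A$ into its intersections with the partition blocks and then invoke finite additivity of the probability measure. First I would write $A = A \cap \Omega$ and use the assumption that $B_1,\dots,B_n$ form a partition, so in particular $\Omega = B_1 \cup \dots \cup B_n$. The key algebraic step is to apply distributivity of intersection over union in the Boolean algebra $2^{\Omega}$, yielding
\[
A \;=\; A \cap (B_1 \cup \dots \cup B_n) \;=\; (A\cap B_1) \cup \dots \cup (A\cap B_n).
\]
This is exactly the point where the distributive character of set theory enters, which is the feature the paper is drawing attention to.

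Next I would verify that the pieces $A\cap B_i$ are pairwise disjoint. For $i\neq j$, since $B_i \cap B_j = \emptyset$ (the $B_i$ are independent as part of a partition), one gets $(A\cap B_i)\cap (A\cap B_j)=A\cap B_i\cap B_j=\emptyset$. Then finite additivity of the Kolmogorov probability measure $p$ applied to this disjoint union gives
\[
p(A) \;=\; p\!\left(\bigcup_{i=1}^n (A\cap B_i)\right) \;=\; \sum_{i=1}^n p(A\cap B_i),
\]
which is the claim.

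There is no real technical obstacle here — the argument is essentially two lines — but the point I would emphasize in the presentation is \emph{where} distributivity was used, namely in splitting $A \cap (B_1\vee\dots\vee B_n)$ into $(A\wedge B_1)\vee\dots\vee(A\wedge B_n)$. This is precisely the step that fails in the non-distributive lattice of subspaces of a Hilbert space, and so it motivates the later sections where marginals of joint probability distributions become problematic for non-commuting observables. I would therefore close the proof with a short remark pointing out this dependence, so that the proposition fits naturally into the paper's broader narrative contrasting distributive (classical) and non-distributive (quantum) settings.
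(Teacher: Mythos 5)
Your proof is correct and follows essentially the same route as the paper's: decompose $A=A\cap\Omega$ via distributivity into the disjoint union of the $A\cap B_i$, then apply finite additivity of the Kolmogorov measure. Your closing remark about where distributivity enters matches the emphasis the paper itself places on this step.
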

\begin{proof} 
 Using the distributivity property of set theory we get
\begin{eqnarray}
A=A\cap \Omega=A\cap (B_1\cup...\cup B_n)=(A\cap B_1)\cup...\cup (A\cap B_n).
\end{eqnarray}
Since $(A\cap B_i)\cap(A\cap B_j)=\emptyset$, we use the additivity property of Kolmogorov probabilities
\begin{eqnarray}
S_1\cap S_2=\emptyset \;\rightarrow\;p(S_1\cup S_2)=p(S_1)+p(S_2),
\end{eqnarray}
we prove Eq.(\ref{12q}).
\end{proof}
Eq.(\ref{12q}) can be used to define marginals of probability distributions.
The ingredients for the law of total probability, are partitions  and distributivity\cite{VJGP}.
Partitions are based on the concepts of disjointness (independence) and also weak totalness.
We will show below that in non-distributive structures, there are various levels of disjointness (independence) and various levels of totalness.
Consequently the relationship between a joint probability distribution and its marginals becomes problematic.

\section{Quantum systems with variables in ${\mathbb Z}(d)$}

We consider a quantum system $\Sigma (d)$ with variables in ${\mathbb Z}(d)$ (the integers modulo $d$), with states in a $d$-dimensional
Hilbert space $H(d)$\cite{vour,vour2}. 
We also consider an orthonormal basis of `position states' which we denote as
$\ket {X;\alpha}$, where the $a \in {\mathbb Z}(d)$, and the $X$ in the notation indicates position states. 
We also consider another orthonormal basis of `momentum states' which  we denote as
$\ket {P;\beta}$, where the $P$ in the notation indicates momentum states. 
They are related to the position states through a finite Fourier transform:
\begin{eqnarray}
\ket {P; \beta}=\frac{1}{\sqrt{d}}\sum _{\alpha}\omega (\alpha \beta)\ket {X;\alpha};\;\;\;\omega(\alpha)=\exp \left(i\frac{2\pi \alpha}{d}\right)
;\;\;\;\alpha,\beta \in {\mathbb Z}(d).
\end{eqnarray}
Displacement operators in the ${\mathbb Z}(d)\times {\mathbb Z}(d)$ phase space of this system, are defined as
\begin{eqnarray}
D(\alpha , \beta)=Z^{\alpha}X^{\beta}\omega (-2^{-1}\alpha \beta);\;\;\;\;
Z=\sum _m\omega (m)\ket{X;m}\bra{X;m};\;\;\;\;
X=\sum _m \ket{X;m+1}\bra{X;m}
\end{eqnarray}
The factor $2^{-1}$ above, is an element of ${\mathbb Z}(d)$, and it exists only for odd $d$. 
The formalism of finite quantum systems, is slightly different in the cases of odd and even $d$.
Below, in the formulas that use the displacement operators, we assume that the dimension $d$ is an odd integer. 

Acting with $D(\alpha , \beta)$ on a `generic' and normalized fiducial vector $\ket {f}$
\begin{eqnarray}
\ket{f}=\sum f_m \ket{X;m};\;\;\;\sum |f_m|^2=1,
\end{eqnarray}
we get the following $d^2$ states which we call coherent states\cite{vour,vour2}
\begin{eqnarray}\label{coh}
\ket{C;\alpha, \beta}=D(\alpha , \beta)\ket{f}=\sum A_m(\alpha, \beta)\ket{X;m};\;\;\;A_m(\alpha, \beta)=\omega (am-2^{-1}\alpha \beta)f_{m-\beta};\;\;\;\;\alpha , \beta \in {\mathbb Z}(d).
\end{eqnarray}
The $C$ in the notation indicates coherent states.
We can write the $A_m(\alpha, \beta)$ as a $d\times d^2$  matrix, with indices $m$ and the pair $(\alpha, \beta)$ written as one index.
Then the requirement of a generic fiducial vector is that the rank of this matrix is $d$.
In this case any $d$ of the $d^2$ coherent states are linearly independent.

The coherent states obey the resolution of the identity:
\begin{eqnarray}\label{coh}
\frac{1}{d}\sum_{\alpha, \beta}\ket{C;\alpha, \beta}\bra{C;\alpha, \beta}={\bf 1}.
\end{eqnarray}

We will use the notation $H(X;\alpha)$, $H(P;\alpha)$, $H(C;\alpha, \beta)$ for the one-dimensional subspaces of $H(d)$
that contain the states $\ket {X;\alpha}$, $\ket {P;\alpha}$, $\ket{C;\alpha, \beta}$, correspondingly. 
We will also use the notation $\Pi[H(X;\alpha)]$, $\Pi[H(P;\alpha)]$, $\Pi[H(C;\alpha, \beta)]$ for the
projectors to these subspaces:
\begin{eqnarray}
&&\Pi[H(X;\alpha)]=\ket {X;\alpha}\bra{X;\alpha};\;\;\;\sum_{\alpha}\Pi[H(X;\alpha)]={\bf 1}\nonumber\\
&&\Pi[H(P;\alpha)]=\ket {P;\alpha}\bra{P;\alpha};\;\;\;\sum_{\alpha}\Pi[H(P;\alpha)]={\bf 1};\nonumber\\
&&\Pi[H(C;\alpha, \beta)]=\ket{C;\alpha, \beta}\bra{C;\alpha, \beta};\;\;\;\frac{1}{d}\sum_{\alpha, \beta }\Pi[H(C;\alpha, \beta)]={\bf 1}.
\end{eqnarray}

\subsection{The lattice ${\cal L}(d)$ of subspaces}
The Birkhoff-von Neumann lattice
of the closed subspaces
of the Hilbert space, with the operations of conjunction, disjunction and complementation, has been studied extensively in the literature \cite{LO1,LO2,LO3,LO4,LO5,LO6}.

We consider the finite-dimensional Hilbert space $H(d)$, describing the system $\Sigma (d)$.
In the set of subspaces of $H(d)$, we define the conjunction (logical AND) and disjunction (logical OR) \cite{la1,la2,la3,la4,la5}:
\begin{eqnarray}
H_1\wedge H_2=H_1\cap H_2;\;\;\;\;\;H_1\vee H_2={\rm span}(H_1 \cup H_2).
\end{eqnarray}
We stress that the logical OR is not just the union, but it contains superpositions of states in the two spaces.
This will lead later to the distinction between  pairwise independence and independence.

The set of subspaces of $H(d)$
with these operations is a lattice, which we denote as ${\cal L}(d)$.
The corresponding partial order $\prec$ is `subspace'.
The smallest element is ${\cal O}=H(0)$ (the zero-dimensional subspace that contains only the zero vector), and 
the largest element is ${\cal I}=H(d)$. 

The lattice ${\cal L}(d)$ is not distributive.
${\cal L}(d)$ is a modular orthocomplemented lattice.
Modularity is a weak version of distributivity,  and is related to independence. Birkhoff discussed the link between matroids 
(which introduce independence in an abstract way) and modular lattices\cite{la1}. 

Modularity states that
\begin{eqnarray}\label{2}
H_1\prec H_3\;\rightarrow\; H_1\vee (H_2\wedge H_3)=(H_1\vee H_2)\wedge H_3.
\end{eqnarray}
Equivalent to this is the following relation which is valid for any $H_1, H_2, H_3$:
\begin{eqnarray}\label{29}
H_1\wedge (H_2\vee H_3)=H_1\wedge [H_3\vee (H_2\wedge (H_1 \vee H_3))].
\end{eqnarray}

Each subspace has an infinite number of complements.
The orthocomplement of $H_1$ is unique, and is another subspace which we denote as 
$H_1^{\perp}$, with the properties
\begin{eqnarray}\label{3}
&&H_1\wedge H_1^{\perp}={\cal O};\;\;\;\;H_1\vee H_1^{\perp}={\cal I}=H(d);\;\;\;\;(H_1^{\perp})^{\perp}=H_1\nonumber\\
&&(H_1\wedge H_2)^{\perp}=H_1^{\perp}\vee H_2^{\perp};\;\;\;\;(H_1\vee H_2)^{\perp}=H_1^{\perp}\wedge H_2^{\perp}\nonumber\\
&&\dim(H_1)+\dim(H_1^{\perp})=d.
\end{eqnarray}

Orthocomplementation is related to logical NOT, in the description of quantum measurements.
We will use the notation $\Pi(H_1)$ for the projector to the subspace $H_1$. Then
\begin{eqnarray}
\Pi^{\perp}(H_1)=\Pi(H_1^{\perp})={\bf 1}-\Pi(H_1). 
\end{eqnarray}
$H_1^{\perp}$ is the null space of $\Pi(H_1)$, and $H_1$ is the null space of $\Pi(H_1^{\perp})$.

A measurement with $\Pi(H_1)$ on a state $\ket{s}$, will give:
\begin{itemize}
\item
 `yes' with probability $p=\bra{s}\Pi(H_1)\ket{s}$, 
in which case the state will collapse into $\frac{1}{\sqrt p}\Pi(H_1)\ket{s}$
\item
`no' with probability $1-p=\bra{s}\Pi(H_1^{\perp})\ket{s}$, in which case the state will collapse 
into $\frac{1}{\sqrt {1-p}}\Pi(H_1^{\perp})\ket{s}$.
\end{itemize}

An important property of modular lattices \cite{la1}, is that
\begin{eqnarray}\label{12}
\dim (H_1\vee H_2)+\dim (H_1\wedge H_2)=\dim (H_1)+\dim (H_2).
\end{eqnarray}

\begin{definition}
$H_1$ commutes with $H_2$ (we denote this as $H_1{\cal C} H_2$) if
\begin{eqnarray}\label{45}
H_1=(H_1\wedge H_2)\vee (H_1\wedge H_2^{\bot})
\end{eqnarray}
\end{definition}
It can be proved that $H_1{\cal C} H_2$ if and only if $[\Pi (H_1), \Pi (H_2)]=0$.
Commutativity of subspaces is equivalent to commutativity of the projectors to these subspaces.

It is easily seen that:
\begin{itemize}
\item
$H_1 \prec H_2$ implies that $H_1{\cal C} H_2$. 
Therefore $H_1$ commutes with $H_1\vee H_2$ and $H_1\wedge H_2$ (for any $H_2$).
\item
Since ${\cal L}(d)$ is an modular lattice,  if $H_1{\cal C} H_2$, then $H_2{\cal C} H_1$ and also $H_1{\cal C} H_2^{\bot}$,
$H_1^{\bot}{\cal C} H_2$.
\item
Every subspace commutes with $H(d)$ and ${\cal O}$.
\item
If $H_1{\cal C} H_2$ and $H_2{\cal C} H_3$, then the $H_1, H_3$ might not commute (transitivity does not hold).
\end{itemize}

Within the lattice ${\cal L}(d)$ which is non-distributive, there are sublattices which are distributive.
For example, any sublattice of ${\cal L}(d)$ generated by commuting subspaces, is distributive. 
In these `islands' results similar to classical physics do hold. For example, the law of the total probability holds, 
joint probability distributions and their marginals are well defined, etc.

\subsection{Quasi-probability distributions}

Below we consider a set $\{H_1,...,H_n\}$ of $n\ge 2$ proper subspaces of $H(d)$ (which might not have the same dimension). 
\begin{notation}
\begin{eqnarray}\label{xcv}
&&{\mathfrak H}_i=\bigvee _{j\ne i}H_j=H_1\vee...\vee H_{i-1}\vee H_{i+1}\vee...\vee H_n\nonumber\\
&&{\mathfrak H}_i^{\perp}=\bigwedge _{j\ne i}H_j^{\perp}=H_1^{\perp}\wedge...\wedge H_{i-1}^{\perp}\wedge H_{i+1}^{\perp}\wedge...\wedge H_n^{\perp}
\end{eqnarray}
Also
\begin{eqnarray}\label{xcv34}
&&{\mathfrak h}_i=\bigwedge _{j\ne i}H_j=H_1\wedge ...\wedge H_{i-1}\wedge H_{i+1}\wedge ...\wedge H_n\nonumber\\
&&{\mathfrak h}_i^{\perp}=\bigvee _{j\ne i}H_j^{\perp}=H_1^{\perp}\vee...\vee H_{i-1}^{\perp}\vee H_{i+1}^{\perp}\vee...\vee H_n^{\perp}
\end{eqnarray}
\end{notation}
Let $\rho$ be a density matrix, and
\begin{eqnarray}\label{AA}
R(i)={\rm Tr}[\rho \Pi(H_i)]\ge 0;\;\;\;i=1,...,n
\end{eqnarray}
For a given $i$, $R(i)$ is the probability that the measurement $\Pi(H_i)$ will give the outcome `yes'.
However the set $\{R(i)|i=1,...,n\}$ is not in general a probability distribution, but 
it can be viewed as a quasi-probability distribution. 
This is related to the lack of independence between the subspaces $H_i$, and we study this in depth
taking into account the non-distributivity of the quantum structure. 
The index $i$ might be a k-tuple $(\alpha_1,...,\alpha _k)$ which takes a finite number of values.

The concepts of independence and totalness underpin this formalism.
Due to the non-distributivity of the quantum structure, both of these concepts are more complex than in set theory 
discussed earlier in section \ref{two}.

Later, in the study of independence and totalness, we will use two more quasi-probability distributions:
\begin{eqnarray}\label{AAA}
{\widetilde R}(i)={\rm Tr}[\rho \Pi({\mathfrak H}_i^{\perp}\wedge H_i)];\;\;\;{\widehat R}(i)={\rm Tr}[\rho  \Pi({\mathfrak h}_i\vee H_i)].
\end{eqnarray}
Since ${\mathfrak H}_i^{\perp}\wedge H_i\prec H_i\prec {\mathfrak h}_i\vee H_i$, it follows that
\begin{eqnarray}
0\le {\widetilde R}(i)\le R(i) \le {\widehat R}(i).
\end{eqnarray}

The
\begin{eqnarray}
{\widetilde R}(i)={\rm Tr}[\rho \Pi({\mathfrak H}_i^{\perp}\wedge H_i)]=
{\rm Tr}[\rho \Pi(H_1^{\perp}\wedge ...\wedge H_{i-1}^{\perp}\wedge H_i \wedge H_{i+1}^{\perp}\wedge ...\wedge H_n^{\perp})],
\end{eqnarray}
involves the part of the space $H_i$ which overlaps with all $H_j^{\perp}$, and therefore it
does not overlap with any of the $H_j$, for $j\ne i$.
The $\Pi({\mathfrak H}_i^{\perp}\wedge H_i)$ commutes with all $\Pi(H_j)$:
\begin{eqnarray}
[\Pi({H}_j),\Pi({\mathfrak H}_i^{\perp}\wedge H_i)]=0
\end{eqnarray}
The ${\widetilde R}(i)$ is the probability that a measurement $\Pi({\mathfrak H}_i^{\perp}\wedge H_i)$ on a system with density matrix $\rho$,
will give `yes'. In this case the state belongs to $H_i$ and it also belongs to all $H_j^{\perp}$, with $j\ne i$.
Therefore a simultaneous measurement with $\Pi(H_j)$ will give `no', if $j\ne i$.  

The
\begin{eqnarray}
{\widehat R}(i)={\rm Tr}[\rho \Pi({\mathfrak h}_i\vee H_i)]=
{\rm Tr}\{\rho \Pi[(H_1\wedge ...\wedge H_{i-1}\wedge H_{i+1}\wedge ...\wedge H_n)\vee H_i]\},
\end{eqnarray}
involves the disjunction of $H_i$ with the overlap of all $H_j$ (with $j\ne i$).
The ${\widehat R}(i)$ is the probability that a measurement $\Pi({\mathfrak h}_i\vee H_i)$ on a system with density matrix $\rho$,
will give `yes'. In this case the state collapses to a superposition of a state in $H_i$ and another state which belongs to all $H_j$ with $j\ne i$.

\begin{example}\label{exam1}
For the $d$ subspaces $H(X;\alpha)$ we get
\begin{eqnarray}\label{E32}
R(\alpha)=\bra{X;\alpha}\rho \ket {X;\alpha};\;\;\;\sum _{\alpha}R(\alpha)=1;\;\;\;\alpha\in {\mathbb Z}(d).
\end{eqnarray} 
This is the probability distribution in the position space. 
\end{example}
\begin{example}\label{exam2}
For the $2d$ subspaces 
\begin{eqnarray}\label{zm}
&&H_i=H(X;i);\;\;\;i=0,...,d-1\nonumber\\
&&H_i=H(P;i-d);\;\;\;i=d,...,2d-1
\end{eqnarray}
we get
\begin{eqnarray}
&&R(i)=\bra{X;i}\rho \ket {X;i};\;\;\;{\rm if}\;i=0,...,d-1\nonumber\\
&&R(i)=\bra{P;i-d}\rho \ket {P;i-d};\;\;\;{\rm if}\;i=d,...,2d-1\nonumber\\
&&\frac{1}{2}\sum _{i}R(i)=1.
\end{eqnarray} 
This distribution consists of both probabilities in position space and probabilities in momentum space.
Although such a distribution is not used in the literature, it is interesting to apply the concepts of this paper, to it.
\end{example}
\begin{example}\label{exam3}
For the $d^2$ subspaces $H(C;\alpha, \beta)$ we get
\begin{eqnarray}
R(\alpha, \beta)=\bra {C;\alpha, \beta}\rho \ket{C; \alpha, \beta};\;\;\;\frac{1}{d}\sum _{\alpha,\beta}R(\alpha, \beta)=1
;\;\;\;\alpha, \beta \in {\mathbb Z}(d),
\end{eqnarray}
Here the index $i$ is the pair $(\alpha, \beta)\in {\mathbb Z}(d)\times {\mathbb Z}(d)$.
$R(\alpha, \beta)$ is the $Q$-function in the ${\mathbb Z}(d)\times {\mathbb Z}(d)$ phase space.
\end{example}

\section{Levels of independence}
\subsection{Independence}

\begin{proposition}\label{pro11}
The subspaces $H_1,...,H_n$ of $H(d)$ are independent, if one of the following statements,
which are equivalent to each other, holds:
\begin{itemize}
\item[(1)]
For all $i=1,...,n$,
\begin{eqnarray}\label{33}
{\mathfrak H_i}\wedge H_i={\cal O}.
\end{eqnarray}
A state cannot belong to both $H_i$ AND to ${\mathfrak H}_i$ (which contains superpositions of states in all $H_j$ with $j\ne i$).

\item[(2)]
For all $i=1,...,n$,
\begin{eqnarray}\label{33A}
[H_1\vee...\vee H_{i-1}]\wedge H_i={\cal O}.
\end{eqnarray}

\item[(3)]
Any $n$ vectors 
$\ket{v_1}\in H_1,...,\ket{v_n}\in H_n$ (one vector from each of the subspaces $H_i$), are independent:
\begin{eqnarray}\label{ind}
\lambda _1\ket{v_1}+...+\lambda _n \ket{v_n}=0\;\;\rightarrow\;\;\lambda _1=...=\lambda _n=0.
\end{eqnarray}
\end{itemize}
\end{proposition}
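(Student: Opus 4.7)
The plan is to prove the three statements equivalent by the cyclic chain $(1) \Rightarrow (2) \Rightarrow (3) \Rightarrow (1)$. The first implication is immediate from monotonicity of the meet: since $H_1 \vee \cdots \vee H_{i-1} \prec \mathfrak{H}_i$, one has $[H_1 \vee \cdots \vee H_{i-1}] \wedge H_i \prec \mathfrak{H}_i \wedge H_i = \mathcal{O}$.

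For $(2) \Rightarrow (3)$ I would argue by contrapositive. Suppose nonzero vectors $v_j \in H_j$ admit a nontrivial relation $\lambda_1 v_1 + \cdots + \lambda_n v_n = 0$, and let $i$ be the largest index with $\lambda_i \neq 0$. Solving for $v_i$ gives
\begin{eqnarray*}
v_i = -\lambda_i^{-1}(\lambda_1 v_1 + \cdots + \lambda_{i-1} v_{i-1}) \in H_1 \vee \cdots \vee H_{i-1}.
\end{eqnarray*}
Since $v_i \in H_i$ is nonzero, the meet $[H_1 \vee \cdots \vee H_{i-1}] \wedge H_i$ is nontrivial, contradicting (2).

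For $(3) \Rightarrow (1)$ I again use the contrapositive. Take a nonzero $u \in \mathfrak{H}_i \wedge H_i$. Because $\mathfrak{H}_i$ is the linear span of $\bigcup_{j \neq i} H_j$, we can write $u = \sum_{j \neq i} w_j$ with $w_j \in H_j$. To construct a dependency that violates (3), set $v_i = u$; at every $j \neq i$ with $w_j \neq 0$ set $v_j = w_j$; and at every remaining index pick any nonzero $v_j \in H_j$. Then the relation $-v_i + \sum_{j \neq i,\, w_j \neq 0} v_j = 0$ is nontrivial, since the coefficient of $v_i$ equals $-1$, contradicting (3).

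The only genuine obstacle is this last step: statement (3) insists on one vector from \emph{every} $H_j$, whereas the natural decomposition of $u$ in $\mathfrak{H}_i$ may involve only some of them. Padding with zero-coefficient nonzero vectors resolves this, using the standing assumption that each $H_j$ is a proper nonzero subspace of $H(d)$.
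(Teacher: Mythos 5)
Your proof is correct, but it follows a genuinely different route from the paper's. The paper splits the proposition into two equivalences: it proves $(1)\Leftrightarrow(2)$ purely lattice-theoretically, by an inductive application of the modular-law identity $H_1\wedge (H_2\vee H_3)=H_1\wedge [H_3\vee (H_2\wedge (H_1 \vee H_3))]$ (following Halperin's continuous-geometry argument), and separately proves $(1)\Leftrightarrow(3)$ with vectors. You instead close the cycle $(1)\Rightarrow(2)\Rightarrow(3)\Rightarrow(1)$ and never invoke modularity: the hard direction $(2)\Rightarrow(1)$ is obtained indirectly through statement (3) by two elementary linear-algebra arguments (isolating the highest-index nonzero coefficient, and decomposing an element of the span $\mathfrak{H}_i$ as a sum $\sum_{j\ne i} w_j$ with $w_j\in H_j$, padding the missing indices with zero-coefficient nonzero vectors). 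Your version is more elementary and self-contained; the paper's $(1)\Leftrightarrow(2)$ argument has the advantage of living entirely inside the lattice, so it survives in abstract modular lattices where no underlying vectors are available --- which is relevant to the paper's emphasis on modularity versus distributivity. Two small points to tidy: in $(2)\Rightarrow(3)$ the case where the largest index $i$ with $\lambda_i\ne 0$ equals $1$ gives $\lambda_1\ket{v_1}=0$, which contradicts $\ket{v_1}\ne 0$ directly rather than contradicting (2) (the empty join is ${\cal O}$ and its meet with $H_1$ is always trivial); and, as you note yourself, statement (3) must be read as quantifying over nonzero vectors --- the paper's own proof makes the same implicit assumption.
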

\begin{proof}
\begin{itemize}
\item[(1)]
We prove that the first two statements are equivalent.
Our proof is related to the one in \cite{CG2}. 

The fact that the first statement implies the second one, is trivial.
We next prove that if
\begin{eqnarray}\label{AS}
A\wedge H_i=(A\vee H_i)\wedge H_{i+1}={\cal O};\;\;\;A=H_1\vee...\vee H_{i-1}
\end{eqnarray}
then
\begin{eqnarray}
(A\vee H_{i+1})\wedge H_{i}={\cal O}.
\end{eqnarray}
We use the identity of Eq.(\ref{29}) with 
\begin{eqnarray}
H_1\;\rightarrow\;H_i;\;\;\;H_2\;\rightarrow\;H_{i+1};\;\;\;H_3\;\rightarrow\;A.
\end{eqnarray}
and we get
\begin{eqnarray}
H_i\wedge (H_{i+1}\vee A)=H_i\wedge [A\vee (H_{i+1}\wedge (H_i \vee A))].
\end{eqnarray}
The assumptions in Eq.(\ref{AS}) show that the right hand side is ${\cal O}$, and therefore the left hand side is ${\cal O}$.

We next use the identity of Eq.(\ref{29}) with 
\begin{eqnarray}
H_1\;\rightarrow\;H_i;\;\;\;H_2\;\rightarrow\;H_{i+2};\;\;\;H_3\;\rightarrow\;B=A\vee H_{i+1}.
\end{eqnarray}
and we get
\begin{eqnarray}
H_i\wedge (H_{i+2}\vee B)=H_i\wedge [B\vee (H_{i+2}\wedge (H_i \vee B))].
\end{eqnarray}
Using the extra assumption $(H_1\vee...\vee H_{i+1})\wedge H_{i+2}={\cal O}$, we prove that the right hand side is ${\cal O}$, and therefore the left hand side is ${\cal O}$.
We continue in this way, and we prove that the second statement implies the first one.
\item[(2)]
We prove that the first and third statements, are equivalent.
We assume that Eq.(\ref{33}) holds, and prove that if
$\lambda _1\ket{v_1}+...+\lambda _n \ket{v_n}=0$, then $\lambda _1=...=\lambda _n=0$.
Indeed, 
\begin{eqnarray}\label{64}
\lambda _1\ket{v_1}+...+\lambda _{i-1} \ket{v_{i-1}}+\lambda _{i+1} \ket{v_{i+1}}+...+\lambda _n\ket{v_n}=-\lambda _i\ket{v_i}
\end{eqnarray}
From this follows that $\lambda _i=0$ because the left hand side belongs to ${\mathfrak H}_i$,
the right hand side to $H_i$ and ${\mathfrak H}_i\wedge H_i={\cal O}$.
Conversely, if $\lambda _1\ket{v_1}+...+\lambda _n \ket{v_n}=0$ implies that $\lambda _1=...=\lambda _n=0$,
then Eq.(\ref{33}) holds, because if  ${\mathfrak H}_i\wedge H_i\ne {\cal O}$ then we have solution to Eq.(\ref{64}) with $\lambda _i \ne 0$.
\end{itemize}
\end{proof}
\begin{proposition}\label{pro23}
\mbox{}
\begin{itemize}
\item[(1)]
If the set $\{H_1,...,H_n\}$ contains independent subspaces, then the subspaces
in any subset (with cardinality at least $2$) are also  independent.
\item[(2)]
If the subspaces $H_1,...,H_n$ of $H(d)$ are independent, then
\begin{eqnarray}\label{111}
\dim (H_1)+...+\dim( H_n)=\dim( H_1\vee...\vee H_n)\le d.
\end{eqnarray}

\end{itemize}
\end{proposition}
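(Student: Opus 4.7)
The plan is to use the equivalent characterizations of independence in Proposition 4.1, together with the modular dimension identity Eq.~(\ref{12}), and then proceed by induction on $n$.

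For part (1), I would work with the linear-algebraic characterization (statement (3) of Proposition 4.1). If any $n$ vectors chosen one from each of $H_1,\dots,H_n$ are linearly independent, then in particular any $k$ vectors chosen one from each of a subset $\{H_{i_1},\dots,H_{i_k}\}$ with $k\ge 2$ are linearly independent (extend by $\ket{v_j}=0$ for the omitted indices, or simply restrict the linear relation). Alternatively, from statement (1) one observes that the ``big disjunction'' ${\mathfrak H}_{i_j}^{\mathrm{sub}}=\bigvee_{\ell\ne j}H_{i_\ell}$ for the subset is a subspace of the full ${\mathfrak H}_{i_j}=\bigvee_{\ell\ne i_j}H_\ell$, hence
\begin{equation}
{\mathfrak H}_{i_j}^{\mathrm{sub}}\wedge H_{i_j}\prec {\mathfrak H}_{i_j}\wedge H_{i_j}={\cal O},
\end{equation}
so the subset satisfies the independence condition as well.

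For part (2), I would induct on $n$. The base case $n=2$ is immediate from Eq.~(\ref{12}): independence gives $H_1\wedge H_2={\cal O}$, so $\dim(H_1\vee H_2)=\dim(H_1)+\dim(H_2)$. For the inductive step, apply Eq.~(\ref{12}) to the pair $H_1\vee\cdots\vee H_{n-1}$ and $H_n$. By the defining equation (\ref{33}) of independence (with $i=n$), their meet is ${\cal O}$, so
\begin{equation}
\dim(H_1\vee\cdots\vee H_n)=\dim(H_1\vee\cdots\vee H_{n-1})+\dim(H_n).
\end{equation}
By part (1), $H_1,\dots,H_{n-1}$ are themselves independent, so the inductive hypothesis yields $\dim(H_1\vee\cdots\vee H_{n-1})=\dim(H_1)+\cdots+\dim(H_{n-1})$, completing the additive formula. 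The bound $\le d$ is then automatic since $H_1\vee\cdots\vee H_n$ is a subspace of $H(d)$.

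The only subtle point is ensuring the inductive step is legitimate, i.e.\ that the relation (\ref{33}) really does give $(H_1\vee\cdots\vee H_{n-1})\wedge H_n={\cal O}$ and not merely pairwise meets. This is exactly what statement (2) of Proposition 4.1 guarantees (it is the equivalence between the symmetric form ${\mathfrak H}_i\wedge H_i={\cal O}$ and the ``ordered'' form), so no extra work is needed beyond invoking Proposition 4.1. Overall, no step looks like a real obstacle; the main conceptual observation is that modularity through Eq.~(\ref{12}) is the right tool, since distributivity fails in ${\cal L}(d)$ but the modular dimension formula does not.
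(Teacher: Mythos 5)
Your proposal is correct and follows essentially the same route as the paper: part (1) via the observation that the subset's disjunction is contained in the full ${\mathfrak H}_i$, and part (2) by iterating the modular dimension identity Eq.~(\ref{12}) together with Eq.~(\ref{33}) and part (1), merely peeling off $H_n$ last where the paper peels off $H_1$ first. The ``subtle point'' you flag is not actually an issue, since Eq.~(\ref{33}) with $i=n$ \emph{is} literally $(H_1\vee\cdots\vee H_{n-1})\wedge H_n={\cal O}$, so no appeal to the ordered form is needed.
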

\begin{proof}
\mbox{}
\begin{itemize}
\item[(1)]
If Eq.(\ref{33}) holds for the set $\{H_1,...,H_n\}$, then analogous equation holds for any subset of it.
\item[(2)]
The proof is based on Eq.(\ref{12}).
From Eq.(\ref{33}) with $i=1$, it follows that
\begin{eqnarray}
\dim (H_1)+\dim( H_2\vee...\vee H_n)=\dim( H_1\vee...\vee H_n).
\end{eqnarray}
The $H_2,...,H_n$ are independent, and in the same way we prove that
\begin{eqnarray}
\dim (H_2)+\dim( H_3\vee...\vee H_n)=\dim( H_2\vee...\vee H_n).
\end{eqnarray}
These two equations give
\begin{eqnarray}
\dim (H_1)+\dim (H_2)+\dim( H_3\vee...\vee H_n)=\dim( H_1\vee...\vee H_n).
\end{eqnarray}
We continue in the same way and we prove the proposition.
\end{itemize}
\end{proof}

\subsection{Pairwise independence}

\begin{definition}
The subspaces $H_1,...,H_n$ are pairwise independent, if  $H_i\wedge H_j={\cal O}$ for all $i,j$.
\end{definition}
\begin{proposition}\label{pro12}
\mbox{}
\begin{itemize}
\item[(1)]
The non-distributivity of the lattice ${\cal L}(d)$, implies that
independence is stronger concept than pairwise independence.
\item[(2)]
For subspaces within a distributive sublattice of ${\cal L}(d)$, independence is equivalent to pairwise independence.
An example, is when the $H_1,...,H_n$ commute with each other.
\end{itemize}
\end{proposition}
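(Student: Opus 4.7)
The plan is to treat the two parts separately: part (2) is a routine transliteration of the set-theoretic argument in Proposition 2.3, while part (1) is essentially a request for a counterexample exhibiting precisely where that argument breaks down.

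For part (2), I would start from pairwise independence $H_i\wedge H_j={\cal O}$ for all $i\ne j$ and apply the distributive law inside the given sublattice to the criterion of Proposition \ref{pro11}(1):
\[
\Bigl(\bigvee_{j\ne i}H_j\Bigr)\wedge H_i \;=\; \bigvee_{j\ne i}\bigl(H_j\wedge H_i\bigr) \;=\; {\cal O},
\]
which is exactly Eq.(\ref{33}); the converse direction of the equivalence is immediate, since Eq.(\ref{33}) trivially forces each pairwise meet $H_i\wedge H_j$ to be contained in ${\cal O}$. For the commuting example I would invoke the standard structural fact that pairwise commuting subspaces generate a Boolean, hence distributive, sublattice of ${\cal L}(d)$: the joint spectral decomposition of the commuting projectors $\Pi(H_i)$ refines $H(d)$ into mutually orthogonal joint eigenspaces, and each $H_i$ is a join of some of them, so the generated sublattice is isomorphic to a powerset and the first part of the argument applies.

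For part (1), I would exhibit an explicit triple of pairwise independent subspaces that fails to be (fully) independent. Picking any two-dimensional subspace $K\subset H(d)$, which exists whenever $d\ge 2$, and three distinct one-dimensional subspaces $H_1,H_2,H_3$ inside $K$, the meet of any two distinct lines is ${\cal O}$, so pairwise independence holds. On the other hand $H_2\vee H_3=K$ already contains $H_1$, so $(H_2\vee H_3)\wedge H_1=H_1\ne{\cal O}$, violating Eq.(\ref{33}); equivalently, the dimension identity of Proposition \ref{pro23}(2) fails, since $\dim H_1+\dim H_2+\dim H_3=3>2=\dim(H_1\vee H_2\vee H_3)$. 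The obstruction is exactly the failure of distributivity on this triple, namely $H_1\wedge(H_2\vee H_3)=H_1$ while $(H_1\wedge H_2)\vee(H_1\wedge H_3)={\cal O}$, which is the standard witness of non-distributivity in ${\cal L}(d)$.

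The main conceptual, rather than technical, hurdle is the phrasing of part (1): the statement should not be read as merely producing a single counterexample, but as the observation that the calculation in part (2) breaks down at its first line exactly when distributivity fails on the triple under consideration, and that the three-lines-in-a-plane configuration embeds into ${\cal L}(d)$ for every $d\ge 2$. Thus the separation between pairwise independence and independence is a structural consequence of non-distributivity rather than an isolated accident.
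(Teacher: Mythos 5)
Your proposal is correct, and it splits from the paper in an instructive way. For part (2) you do essentially what the paper does: the paper states the general lattice inequality $(H_1\wedge H_i)\vee\dots\vee(H_n\wedge H_i)\prec{\mathfrak H}_i\wedge H_i$ and observes that distributivity turns it into an equality, which is the same computation as your line $\bigl(\bigvee_{j\ne i}H_j\bigr)\wedge H_i=\bigvee_{j\ne i}(H_j\wedge H_i)$; your added justification of the commuting example via the joint spectral decomposition is not in the paper but is standard and sound. For part (1) you genuinely diverge: the paper argues only from the one-sided inequality, saying that pairwise independence forces the left-hand side to be ${\cal O}$ ``but this does not imply that the right hand side is ${\cal O}$,'' and stops there. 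Strictly speaking that is not a proof of non-implication --- it only shows the particular deduction fails, not that no other deduction exists --- and the paper defers the actual separation to later examples (the position-plus-momentum bases and the coherent states, which are pairwise independent but not independent). Your three-lines-in-a-plane configuration supplies the missing counterexample explicitly, verifies it both via Eq.~(\ref{33}) and via the dimension identity of Proposition \ref{pro23}(2), and identifies it as the standard witness of non-distributivity, so your version of part (1) is the more complete argument. The only caveat is that the separation concerns $n\ge 3$ (for $n=2$ the two notions coincide by definition), which your $n=3$ example respects but which is worth stating when asserting that independence is ``strictly stronger.''
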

\begin{proof}
\mbox{}
\begin{itemize}
\item[(1)]
In every lattice\cite{la1,la2,la3,la4,la5}
\begin{eqnarray}\label{101}
&&(H_1\wedge H_i)\vee...\vee (H_{i-1}\wedge H_i)\vee (H_{i+1}\wedge H_i)\vee...\vee (H_n\wedge H_i)\nonumber\\
&&\prec [H_1\vee...\vee H_{i-1}\vee H_{i+1}\vee...\vee H_n]\wedge H_i;\;\;\;i=1,...,n.
\end{eqnarray}
Independence implies that the right hand side is ${\cal O}$, and then
the left hand side is ${\cal O}$. This leads to $H_i\wedge H_j={\cal O}$ for all $i,j$, i.e., pairwise independence.
Therefore independence implies independence of every pair of subspaces.

The converse is not true. Pairwise independence implies that the left hand side is ${\cal O}$, but this does not imply that
the right hand side is ${\cal O}$.
Therefore pairwise independence does not imply independence.
\item[(2)]
In distributive lattices Eq.(\ref{101}) becomes equality.  
Therefore within a distributive sublattice of ${\cal L}(d)$, independence is equivalent to pairwise independence.
\end{itemize}
\end{proof}

\subsection{Degree of independence}\label{sec12}
We have seen that pairwise independence is weaker concept than independence.
Between these two concepts, we introduce intermediate concepts which
we quantify with the degree of independence.

\begin{proposition}\label{pro67}
Let $\{H_1,...,H_n\}$ be $n\ge 3$ pairwise independent subspaces, and 
${\mathfrak H}_i$ the subspaces in Eq.(\ref{xcv}).
If $\ket{v_i}\in {\mathfrak H}_i^{\perp}\wedge H_i$, then
\begin{eqnarray}\label{zxc}
\lambda _1\ket{v_1}+...+\lambda _n \ket{v_n}=0\;\;\rightarrow\;\;\lambda _1=...=\lambda _n=0.
\end{eqnarray}
\end{proposition}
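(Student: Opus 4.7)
The plan is to observe that the hypothesis $\ket{v_i} \in {\mathfrak H}_i^\perp \wedge H_i$ forces the vectors $\ket{v_1},\ldots,\ket{v_n}$ to be mutually orthogonal, from which linear independence is immediate. So this is really a statement about orthogonality rather than an argument that uses the pairwise independence hypothesis in a substantive way (pairwise independence is invoked only to the extent that it may be needed to guarantee the ${\mathfrak H}_i^\perp \wedge H_i$ are non-trivial, so that non-zero representatives $\ket{v_i}$ exist).

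First I would fix $i\in\{1,\ldots,n\}$ and note that for every $j\ne i$ one has $H_j \prec {\mathfrak H}_i = \bigvee_{k\ne i} H_k$ by definition of ${\mathfrak H}_i$. Since $\ket{v_j}\in H_j$, this gives $\ket{v_j}\in {\mathfrak H}_i$. On the other hand $\ket{v_i}\in {\mathfrak H}_i^\perp \wedge H_i \prec {\mathfrak H}_i^\perp$, so $\ket{v_i}$ is orthogonal to every vector in ${\mathfrak H}_i$ and in particular $\langle v_i | v_j\rangle = 0$ for all $j\ne i$.

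Next, assume $\lambda_1\ket{v_1}+\cdots+\lambda_n\ket{v_n}=0$ and take the inner product of both sides with $\ket{v_i}$. By the mutual orthogonality just established, only the $j=i$ term survives and we obtain $\lambda_i \langle v_i | v_i\rangle = 0$. Provided $\ket{v_i}\ne 0$ (which is the implicit reading of the statement, since a zero vector would allow an arbitrary coefficient) this yields $\lambda_i = 0$. Since $i$ was arbitrary, all coefficients vanish, proving Eq.(\ref{zxc}).

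There is no real obstacle here: the only subtlety is conceptual, namely making clear that the subspace ${\mathfrak H}_i^\perp \wedge H_i$ is, by construction, the part of $H_i$ that is strictly orthogonal to the span of all the other $H_j$, so the linear-independence claim for vectors drawn from these exclusive parts reduces to the elementary fact that non-zero mutually orthogonal vectors are linearly independent. The role of pairwise independence in the hypothesis is only to ensure the setting is non-degenerate; the argument itself uses neither pairwise independence nor the modularity of ${\cal L}(d)$.
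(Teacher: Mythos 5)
Your proof is correct and is essentially the paper's own argument: both rest on the observation that $\ket{v_j}\in H_j\prec{\mathfrak H}_i$ for $j\ne i$ while $\ket{v_i}\in{\mathfrak H}_i^{\perp}\wedge H_i\prec{\mathfrak H}_i^{\perp}$, the paper concluding via the lattice identity ${\mathfrak H}_i\wedge({\mathfrak H}_i^{\perp}\wedge H_i)={\cal O}$ after isolating $-\lambda_i\ket{v_i}$, where you conclude via $\langle v_i|v_j\rangle=0$ and an inner product with $\ket{v_i}$. Your explicit remarks that $\ket{v_i}\ne 0$ must be read into the statement and that pairwise independence plays no substantive role apply equally to the paper's version.
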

\begin{proof}
We assume that
$\lambda _1\ket{v_1}+...+\lambda _n \ket{v_n}=0$.
Then
\begin{eqnarray}\label{64}
\lambda _1\ket{v_1}+...+\lambda _{i-1} \ket{v_{i-1}}+\lambda _{i+1} \ket{v_{i+1}}+...+\lambda _n\ket{v_n}=-\lambda _i\ket{v_i}
\end{eqnarray}
From this follows that $\lambda _i=0$ because the left hand side belongs to 
${\mathfrak H}_i$, the right hand side belongs to ${\mathfrak H}_i^{\perp}\wedge H_i$, and
\begin{eqnarray}
{\mathfrak H}_i\wedge ({\mathfrak H}_i^{\perp}\wedge H_i)={\cal O}.
\end{eqnarray}
This completes the proof.
\end{proof}

It is seen that the independence relation in Eq.(\ref{ind}) is valid here only for vectors in the subspace ${\mathfrak H}_i^{\perp}\wedge H_i$ of $H_i$.
This is the motivation for introducing in Eq.(\ref{AAA}), the quasi-probability distribution ${\widetilde R}(i)$.
The degree of independence compares the subspaces ${\mathfrak H}_i^{\perp}\wedge H_i$ and $H_i$
or equivalently the  ${\widetilde R}(i)$ with $R(i)$.
\begin{definition}
Let $\rho$ be a density matrix.
The matrix for the degree of independence ${\cal A}$, 
and the degree of independence $\eta (\rho)$, 
are given by
\begin{eqnarray}\label{40}
{\cal A}=\frac{1}{n}\sum _i[\Pi({H}_i)-\Pi({\mathfrak H}_i^{\perp}\wedge H_i)];\;\;\;\eta (\rho)=\frac{1}{n}\sum _i [R(i)-{\widetilde R}(i)]=
{\rm Tr}(\rho {\cal A}).
\end{eqnarray}
\end{definition}
Each $\Pi({H}_i)-\Pi({\mathfrak H}_i^{\perp}\wedge H_i)$ is a projector.
As a sum of projectors, ${\cal A}$ is a $d\times d$ positive semidefinite matrix.
The various $\Pi({H}_i)-\Pi({\mathfrak H}_i^{\perp}\wedge H_i)$ do not commute,
and the corresponding ${\rm Tr}\{\rho[ \Pi({H}_i)- \Pi({\mathfrak H}_i^{\perp}\wedge H_i)]\}$
can be measured using different ensembles described by the same density matrix $\rho$.

There are two extreme cases and many intermediate cases:
\begin{itemize}
\item
If ${\mathfrak H}_i^{\perp}\wedge H_i=H_i$ for all $i$, 
then ${\cal A}=0$ and $\eta (\rho)=0$. 
Therefore the $\{H_1,...,H_n\}$ are independent.
In this case proposition \ref{pro67} reduces to proposition \ref{pro11},
and the independence relation in Eq.(\ref{zxc}) holds for all $\ket{v_i}\in H_i$.
This is the strongest form of independence.
\item
If ${\mathfrak H}_i^{\perp}\wedge H_i={\cal O}$ for all $i$, 
\begin{eqnarray}
{\cal A}=\frac{1}{n}\sum _i\Pi({H}_i);\;\;\;
\eta (\rho)=\frac{1}{n}\sum R(i) 
\end{eqnarray}
and the 
$\{H_1,...,H_n\}$ are pairwise independent.
The independence relation in Eq.(\ref{zxc}) does not hold.
This is the weakest form of independence.
\item
Between these two extreme cases, the $\{H_1,...,H_n\}$ are partially independent.
For a given $\rho$, $\eta (\rho)$ takes values in the interval
\begin{eqnarray}\label{678}
0\le \eta (\rho)\le \frac{1}{n}\sum R(i) 
\end{eqnarray}
In this case the independence relation in Eq.(\ref{zxc}) does not hold for all vectors $\ket {v_i}\in H_i$
(it only holds when $\ket{v_i}$ is in the subspace ${\mathfrak H}_i^{\perp}\wedge H_i$ of $H_i$).
\end{itemize}

Proposition \ref{pro12} shows that the non-equivalence of independence and pairwise independence
(which leads to intermediate concepts) is related to the non-distributivity of the lattice ${\cal L}(d)$. In distributive sublattices of ${\cal L}(d)$, independence is equivalent to pairwise independence.

A summary of the various levels of independence is shown in table \ref{t1}.
\begin{table}
\caption{Various levels of independence (or disjointness) and various levels of totalness, for the subspaces
$\{H_1,...,H_n\}$ of $H(d)$.}
\centering
\begin{tabular}{|c|||c|}\hline
pairwise independence: $H_i\wedge H_j={\cal O}$&pairwise totalness: $H_i\vee H_j=H(d)$\\\hline
independence: $H_i\wedge {\mathfrak H}_i={\cal O}$&totalness: $H_i\vee {\mathfrak h}_i=H(d)$\\\hline
weak independence: $H_1 \wedge...\wedge H_n={\cal O}$&weak totalness: $H_1\vee ...\vee H_n=H(d)$\\\hline
matrix for degree of independence: & matrix for the degree of totalness:\\
${\cal A}=\frac{1}{n}\sum [\Pi({H}_i)-\Pi({\mathfrak H}_i^{\perp}\wedge H_i)]$
& 
${\cal T}=\frac{1}{n}\sum  [\Pi({\mathfrak h}_i\vee H_i)-\Pi({H}_i)]$
\\\hline
degree of independence: & degree of totalness:\\
$\eta (\rho)={\rm Tr}(\rho {\cal A})=\frac{1}{n}\sum [R(i)-{\widetilde R}(i)]$
& 
$\epsilon (\rho)={\rm Tr}(\rho {\cal T})=\frac{1}{n}\sum  [{\widehat R}(i)-R(i)]$
\\\hline
\end{tabular}\label{t1}
\end{table}

\begin{example}\label{exA1}
This is related to example \ref{exam1}.
We consider the set $\{H(X; \alpha)\}$ with $d$ subspaces of $H(d)$,  where $\alpha \in {\mathbb Z}(d)$.
These subspaces are pairwise independent.
Also
\begin{eqnarray}
{\mathfrak H}(X; \alpha)=\bigvee _{\beta \ne \alpha} H(X;\beta);\;\;\;
[{\mathfrak H}(X; \alpha)]^{\perp}=H(X; \alpha).
\end{eqnarray} 
and
\begin{eqnarray}
{\cal A}=0;\;\;\;\eta (\rho)=0.
\end{eqnarray}
Therefore these subspaces are independent.
This is the highest level of independence.
In this example ${\widetilde R}(\alpha)=R(\alpha)$ where $R(\alpha)$ has been given in Eq.(\ref{E32}).
\end{example}

\begin{example}\label{exA2}
This is related to example \ref{exam2}.
We consider the set $\{H(X; \alpha), H(P; \beta)\}$ with $2d$ subspaces of $H(d)$, labelled as in Eq.(\ref{zm}). 
These subspaces are pairwise independent ($H_i\wedge H_j={\cal O}$).

In this case 
\begin{eqnarray}
&&i=0,...,(d-1)\;\rightarrow\;{\mathfrak H}_i=\left (\bigvee _{\beta \ne \alpha} H(X;\beta)\right )\vee \left (\bigvee _{\gamma } H(P;\gamma)\right )=H(d);\;\;\;
{\mathfrak H}_i^{\perp}={\cal O}\nonumber\\
&&i=d,...,(2d-1)\;\rightarrow\;{\mathfrak H}_i=\left (\bigvee _{\beta } H(X;\beta)\right )\vee \left (\bigvee _{\gamma \ne \alpha} H(P;\gamma)\right )=H(d);\;\;\;
{\mathfrak H}_i^{\perp}={\cal O}.
\end{eqnarray} 
and
\begin{eqnarray}
{\cal A}=\frac{1}{2d}\sum \Pi({H}_i)=\frac{1}{d}{\bf 1};\;\;\;\eta (\rho)=\frac{1}{d}.
\end{eqnarray}
Therefore these subspaces are pairwise independent.
This is the lowest level of independence.
In this example, ${\widetilde R}(i)=0$.

\end{example}

\begin{example}\label{exA3}
This is related to example \ref{exam3}.
We consider the set $\{H(C; \alpha, \beta)\}$ with $d^2$ subspaces of $H(d)$.
These subspaces are pairwise independent.
We have explained earlier that any $d$ of the $d^2$ coherent states are linearly independent, and therefore
\begin{eqnarray}
&&{\mathfrak H}(C; \alpha _0, \beta _0)=\bigvee _{\alpha \ne \alpha _0, \beta \ne \beta _0} H(C; \alpha, \beta)=H(d);\;\;\;
[{\mathfrak H}(C; \alpha _0, \beta _0)]^{\perp}={\cal O}
\end{eqnarray} 
and
\begin{eqnarray}
{\cal A}=\frac{1}{d^2}\sum \Pi[H(C; \alpha, \beta)]=\frac{1}{d}{\bf 1};\;\;\;\eta (\rho)=\frac{1}{d}.
\end{eqnarray}
Therefore these subspaces are pairwise independent.
This is the lowest level of independence.
In this example, ${\widetilde R}(\alpha, \beta)=0$.
\end{example}

\begin{example}\label{exA4}
In $H(6)$ we consider the following two-dimensional subspaces:
\begin{eqnarray}
H_1=\left \{
\begin{pmatrix}
a\\
b\\
0\\
0\\
0\\
0\\
\end{pmatrix}\right \};\;\;\;
H_2=\left \{
\begin{pmatrix}
0\\
0\\
a\\
b\\
0\\
0\\
\end{pmatrix}\right \};\;\;\;
H_3=\left \{
\begin{pmatrix}
0\\
a\\
0\\
0\\
a\\
b\\
\end{pmatrix}\right \}.
\end{eqnarray}
Here we give a generic vector within these subspaces, which depends on two variables because the
subspaces are two-dimensional.
Then we calculate the subspaces ${\mathfrak H}_i$ (Eq.(\ref{xcv})):
\begin{eqnarray}
{\mathfrak H}_1=H_2\vee H_3=\left \{
\begin{pmatrix}
0\\
a\\
b\\
c\\
a\\
d\\
\end{pmatrix}\right \};\;\;\;
{\mathfrak H}_2=H_1\vee H_3=\left \{
\begin{pmatrix}
a\\
b\\
0\\
0\\
c\\
d\\
\end{pmatrix}\right \};\;\;\;
{\mathfrak H}_3=H_1\vee H_2=\left \{
\begin{pmatrix}
a\\
b\\
c\\
d\\
0\\
0\\
\end{pmatrix}\right \}.
\end{eqnarray}
They are four-dimensional subspaces, and the vectors depend on four variables.

We also consider their orthocomplements which are the subspaces:
\begin{eqnarray}
{\mathfrak H}_1^{\perp}=\left \{
\begin{pmatrix}
a\\
b\\
0\\
0\\
-b\\
0\\
\end{pmatrix}\right \};\;\;\;
{\mathfrak H}_2^{\perp}=\left \{
\begin{pmatrix}
0\\
0\\
a\\
b\\
0\\
0\\
\end{pmatrix}\right \};\;\;\;
{\mathfrak H}_3^{\perp}=\left \{
\begin{pmatrix}
0\\
0\\
0\\
0\\
a\\
b\\
\end{pmatrix}\right \}.
\end{eqnarray}
They are two-dimensional subspaces, and the vectors depend on two variables.
We then calculate the spaces that are used in proposition \ref{pro67}.
\begin{eqnarray}
{\mathfrak H}_1^{\perp}\wedge H_1=\left \{
\begin{pmatrix}
a\\
0\\
0\\
0\\
0\\
0\\
\end{pmatrix}\right \};\;\;\;
{\mathfrak H}_2^{\perp}\wedge H_2=\left \{
\begin{pmatrix}
0\\
0\\
a\\
b\\
0\\
0\\
\end{pmatrix}\right \};\;\;\;
{\mathfrak H}_3^{\perp}\wedge H_3=\left \{
\begin{pmatrix}
0\\
0\\
0\\
0\\
0\\
a\\
\end{pmatrix}\right \}.
\end{eqnarray}

The corresponding projectors are calculated as follows.
Let $a_1,...,a_k$ be $k$  independent vectors, and $M$ the $d\times k$ matrix $(a_1,...,a_k)$ which has as columns these vectors.
The projector to the space spanned by these $k$ vectors is
\begin{eqnarray}\label{proj}
\Pi=M(M^\dagger M)^{-1}M^\dagger.
\end{eqnarray}

Using this we calculated the matrix for the degree of independence:
\begin{eqnarray}
{\cal A}=\frac{1}{6}
\begin{pmatrix}
0&0&0&0&0&0\\
0&3&0&0&1&0\\
0&0&0&0&0&0\\
0&0&0&0&0&0\\
0&1&0&0&1&0\\
0&0&0&0&0&0\\
\end{pmatrix}.
\end{eqnarray}

We also calculated the distributions $R(i)$, ${\widetilde R}(i)$ and the degree of independence.
We consider an orthonormal basis $\ket {k}$ where $k\in {\mathbb Z}(6)$, and the density matrix
\begin{eqnarray}
\rho=\ket{s}\bra {s};\;\;\;\ket{s}=\frac{1}{\sqrt 7}(\ket {0}+\ket{1}+2\ket{3}+\ket{5}).
\end{eqnarray}
We found that 
\begin{eqnarray}
R(1)=0.285;\;\;\;R(2)=0.571;\;\;\;R(3)=0.214\nonumber\\
{\widetilde R}(1)=0.142;\;\;\;{\widetilde R}(2)=0.571;\;\;\;{\widetilde R}(3)=0.142 
\end{eqnarray}
Therefore
\begin{itemize}
\item
Measurement with $\Pi(H_1)$ will give `yes' with probability $R(1)=0.285$.
\item
Measurement with $\Pi(H_1\wedge {\mathfrak H}_1^{\perp})=\Pi(H_1\wedge H_2^{\perp}\wedge H_3^{\perp})$ 
will give `yes' with probability ${\widetilde R}(1)=0.142$. 
In this case a simultaneous measurement with $\Pi(H_2)$
(which commutes with $\Pi(H_1\wedge H_2^{\perp}\wedge H_3^{\perp})$ ) will give `no'.
\end{itemize}
The result `yes' in the first measurement, means that the system collapses to a state that belongs to $H_1$.
The result `yes' in the second measurement, means that the system collapses to a state that belongs to $H_1$ and also to $H_2^{\perp}$
and also $H_3^{\perp}$ (therefore it does not belong to $H_2$ and it does not belong to $H_3$).
Analogous comments can be made for the other $R(i)$ and ${\widetilde R}(i)$.

The degree of independence is $\eta (\rho)=0.071$.
Therefore we have an intermediate level of independence. 

In this example, the independence relation in Eq.(\ref{zxc}) is valid for:
\begin{eqnarray}
\ket{v_1}\in {\mathfrak H}_1^{\perp}\wedge H_1\prec H_1;\;\;\;
\ket{v_2}\in {\mathfrak H}_2^{\perp}\wedge H_2=H_2;\;\;\;\ket{v_3}\in {\mathfrak H}_3^{\perp}\wedge H_3\prec H_3.
\end{eqnarray}
It is seen that there are vectors in $\ket{v_1}\in H_1$ and $\ket {v_3}\in H_3$ for which the   implication in Eq.(\ref{zxc}) is not valid.
\end{example}

\subsection{The partial preorder of the various levels of independence}

\begin{definition}
In $H(d)$ (with fixed $d$), we consider various sets of subspaces $S_1=\{H_1,...,H_n\}$, $S_2=\{H_1^{\prime},...,H_m^{\prime}\}$, etc,
with matrices for the degree of independence ${\cal A}_1$, ${\cal A}_2$, etc.
The set of subspaces $S_1$ is more independent than $S_2$
(we denote this as $S_1\sqsupset S_2$), if ${\cal A}_1-{\cal A}_2$ is a negative semidefinite matrix (denoted as ${\cal A}_1-{\cal A}_2\le 0$).
In this case $\eta _1(\rho)\le \eta _2(\rho)$ for all density matrices $\rho$.
\end{definition}
\begin{proposition}
$\sqsupset$ is a partial preorder.
\end{proposition}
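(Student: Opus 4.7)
The plan is to verify the two defining properties of a preorder—reflexivity and transitivity—and then comment on why it is merely partial (that is, why comparability can fail). Both properties will reduce to elementary facts about the cone of negative semidefinite Hermitian matrices.

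First I would dispatch reflexivity. For any set $S$ with matrix for the degree of independence ${\cal A}$, we have ${\cal A}-{\cal A}=0$, and the zero matrix is trivially negative semidefinite, so $S\sqsupset S$.

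Next I would prove transitivity. Suppose $S_1\sqsupset S_2$ and $S_2\sqsupset S_3$, so that ${\cal A}_1-{\cal A}_2\le 0$ and ${\cal A}_2-{\cal A}_3\le 0$. Since the sum of two negative semidefinite Hermitian matrices is again negative semidefinite (for any state $\ket{\psi}$, $\bra{\psi}({\cal A}_1-{\cal A}_3)\ket{\psi}=\bra{\psi}({\cal A}_1-{\cal A}_2)\ket{\psi}+\bra{\psi}({\cal A}_2-{\cal A}_3)\ket{\psi}\le 0$), we obtain ${\cal A}_1-{\cal A}_3\le 0$, hence $S_1\sqsupset S_3$. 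Note that the inequality $\eta_1(\rho)\le \eta_2(\rho)$ for all density matrices $\rho$ follows immediately from ${\cal A}_1-{\cal A}_2\le 0$ via $\eta_i(\rho)={\rm Tr}(\rho{\cal A}_i)$, which gives a nice internal consistency check.

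Finally I would explain the word \emph{partial}: antisymmetry fails (two distinct sets can have ${\cal A}_1={\cal A}_2$, as already seen in the examples where very different families of subspaces produce ${\cal A}=\tfrac{1}{d}{\bf 1}$), and moreover the order is not total because for generic ${\cal A}_1,{\cal A}_2$ the difference ${\cal A}_1-{\cal A}_2$ is indefinite—neither $S_1\sqsupset S_2$ nor $S_2\sqsupset S_1$ holds. There is no real obstacle here; the one point that needs a line of justification is the closure of the negative semidefinite cone under addition, which is the content of the transitivity step.
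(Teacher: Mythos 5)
Your proposal is correct and follows essentially the same route as the paper: reflexivity from ${\cal A}-{\cal A}=0$, transitivity from closure of the negative semidefinite cone under addition, and the observation that antisymmetry fails because ${\cal A}_1={\cal A}_2$ does not force $S_1=S_2$. The extra remarks on non-totality and the consistency check via $\eta_i(\rho)={\rm Tr}(\rho{\cal A}_i)$ are harmless additions but do not change the argument.
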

\begin{proof}
We consider the following properties:
\begin{itemize}
\item
Reflexivity: $S_1\sqsupset S_1$. This holds, because ${\cal A}_1-{\cal A}_1=0$ is a negative semidefinite matrix.
\item
Transitivity: if $S_1\sqsupset S_2$ and $S_2\sqsupset S_3$ then $S_1\sqsupset S_3$.
This holds because if ${\cal A}_1-{\cal A}_2$ and ${\cal A}_2-{\cal A}_3$ are 
negative semidefinite matrices, then ${\cal A}_1-{\cal A}_3$ is a negative semidefinite matrix.
\item  
Antisymmetry: if $S_1\sqsupset S_2$ and $S_2\sqsupset S_1$ then $S_1=S_2$.
This does not hold. If ${\cal A}_1-{\cal A}_2$ and ${\cal A}_2-{\cal A}_1$ are
negative semidefinite matrices, then ${\cal A}_1={\cal A}_2$, but this does not imply $S_1=S_2$.
\end{itemize}
Since the first two properties hold, but not the last one, the $\sqsupset$ is a partial preorder.
\end{proof}
\subsection{Weakly independent subspaces}\label{sec49}

We introduce the concept of weakly independent subspaces, which is dual through orthocomplementation, to a weakly total set of subspaces introduced later.
\begin{definition}\label{def21}
The subspaces $H_1,...,H_n$ of $H(d)$ are weakly independent, if 
$H_1\wedge ...\wedge H_n={\cal O}$.
\end{definition}
\begin{proposition}
An independent set of subspaces, is also weakly independent set of subspaces. The converse is true when $n=2$, but it is not true when $n\ge 3$.
\end{proposition}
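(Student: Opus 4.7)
The plan is to prove the implication from independence to weak independence by a one-line containment argument, verify the $n=2$ equivalence by inspection of the definitions, and then exhibit an explicit counterexample for each $n\ge 3$.

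First, I would show that independence implies weak independence. By definition of meet, the subspace $H_1\wedge\cdots\wedge H_n$ is a subspace of every $H_j$. Fix any $i$; then for any $j\ne i$ we have $H_1\wedge\cdots\wedge H_n\prec H_j\prec \mathfrak{H}_i$, using that $H_j$ is one of the joined subspaces defining $\mathfrak{H}_i$ in Eq.(\ref{xcv}). Combining this with $H_1\wedge\cdots\wedge H_n\prec H_i$ gives $H_1\wedge\cdots\wedge H_n\prec H_i\wedge\mathfrak{H}_i$, which is $\mathcal{O}$ by the independence hypothesis (Eq.(\ref{33})). Hence $H_1\wedge\cdots\wedge H_n=\mathcal{O}$, so the family is weakly independent.

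For $n=2$, both definitions reduce to the single relation $H_1\wedge H_2=\mathcal{O}$: weak independence is literally this, and independence says $H_1\wedge\mathfrak{H}_1=H_1\wedge H_2=\mathcal{O}$ and $H_2\wedge\mathfrak{H}_2=H_2\wedge H_1=\mathcal{O}$, which is the same condition. So the two concepts coincide when $n=2$.

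The main content is the counterexample for $n\ge 3$. I would take three distinct one-dimensional subspaces $H_1,H_2,H_3$ of a two-dimensional subspace of $H(d)$ (any $d\ge 2$). Distinct one-dimensional subspaces have trivial pairwise meet, so a fortiori $H_1\wedge H_2\wedge H_3=\mathcal{O}$, establishing weak independence. However, $H_1\vee H_2$ equals the ambient two-dimensional subspace containing $H_3$, so $\mathfrak{H}_3\wedge H_3=(H_1\vee H_2)\wedge H_3=H_3\ne\mathcal{O}$, violating Eq.(\ref{33}). For $n>3$, I would keep these three subspaces and pad with any additional subspaces $H_4,\dots,H_n$ whose wedge with $H_1\wedge H_2\wedge H_3$ remains $\mathcal{O}$ (for instance, taking them to lie in the same two-dimensional subspace, or to be one-dimensional and distinct from $H_1,H_2,H_3$); weak independence is preserved since the overall meet stays $\mathcal{O}$, while independence still fails because $H_3\prec H_1\vee H_2\prec\mathfrak{H}_3$ regardless of the additional joined subspaces.

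I do not expect any substantive obstacle. The only mild subtlety is ensuring, when stating the $n>3$ case, that the padding subspaces do not accidentally rescue independence; this is automatic since adding more subspaces to $\mathfrak{H}_3$ can only enlarge it, so the failure $H_3\prec\mathfrak{H}_3$ propagates.
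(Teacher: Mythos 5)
Your proof is correct. For the forward implication you argue exactly as the paper does in substance: the full meet $H_1\wedge\cdots\wedge H_n$ lies below each $H_j$ with $j\ne i$, hence below ${\mathfrak H}_i$, and also below $H_i$, so it is contained in ${\mathfrak H}_i\wedge H_i={\cal O}$; the paper phrases this as the containment ${\mathfrak h}_i\prec{\mathfrak H}_i$ (its Eq.(\ref{Z2})) wedged with $H_i$, which is the same one-line argument. Your $n=2$ check also matches the paper's observation that the two notions literally coincide there. Where you genuinely diverge is the failure of the converse for $n\ge 3$: the paper merely remarks that the containment in Eq.(\ref{Z2}) is a strict inequality when $n\ge 3$ and infers from this that the converse fails, which is gestural rather than a proof --- strictness of a general lattice inequality does not by itself exhibit a weakly independent family that is not independent. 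You instead give an explicit counterexample (three distinct one-dimensional subspaces of a common two-dimensional subspace, padded for $n>3$), together with the correct observation that padding can only enlarge ${\mathfrak H}_3$ and so cannot restore independence. This makes your treatment of the converse more complete than the paper's; the cost is only the small bookkeeping of checking that the padding subspaces keep the family within the paper's standing assumptions (nonzero proper subspaces), which your construction satisfies for any $d\ge 2$.
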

\begin{proof}
The relation
\begin{eqnarray}\label{Z2}
[H_1\wedge...\wedge H_{i-1}\wedge H_{i+1}\wedge...\wedge H_n] \prec [H_1\vee...\vee H_{i-1}\vee H_{i+1}\vee...\vee H_n].
\end{eqnarray}
proves that
\begin{eqnarray}
H_1\wedge...\wedge H_n\prec [H_1\vee...\vee H_{i-1}\vee H_{i+1}\vee...\vee H_n]\wedge H_i.
\end{eqnarray}
For independent subspaces, the right hand side is zero, and therefore the left hand side is zero.
In Eq.(\ref{Z2}), we have equality when $n=2$, and inequality when $n\ge 3$.
This means that the converse is true when $n=2$, but it is not true when $n\ge 3$.
\end{proof}

\section{Levels of totalness}

In this section we introduce the concept of totalness, which is dual (through orthocomplementation) to independence.

\subsection{Total sets of subspaces}

\begin{proposition}\label{def12}
The subspaces $H_1,...,H_n$ of $H(d)$ are a total set, 
if one of the following statements,
which are equivalent to each other, holds:
\begin{itemize}
\item[(1)]
for all $i=1,...,n$:
\begin{eqnarray}\label{133}
{\mathfrak h}_i\vee H_i=H(d),
\end{eqnarray}
where ${\mathfrak h}_i$ has been defined in Eq.(\ref{xcv34}).
Every vector $\ket{v}\in H(d)$, can be written (not uniquely) as a superposition
of a vector in $H_i$, and a vector which is common in all other subspaces $H_j$ with $j\ne i$:
\begin{eqnarray}\label{29A}
\ket{v}=\lambda _i\ket {a_i}+\mu _i\ket {b_i};\;\;\;\ket{a_i}\in H_i;\;\;\;\ket{b_i}\in {\mathfrak h}_i.
\end{eqnarray}
\item[(2)]
For all $i=1,...,n$,
\begin{eqnarray}
[H_1\wedge...\wedge H_{i-1}]\vee H_i=H(d).
\end{eqnarray}

\item[(3)]
 for all $i=1,...,n$, there is no vector in $H(d)$, which is perpendicular to both
${\mathfrak h}_i$ and $H_i$ . 
In other words, if 
\begin{eqnarray}\label{14}
\langle u\ket{v}=0;\;\;\;\ket{v}=\lambda _i\ket {a_i}+\mu _i\ket {b_i};\;\;\;,
\end{eqnarray}
for all $\ket{a_i}\in H_i$, and all
$\ket{b_i}\in {\mathfrak h}_i$,
then $\ket {u}$ is the zero vector.
\end{itemize}
\end{proposition}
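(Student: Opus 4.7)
The plan is to reduce Proposition \ref{def12} to the already-proved Proposition \ref{pro11} by using the duality between totalness and independence implemented by orthocomplementation. Each of the three statements here should translate, via the de Morgan rules in Eq.(\ref{3}), into the corresponding statement of Proposition \ref{pro11} applied to the orthocomplemented family $\{H_1^\perp,\ldots,H_n^\perp\}$.

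First I would dispose of (1) $\Leftrightarrow$ (3), which is essentially a reformulation. A vector $\ket{u}$ is orthogonal to every $\ket{v}=\lambda_i\ket{a_i}+\mu_i\ket{b_i}$ with $\ket{a_i}\in H_i$ and $\ket{b_i}\in {\mathfrak h}_i$ if and only if $\ket{u}$ is orthogonal to both $H_i$ and ${\mathfrak h}_i$, that is, $\ket{u}\in (H_i\vee {\mathfrak h}_i)^\perp$. Thus (3) is exactly $(H_i\vee {\mathfrak h}_i)^\perp={\cal O}$ for all $i$, which, by $(K^\perp)^\perp=K$, is equivalent to $H_i\vee {\mathfrak h}_i=H(d)$, i.e.\ (1). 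No lattice-theoretic input is needed here.

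Next, for (1) $\Leftrightarrow$ (2), I would orthocomplement both relations and apply Proposition \ref{pro11} to the subspaces $K_j:=H_j^\perp$. By Eq.(\ref{3}), $H_i\vee {\mathfrak h}_i=H(d)$ becomes $H_i^\perp\wedge{\mathfrak h}_i^\perp={\cal O}$, and using ${\mathfrak h}_i^\perp=\bigvee_{j\ne i}H_j^\perp$ this reads
\begin{eqnarray}
K_i\wedge \bigvee_{j\ne i}K_j={\cal O},
\end{eqnarray}
which is statement (1) of Proposition \ref{pro11} for the family $\{K_1,\ldots,K_n\}$. Similarly $[H_1\wedge\ldots\wedge H_{i-1}]\vee H_i=H(d)$ becomes $[K_1\vee\ldots\vee K_{i-1}]\wedge K_i={\cal O}$, which is statement (2) of Proposition \ref{pro11} for the same family. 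Since those two are equivalent by Proposition \ref{pro11}, so are (1) and (2) here.

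The main obstacle is not any deep step but bookkeeping: one must be certain that the de Morgan rules in Eq.(\ref{3}) commute with the finite conjunctions and disjunctions indexed over $j\ne i$ in the definitions of ${\mathfrak h}_i$ and ${\mathfrak H}_i^\perp$ (a routine induction on $n$ using the two-variable de Morgan rules), and that the inductive chain in the proof of Proposition \ref{pro11} — which relies on modularity Eq.(\ref{29}) — transfers under the order-reversing involution $\perp$ without modification. Since modularity is self-dual (the identity Eq.(\ref{29}) is preserved under the substitutions $H\to H^\perp$, $\wedge\leftrightarrow\vee$), this transfer is automatic and no new lattice identity is required.
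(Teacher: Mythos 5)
Your proposal is correct and follows essentially the same route as the paper: the equivalence of (1) and (2) is obtained by orthocomplementing the first two statements of Proposition \ref{pro11} (applied to the family $\{H_j^{\perp}\}$), and the equivalence of (1) and (3) is the direct observation that $({\mathfrak h}_i\vee H_i)^{\perp}={\mathfrak h}_i^{\perp}\wedge H_i^{\perp}={\cal O}$. Your closing worry about transferring the inductive chain of Proposition \ref{pro11} under $\perp$ is unnecessary, since one simply applies the already-proved statement of that proposition to the orthocomplemented family rather than dualizing its proof.
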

\begin{proof}
\begin{itemize}
\item[(1)]
Orthocomplementation of Eqs(\ref{33}), (\ref{33A}), proves the equivalence of the first two parts. 
\item[(2)]
We assume that the first part of the proposition holds, and prove that the third part also holds.
Taking the orthocomplement of both sides in Eq.(\ref{133}), we get 
${\mathfrak h}_i^{\perp}\wedge H_i^{\perp}={\cal O}$.
This shows that there is no vector perpendicular to both
${\mathfrak h}_i$ and $H_i$ . 

Conversely, if the third part of the proposition holds, then 
${\mathfrak h}_i^{\perp}\wedge H_i^{\perp}={\cal O}$,
and by taking the orthocomplement we get Eq.(\ref{133}).
\end{itemize}
\end{proof}
\begin{remark}
Eq.(\ref{133}) shows that in a total set of subspaces, there is a strong overlap between the subspaces:
\begin{eqnarray}
{\mathfrak h}_i=H_1\wedge...\wedge H_{i-1}\wedge H_{i+1}\wedge...\wedge H_n\ne {\cal O};\;\;\;i=1,...,n.
\end{eqnarray}
Any $d-1$ of the subspaces have vectors in common.
\end{remark}
\begin{proposition}\label{pro39}
The $H_1,...,H_n$ are a total set of subspaces, if and only if
the $H_1^{\perp},...,H_n^{\perp}$ are  independent subspaces of $H(d)$. 
\end{proposition}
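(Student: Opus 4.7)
The plan is to prove this duality by directly translating the independence condition for $\{H_1^{\perp},\dots,H_n^{\perp}\}$, as given by Proposition \ref{pro11}(1), into the totalness condition for $\{H_1,\dots,H_n\}$, as given by Proposition \ref{def12}(1), via orthocomplementation and de Morgan's laws in Eq.(\ref{3}).

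First, I would apply the characterization of independence to the set $\{H_1^{\perp},\dots,H_n^{\perp}\}$: these subspaces are independent if and only if, for all $i=1,\dots,n$,
\begin{eqnarray}
\left(\bigvee_{j\ne i} H_j^{\perp}\right) \wedge H_i^{\perp} = {\cal O}.
\end{eqnarray}
Next, I would recognize that by the second line of Eq.(\ref{xcv34}), the disjunction appearing here is precisely $\mathfrak{h}_i^{\perp}$, so the independence condition is $\mathfrak{h}_i^{\perp} \wedge H_i^{\perp} = {\cal O}$ for all $i$.

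Then, taking the orthocomplement of both sides and using de Morgan's rule $(H_1 \wedge H_2)^{\perp} = H_1^{\perp} \vee H_2^{\perp}$ from Eq.(\ref{3}), together with $(H_i^{\perp})^{\perp} = H_i$, I would convert this to
\begin{eqnarray}
\mathfrak{h}_i \vee H_i = {\cal O}^{\perp} = H(d)
\end{eqnarray}
for all $i=1,\dots,n$, which is exactly the totalness condition (\ref{133}) from Proposition \ref{def12}(1). Since orthocomplementation is an involution, each step is reversible, giving the ``if and only if''.

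There is no real obstacle here: the result is the clean dual of the independence--totalness definitions, and the work is entirely bookkeeping of de Morgan's laws on the finite meets and joins defining $\mathfrak{H}_i$ and $\mathfrak{h}_i$. The only point worth pausing on is to verify that the disjunction $\bigvee_{j\ne i} H_j^{\perp}$ arising when one writes ${\mathfrak H}_i$ for the system $\{H_j^{\perp}\}$ matches $\mathfrak{h}_i^{\perp}$ for the original system $\{H_j\}$, which is immediate from the defining equations (\ref{xcv})--(\ref{xcv34}).
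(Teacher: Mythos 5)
Your proposal is correct and follows essentially the same route as the paper's own proof: both express the independence of $\{H_1^{\perp},\dots,H_n^{\perp}\}$ as ${\mathfrak h}_i^{\perp}\wedge H_i^{\perp}={\cal O}$ and orthocomplement via de Morgan's rules to obtain ${\mathfrak h}_i\vee H_i=H(d)$, with reversibility giving the equivalence. You merely spell out the identification $\bigvee_{j\ne i}H_j^{\perp}={\mathfrak h}_i^{\perp}$ and the involution step more explicitly than the paper does.
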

\begin{proof}
Independence is defined in Eq.(\ref{33}), which we express in terms of the $H_1^{\perp},...,H_n^{\perp}$ as:
\begin{eqnarray}
{\mathfrak h}_i^{\perp}\wedge H_i^{\perp}={\cal O}.
\end{eqnarray}
From this follows that
\begin{eqnarray}
{\mathfrak h}_i\vee H_i=H(d).
\end{eqnarray}
Therefore, according to the definition \ref{def12}, the $H_1,...,H_n$ are a total set of subspaces. 
This argument also holds in the opposite direction, and proves that the converse is true.
\end{proof}
It is seen that orthocomplementation (the logical NOT operation) converts independence into totalness.
The following proposition is dual to proposition \ref{pro23}.
\begin{proposition}\label{pro56}
\mbox{}
\begin{itemize}
\item[(1)]
If the $\{H_1,...,H_n\}$ is a total set of subspaces, then
any subset (with cardinality at least $2$) is also a total set of subspaces.
\item[(2)]
If $\{H_1,...,H_n\}$ is a total set, then
\begin{eqnarray}
\dim (H_1)+...+\dim( H_n)=\dim( H_1\wedge...\wedge H_n)+(n-1)d
\end{eqnarray}
\end{itemize}
\begin{proof}
\mbox{}
\begin{itemize}

\item[(1)]
If the $S_1=\{H_1,...,H_n\}$ is a total set of subspaces, then the
$S_2=\{H_1^{\perp},...,H_n^{\perp}\}$ is a set of independent subspaces.
According to proposition \ref{pro23}, any subset of $S_2$ is a set of independent subspaces,
and consequently the corresponding subset of $S_1$, that contains the orthocomplements of these subspaces, 
is a total set of subspaces.

For an alternative direct proof we consider, as an example, the subset $\{H_2,...,H_n\}$ of $\{H_1,...,H_n\}$.
If Eq.(\ref{133}) holds, the fact that
\begin{eqnarray}
{\mathfrak h}_i\prec H_2\wedge ...\wedge H_{i-1}\wedge H_{i+1}\wedge...\wedge H_n,
\end{eqnarray}
implies that
\begin{eqnarray}
H(d)={\mathfrak h}_i\vee H_i\prec (H_2\wedge...\wedge H_{i-1}\wedge H_{i+1}\wedge...\wedge H_n)\vee H_i.
\end{eqnarray}
Therefore $(H_2\wedge...\wedge H_{i-1}\wedge H_{i+1}\wedge...\wedge H_n)\vee H_i=H(d)$.
This proves that this particular subset is also a total set of subspaces.
The proof for any other subset is analogous to this.
\item[(2)]
This follows from Eq.(\ref{111}) and the fact that $\dim(H_i)+\dim(H_i^{\perp})=d$.
\end{itemize}
\end{proof}
\end{proposition}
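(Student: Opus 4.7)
The natural plan is to exploit the duality between independence and totalness established in Proposition \ref{pro39}, which says that $\{H_1,\dots,H_n\}$ is a total set if and only if $\{H_1^{\perp},\dots,H_n^{\perp}\}$ is independent. Since the two parts of the desired statement are precisely the orthocomplements of the two parts of Proposition \ref{pro23}, the proof should be a translation exercise rather than a fresh combinatorial argument.

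For part (1), I would argue as follows. Suppose $\{H_1,\dots,H_n\}$ is total. By Proposition \ref{pro39}, $\{H_1^{\perp},\dots,H_n^{\perp}\}$ is independent. Proposition \ref{pro23}(1) then guarantees that for any subset $I\subseteq\{1,\dots,n\}$ with $|I|\ge 2$, the family $\{H_i^{\perp}\}_{i\in I}$ is still independent. Applying Proposition \ref{pro39} in the reverse direction to this subfamily yields that $\{H_i\}_{i\in I}$ is a total set, as claimed. For readers who prefer to avoid invoking the dual, a direct proof is also available: using the monotonicity $\bigwedge_{j\ne i, j\in\{1,\dots,n\}}H_j \prec \bigwedge_{j\ne i, j\in I}H_j$, the identity ${\mathfrak h}_i\vee H_i = H(d)$ upgrades to $\bigl(\bigwedge_{j\ne i, j\in I}H_j\bigr)\vee H_i=H(d)$, which is exactly totalness of the subset.

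For part (2), I would again translate through orthocomplementation. By Proposition \ref{pro39}, $\{H_1^{\perp},\dots,H_n^{\perp}\}$ is independent, so Proposition \ref{pro23}(2) gives
\begin{eqnarray}
\dim(H_1^{\perp})+\dots+\dim(H_n^{\perp})=\dim(H_1^{\perp}\vee\dots\vee H_n^{\perp}).
\end{eqnarray}
Now invoke $\dim(H_i)+\dim(H_i^{\perp})=d$ from Eq.(\ref{3}) on each term on the left, and the de Morgan relation $(H_1\wedge\dots\wedge H_n)^{\perp}=H_1^{\perp}\vee\dots\vee H_n^{\perp}$ together with the same dimension formula on the right. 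This rewrites the identity as
\begin{eqnarray}
nd-\sum_i\dim(H_i)=d-\dim(H_1\wedge\dots\wedge H_n),
\end{eqnarray}
which rearranges to the required $\sum_i\dim(H_i)=\dim(H_1\wedge\dots\wedge H_n)+(n-1)d$.

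I do not expect a genuine obstacle: both parts reduce mechanically to the already-proved Proposition \ref{pro23} once Proposition \ref{pro39} and de Morgan are in hand. The only mildly delicate point is checking that applying Proposition \ref{pro39} to the sub-family $\{H_i^{\perp}\}_{i\in I}$ is legitimate, i.e., that the notion of totalness there refers to the same index set $I$ on both sides; this is clear from the definition since ${\mathfrak h}_i$ is defined relative to whichever family one is currently considering. Everything else is bookkeeping with the orthocomplement.
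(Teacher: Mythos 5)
Your proposal is correct and follows essentially the same route as the paper: part (1) is proved by dualizing through Proposition \ref{pro39} to reduce to Proposition \ref{pro23}(1) (with the same direct monotonicity argument offered as an alternative), and part (2) is the paper's appeal to Eq.(\ref{111}) and $\dim(H_i)+\dim(H_i^{\perp})=d$, which you have merely written out in full with the de Morgan step made explicit.
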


\subsection{Pairwise total subspaces}

\begin{definition}
The subspaces $H_1,...,H_n$ are pairwise total, if  $H_i\vee H_j=H(d)$ for all $i,j$.
\end{definition}

The following proposition is dual to proposition \ref{pro12}.

\begin{proposition}\label{pro13}
\mbox{}
\begin{itemize}
\item[(1)]
The non-distributivity of the lattice ${\cal L}(d)$, implies that
totalness is stronger concept than pairwise totalness.
\item[(2)]
For subspaces in a distributive sublattice of ${\cal L}(d)$, totalness is equivalent to pairwise totalness.
\end{itemize}
\end{proposition}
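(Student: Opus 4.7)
The natural strategy is to derive Proposition \ref{pro13} from Proposition \ref{pro12} by orthocomplementation, using Proposition \ref{pro39}. By that proposition, $\{H_1,\dots,H_n\}$ is a total set iff $\{H_1^{\perp},\dots,H_n^{\perp}\}$ is independent, and by de Morgan (Eq.~(\ref{3})), the pairwise-totalness condition $H_i\vee H_j=H(d)$ is equivalent to the pairwise-independence condition $H_i^{\perp}\wedge H_j^{\perp}={\cal O}$ of the orthocomplements. Hence totalness vs.\ pairwise totalness of $\{H_i\}$ is precisely independence vs.\ pairwise independence of $\{H_i^{\perp}\}$, and both parts transport at once from Proposition \ref{pro12}, provided one checks that orthocomplementation carries distributive sublattices to distributive sublattices (which holds because it is a lattice anti-isomorphism satisfying de Morgan).

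If instead one prefers a direct argument in the style of the proof of Proposition \ref{pro12}, the key observation is the inequality dual to Eq.~(\ref{101}), namely
$$
{\mathfrak h}_i\vee H_i \;\prec\; (H_1\vee H_i)\wedge\ldots\wedge(H_{i-1}\vee H_i)\wedge(H_{i+1}\vee H_i)\wedge\ldots\wedge(H_n\vee H_i),
$$
valid in every lattice because ${\mathfrak h}_i\prec H_j$ for each $j\neq i$. Totalness makes the left-hand side equal to $H(d)$, forcing the right-hand side to equal $H(d)$; since a meet of subspaces can equal $H(d)$ only when every factor does, pairwise totalness follows. In a distributive sublattice of ${\cal L}(d)$ the above inclusion becomes equality, and the reasoning runs in reverse, yielding part~(2).

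To complete part~(1) we must exhibit an explicit witness that the inclusion can be strict. Three distinct one-dimensional subspaces $H_1,H_2,H_3$ of $H(2)$ do the job: any two of them span $H(2)$, so the collection is pairwise total, but ${\mathfrak h}_1=H_2\wedge H_3={\cal O}$, hence ${\mathfrak h}_1\vee H_1=H_1\neq H(2)$, so the set is not total. The main obstacle in writing this up is essentially bookkeeping: keeping the duality correspondence straight and verifying that distributivity of a sublattice is preserved under orthocomplementation, since all of the real lattice-theoretic content has already been assembled in Propositions \ref{pro12} and \ref{pro39}.
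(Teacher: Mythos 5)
Your second, ``direct'' argument is precisely the paper's own proof: the key inequality you write down is the paper's Eq.~(\ref{101A}), totalness forces the smaller side ${\mathfrak h}_i\vee H_i$ to be $H(d)$ and hence each factor $H_j\vee H_i$ to be $H(d)$, and distributivity turns the inclusion into an equality for part~(2). So the core of your proposal coincides with the paper. Your first route --- transporting Proposition~\ref{pro12} through the duality of Proposition~\ref{pro39} together with de Morgan for pairs --- is a genuinely different and equally valid derivation; the paper uses exactly this dualization trick for the analogous Proposition~\ref{pro56}(1) but chooses the direct lattice inequality here, presumably to keep the statement self-contained and to make visible where distributivity enters. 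The one place where you add something the paper does not have is the explicit witness for the strictness of the inclusion: the paper's part~(1) only observes that the inequality~(\ref{101A}) does not force the reverse implication, without exhibiting a pairwise total set that fails to be total, whereas your three distinct lines in $H(2)$ (pairwise total since any two distinct one-dimensional subspaces span $H(2)$, yet ${\mathfrak h}_1=H_2\wedge H_3={\cal O}$ so ${\mathfrak h}_1\vee H_1=H_1\ne H(2)$) actually completes the claim that totalness is \emph{strictly} stronger. Both routes are correct; the only point you flag as needing care --- that orthocomplementation carries distributive sublattices to distributive sublattices --- is indeed immediate from it being a lattice anti-isomorphism, via Eq.~(\ref{3}).
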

\begin{proof}
\begin{itemize}
\item[(1)]
In every lattice
\begin{eqnarray}\label{101A}
(H_1\vee H_i)\wedge...\wedge (H_{i-1}\vee H_i)\wedge (H_{i+1}\vee H_i)\wedge...\wedge (H_n\vee H_i)\succ {\mathfrak h}_i\vee H_i;\;\;\;i=1,...,n.
\end{eqnarray}
Totalness implies that the right hand side is $H(d)$, and then
the left hand side is $H(d)$. This leads to $H_i\vee H_j=H(d)$ for all $i,j$, i.e., pairwise totalness.
Therefore totalness implies pairwise totalness.

The converse is not true. Pairwise totalness implies that the left hand side is $H(d)$, but this does not imply that
the right hand side is $H(d)$.
Therefore pairwise totalness does not imply totalness.
\item[(2)]
Within a distributive sublattice of ${\cal L}(d)$, Eq.(\ref{101A}) is equality, and  totalness is equivalent to pairwise 
totalness.
\end{itemize}
\end{proof}

\begin{proposition}\label{pro391}
The $H_1,...,H_n$ are a pairwise total set of subspaces, if and only if
the $H_1^{\perp},...,H_n^{\perp}$ are  pairwise independent subspaces of $H(d)$. 
\end{proposition}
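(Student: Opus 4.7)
The plan is to mirror the proof of Proposition \ref{pro39} (the analogous duality for the ``strong'' versions), exploiting the De Morgan relations for orthocomplementation that are recorded in Eq.(\ref{3}). The statement is a purely formal consequence of those identities together with the involutivity $(H^{\perp})^{\perp}=H$, so no extra lattice machinery (modularity, dimension formula, etc.) will be needed.

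First, I would unfold the definition: pairwise totalness of $\{H_1,\dots,H_n\}$ means $H_i\vee H_j=H(d)$ for all pairs $i,j$. Applying orthocomplementation to both sides and using $H(d)^{\perp}={\cal O}$ together with the identity $(H_i\vee H_j)^{\perp}=H_i^{\perp}\wedge H_j^{\perp}$ from Eq.(\ref{3}), this becomes
\begin{eqnarray}
H_i^{\perp}\wedge H_j^{\perp}={\cal O}\quad \text{for all } i,j,
\end{eqnarray}
which is exactly the definition of pairwise independence of $\{H_1^{\perp},\dots,H_n^{\perp}\}$. Thus one direction is immediate.

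For the converse, I would start from pairwise independence of $\{H_1^{\perp},\dots,H_n^{\perp}\}$, i.e.\ $H_i^{\perp}\wedge H_j^{\perp}={\cal O}$, orthocomplement again, and use the dual De Morgan identity $(H_i^{\perp}\wedge H_j^{\perp})^{\perp}=(H_i^{\perp})^{\perp}\vee (H_j^{\perp})^{\perp}=H_i\vee H_j$ together with ${\cal O}^{\perp}=H(d)$ to recover $H_i\vee H_j=H(d)$. The involutivity $(H^{\perp})^{\perp}=H$ is what makes this step reversible.

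There is essentially no obstacle here: both directions reduce to a single application of De Morgan in ${\cal L}(d)$, and no use of the non-distributive aspects is required, since the condition is pairwise. The only thing worth flagging is that, unlike in Proposition \ref{pro39}, one does not have to worry about interaction between three or more subspaces, which is precisely why the pairwise result is strictly weaker than the full duality between totalness and independence discussed earlier.
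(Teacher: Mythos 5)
Your proof is correct and follows essentially the same route as the paper's: both reduce the statement to a single application of the De Morgan identity $(H_i\vee H_j)^{\perp}=H_i^{\perp}\wedge H_j^{\perp}$ together with involutivity of orthocomplementation, applied pairwise. The only cosmetic difference is that you start from totalness and the paper starts from independence of the orthocomplements; the content is identical.
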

\begin{proof}
Pairwise independence for $H_1^{\perp},...,H_n^{\perp}$ is defined as
\begin{eqnarray}
H_i^{\perp}\wedge H_j^{\perp}={\cal O}.
\end{eqnarray}
Orthocomplementation of this gives
\begin{eqnarray}
H_i\vee H_j=H(d).
\end{eqnarray}
Therefore the $H_1,...,H_n$ are a pairwise total set of subspaces. 
This argument also holds in the opposite direction, and proves that the converse is true.
\end{proof}

\subsection{Degree of totalness}
Pairwise totalness is weaker concept than totalness, and there are intermediate concepts between the two.
As we go from pairwise totalness to totalness, the overlap between the subspaces increases.
\begin{proposition}\label{pro134}
Let $\{H_1,...,H_n\}$ be subspaces which are pairwise total, and 
${\mathfrak h}_i$ the subspaces in Eq.(\ref{xcv34}).
Every vector $\ket{v_i}\in {\mathfrak h}_i \vee H_i$, can be written (not uniquely) as a sum
\begin{eqnarray}\label{479}
&&\ket{v_i}=\lambda _i\ket {a_i}+\mu _i\ket {b_i};\;\;\;\ket{a_i}\in H_i;\;\;\;\ket{b_i}\in {\mathfrak h}_i.
\end{eqnarray}
\end{proposition}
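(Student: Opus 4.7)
The plan is to derive the decomposition directly from the definition of $\vee$ in the lattice $\mathcal{L}(d)$. Recall that the excerpt defines $H_1\vee H_2=\mathrm{span}(H_1\cup H_2)$, so by construction every vector of ${\mathfrak h}_i\vee H_i$ is a finite linear combination of vectors drawn from ${\mathfrak h}_i\cup H_i$. The first step is to take any $\ket{v_i}\in{\mathfrak h}_i\vee H_i$, write it as such a linear combination, and then split the sum into two pieces: one collecting the contributions coming from $H_i$ and one collecting those coming from ${\mathfrak h}_i$.

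Because both $H_i$ and ${\mathfrak h}_i$ are subspaces (closed under addition and scalar multiplication), each of the two sub-sums collapses to a single vector. This produces the required expression $\ket{v_i}=\lambda_i\ket{a_i}+\mu_i\ket{b_i}$ with $\ket{a_i}\in H_i$ and $\ket{b_i}\in{\mathfrak h}_i$; the scalars $\lambda_i,\mu_i$ can in fact be absorbed into $\ket{a_i}$ and $\ket{b_i}$ (they are kept explicit in the statement to match the shape of Eq.(\ref{29A})). Non-uniqueness is then immediate: any $\ket{w}\in{\mathfrak h}_i\wedge H_i$ can be added to the $H_i$-component and subtracted from the ${\mathfrak h}_i$-component without changing the sum, and this ambiguity is trivial precisely when ${\mathfrak h}_i\wedge H_i=\mathcal{O}$.

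There is essentially no obstacle — the statement is a lattice-theoretic tautology built into the definition of disjunction, in complete duality with the trivial direction of proposition \ref{pro67}, where any element of the subspace appeared as a linear combination of one vector per $H_i$. The only point worth flagging explicitly is that the hypothesis of \emph{pairwise totalness} is not actually used in the argument; it functions as the ambient setting of the degree-of-totalness subsection rather than as a genuine premise. Writing the proof therefore reduces to one short paragraph that unpacks the definition of $\vee$ and records the non-uniqueness remark.
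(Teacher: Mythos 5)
Your proof is correct and follows essentially the same route as the paper, which likewise disposes of the statement in one line by appealing to the definition of the disjunction ${\mathfrak h}_i\vee H_i$ as the span of ${\mathfrak h}_i\cup H_i$. Your added observations (that the scalars are cosmetic, that non-uniqueness comes from ${\mathfrak h}_i\wedge H_i$, and that pairwise totalness is not actually invoked) are accurate but do not change the argument.
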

\begin{proof}
This follows from the definition of the disjunction ${\mathfrak h}_i \vee H_i$, which is that a vector in this space can 
be written as in Eq.(\ref{479}).
\end{proof}
It is seen that the totalness relation in Eq.(\ref{29A}) is valid here only for vectors in the subspace ${\mathfrak h}_i \vee H_i$ of $H(d)$.
This is the motivation for introducing in Eq.(\ref{AAA}) the quasi-probability distribution ${\widehat R}(i)$.
The degree of totalness compares the subspaces ${\mathfrak h}_i \vee H_i$ and $H_i$ or equivalently ${\widehat R}(i)$ and $R(i)$.
\begin{definition}
The matrix for the degree of totalness ${\cal T}$, and the degree of totalness $\epsilon (\rho)$, are given by
\begin{eqnarray}
{\cal T}=\frac{1}{n}\sum _i [\Pi({\mathfrak h}_i\vee H_i)-\Pi(H_i)];\;\;\;
\epsilon(\rho)={\rm Tr}(\rho {\cal T})=\frac{1}{n}\sum [{\widehat R}(i)-R(i)]
\end{eqnarray}
\end{definition}
Each $\Pi({\mathfrak h}_i\vee H_i)-\Pi(H_i)$ is a projector.
As a sum of projectors, ${\cal T}$ is a $d\times d$ positive semidefinite matrix.

There are two extreme cases and many intermediate cases:
\begin{itemize}
\item
If ${\mathfrak h}_i \vee H_i=H(d)$ for all $i$, the $\{H_1,...,H_n\}$ are by definition a total set.
In this case 
\begin{eqnarray}
{\cal T}={\bf 1}-\frac{1}{n}\sum _i \Pi(H_i);\;\;\;
\epsilon(\rho)={\rm Tr}(\rho {\cal T})=1-\frac{1}{n}\sum _i R(i).
\end{eqnarray}
Eq.(\ref{29A}) holds for all vectors in $H(d)$, and 
proposition \ref{pro134} reduces to  proposition \ref{def12}.
This is the strongest form of totalness.
\item
If ${\mathfrak h}_i \vee H_i=H_i$ for all $i$, Eq.(\ref{29A}) does not hold. In this case the $\{H_1,...,H_n\}$ are a pairwise total set, 
${\cal T}=0$ and $\epsilon(\rho)=0$.
This is the weakest form of totalness.
\item
Between these two extreme cases, the $\{H_1,...,H_n\}$ are a partially total set, in the sense that Eq.(\ref{29A}) does not hold for all vectors
in $H(d)$.
In this case the degree of totalness takes values in the interval 
\begin{eqnarray}
0\le \epsilon (\rho) \le  1-\frac{1}{n}\sum _i R(i).
\end{eqnarray}
\end{itemize}

A summary of the various levels of totalness is shown in table \ref{t1}.

\begin{example}
We consider the set $\{H(X; \alpha)\}$ with $d$ subspaces of $H(d)$,  where $\alpha \in {\mathbb Z}(d)$, and we get
\begin{eqnarray}
{\mathfrak h}(X; \alpha)=\bigwedge _{\beta \ne \alpha} H(X;\beta)={\cal O}.
\end{eqnarray} 
Therefore ${\mathfrak h}(X; \alpha)\vee H(X; \alpha)=H(X; \alpha)$ and 
\begin{eqnarray}
{\widehat R}(\alpha)=\bra{X;\alpha}\rho \ket {X;\alpha}.
\end{eqnarray} 
Taking into account the results in examples \ref{exam1}, \ref{exA1}, we see that in this case
${\widetilde R}(\alpha)={R}(\alpha)={\widehat R}(\alpha)$.
Therefore ${\cal T}=0$, $\epsilon (\rho)=0$, and the $\{H(X; \alpha)\}$ is a pairwise total set of subspaces.

A different problem is to study the totalness of the orthocomplements $\{[H(X; \alpha)]^{\perp}\}$.
We have seen in example \ref{exA1}, that the $\{H(X; \alpha)\}$ are a set of independent subspaces, and this implies
that their orthocomplements $\{[H(X; \alpha)]^{\perp}\}$ are a total set of subspaces. 
In order to verify this directly, we show that 
\begin{eqnarray}
{\mathfrak h}_{\rm ortho}(X; \alpha)=\bigwedge _{\beta \ne \alpha} [H(X;\beta)]^{\perp}=\left [\bigvee _{\beta \ne \alpha} [H(X;\beta)]\right ]^{\perp}=H(X; \alpha).
\end{eqnarray} 
Consequently
\begin{eqnarray}
[H(X; \alpha)]^{\perp}\vee {\mathfrak h}_{\rm ortho}(X; \alpha)={H}(X; \alpha)^{\perp}\vee H(X; \alpha)=H(d).
\end{eqnarray} 
Therefore 
\begin{eqnarray}
{\cal T}_{\rm ortho}={\bf 1}-\frac{1}{d}\sum \Pi\{H(X;\alpha)^{\perp}\}=
\frac{1}{d}\sum \Pi[H(X;\alpha)]=\frac {1}{d}{\bf 1}
;\;\;\;\epsilon _{\rm ortho}(\rho)=\frac{1}{d}.  
\end{eqnarray}
This confirms that the $\{[H(X; \alpha)]^{\perp}\}$ is a total set of subspaces. 
\end{example}

\begin{example}
We consider the set $\{H(X; \alpha), H(P; \beta)\}$ with $2d$ subspaces of $H(d)$, labelled as in Eq.(\ref{zm}). 
In this case
\begin{eqnarray}
&&i=0,...,(d-1)\;\rightarrow\;{\mathfrak h}_i=\left (\bigwedge _{\beta \ne i} H(X;\beta)\right )\wedge \left (\bigwedge _{\gamma } H(P;\gamma)\right )={\cal O}\nonumber\\
&&i=d,...,(2d-1)\;\rightarrow\;{\mathfrak h}_i=\left (\bigwedge _{\beta } H(X;\beta)\right )\wedge \left (\bigwedge _{\gamma \ne i} H(P;\gamma)\right )={\cal O},
\end{eqnarray} 
and ${\mathfrak h}_i \vee H_i=H_i$. Therefore 
\begin{eqnarray}
{\cal T}=0;\;\;\;\epsilon (\rho)=0.
\end{eqnarray}
It is seen that the $\{H(X; \alpha), H(P; \beta)\}$ is a pairwise total set of subspaces.
Taking into account the results in examples \ref{exam2}, \ref{exA2}, we see that in this case
${\widetilde R}(i)=0$ and ${R}(i)={\widehat R}(i)$.
\end{example}

\begin{example}
We consider the set $\{H(C; \alpha, \beta)\}$ that contains $d^2$ subspaces of $H(d)$, and we get
\begin{eqnarray}
&&{\mathfrak h}(C; \alpha _0, \beta _0)=\bigwedge _{\alpha \ne \alpha _0, \beta \ne \beta _0} H(C; \alpha, \beta)={\cal O},
\end{eqnarray} 
and ${\mathfrak h}(C; \alpha _0, \beta _0)\vee H(C; \alpha, \beta)=H(C; \alpha, \beta)$.
Therefore
\begin{eqnarray}
{\cal T}=0;\;\;\;\epsilon (\rho)=0.
\end{eqnarray}
It is seen that the $\{H(C; \alpha, \beta)\}$ is a pairwise total set of subspaces.
Taking into account the results in examples \ref{exam3}, \ref{exA3}, we see that in this case
${\widetilde R}(\alpha, \beta)=0$ and ${R}(\alpha, \beta)={\widehat R}(\alpha, \beta)$.
\end{example}

\begin{example}
In $H(6)$ we consider the following two-dimensional subspaces:
\begin{eqnarray}
H_1=\left \{
\begin{pmatrix}
a\\
b\\
0\\
0\\
0\\
0\\
\end{pmatrix}\right \};\;\;\;
H_2=\left \{
\begin{pmatrix}
0\\
0\\
a\\
0\\
0\\
b\\
\end{pmatrix}\right \};\;\;\;
H_3=\left \{
\begin{pmatrix}
0\\
a\\
0\\
0\\
0\\
b\\
\end{pmatrix}\right \}.
\end{eqnarray}
In this case
\begin{eqnarray}
{\mathfrak h}_1=H_2\wedge H_3=\left \{
\begin{pmatrix}
0\\
0\\
0\\
0\\
0\\
a\\
\end{pmatrix}\right \};\;\;\;
{\mathfrak h}_2=H_1\wedge H_3=\left \{
\begin{pmatrix}
0\\
a\\
0\\
0\\
0\\
0\\
\end{pmatrix}\right \};\;\;\;
{\mathfrak h}_3=H_1\wedge H_2={\cal O}.
\end{eqnarray}
They are one-dimensional subspaces.
Then
\begin{eqnarray}
{\mathfrak h}_1\vee H_1=\left \{
\begin{pmatrix}
a\\
b\\
0\\
0\\
0\\
c\\
\end{pmatrix}\right \};\;\;\;
{\mathfrak h}_2\vee H_2=\left \{
\begin{pmatrix}
0\\
a\\
b\\
0\\
0\\
c\\
\end{pmatrix}\right \};\;\;\;
{\mathfrak h}_3\vee H_3=H_3.
\end{eqnarray}
We used Eq.(\ref{proj}) to calculate the projectors and we found that
\begin{eqnarray}
{\cal T}=\frac{1}{3}
\begin{pmatrix}
0&0&0&0&0&0\\
0&1&0&0&0&0\\
0&0&0&0&0&0\\
0&0&0&0&0&0\\
0&0&0&0&0&0\\
0&0&0&0&0&1\\
\end{pmatrix}.
\end{eqnarray}

We also calculated the $R(i)$, ${\widehat R}(i)$ and the degree of totalness for the density matrix
\begin{eqnarray}
\rho=\ket{s}\bra {s};\;\;\;\ket{s}=\frac{1}{\sqrt {15}}(\ket {0}+\ket{1}+2\ket{4}+3\ket{5}).
\end{eqnarray}
We found that 
\begin{eqnarray}
R(1)=0.133;\;\;\;R(2)=0.600;\;\;\;R(3)=0.666\nonumber\\
{\widehat R}(1)=0.733;\;\;\;{\widehat R}(2)=0.666;\;\;\;{\widehat R}(3)=0.666 
\end{eqnarray}
Therefore
\begin{itemize}
\item
Measurement with $\Pi(H_1)$ will give `yes' with probability $R(1)=0.133$.
\item
Measurement with $\Pi(H_1\vee {\mathfrak h}_1^{\perp})=\Pi[H_1\vee (H_2\wedge H_3)]$ 
will give `yes' with probability ${\widehat R}(1)=0.733$. 
\end{itemize}
The result `yes' in the first measurement means that the system collapses to a state that belongs to $H_1$.
The result `yes' in the second measurement means that the system collapses to a superposition of a state in $H_1$ and 
another state which belongs to both $H_2$ and $H_3$.
Analogous comments can be made for the other $R(i)$ and ${\widehat R}(i)$.

The degree of totalness is $\epsilon (\rho)=0.222$.
In this example we have an intermediate level of totalness. 

\end{example}

The results for some of the above examples are summarized in table \ref{t2}.

\begin{remark}
Propositions \ref{pro12}, \ref{pro13} show that the following are related:
\begin{itemize}
\item
The lattice ${\cal L}(d)$ is not distributive.
\item
Independence is stronger concept than pairwise independence.

\item
Totalness is stronger concept than pairwise totalness.

\end{itemize}
\end{remark}

\begin{table}
\caption{Some sets of subspaces
$\{H_1,...,H_n\}$ of $H(d)$, and the corresponding  $R(i)$, ${\widetilde R}(i)$, ${\widehat R}(i)$.}
\centering
\begin{tabular}{|c|c|c|}\hline
$\{H(X;\alpha)\}$&$\{H(X;\alpha), H(P;\beta)\}$&$\{H(C; \alpha, \beta)\}$\\\hline
$R(\alpha)=\bra{X;\alpha}\rho\ket{X;\alpha}$&$R(i)=\bra{X;i}\rho\ket{X;i}$&$R(\alpha, \beta)=\bra{C;\alpha, \beta}\rho\ket{C;\alpha, \beta}$\\
&$R(i)=\bra{P;i-d}\rho\ket{P;i-d}$&\\
$\sum R(\alpha)=1$&$\frac{1}{2}\sum R(i)=1$&$\frac{1}{d}\sum R(\alpha, \beta)=1$\\\hline
${\widetilde R}(\alpha)=R(\alpha)$&${\widetilde R}(i)=0$&${\widetilde R}(\alpha, \beta)=0$\\\hline
${\widehat R}(\alpha)=R(\alpha)$&${\widehat R}(i)=R(i)$&${\widehat R}(\alpha, \beta)=R(\alpha, \beta)$\\\hline
\end{tabular}\label{t2}
\end{table}

\subsection{Weakly total sets of subspaces}

We introduce the concept of weak totalness which is dual to the weak independence in section \ref{sec49}.
Weak totalness is the same as 
totalness when $n=2$, and weaker than totalness when $n\ge 3$.

\begin{proposition}
The $H_1,...,H_n$ are a weakly total set of subspaces of $H(d)$, if one of the following statements,
which are equivalent to each other, holds:
\begin{itemize}
\item[(1)]
\begin{eqnarray}\label{13}
H_1\vee ...\vee H_n=H(d).
\end{eqnarray}
Then every vector $\ket{v}\in H(d)$ is a superposition of vectors in $H_i$.
Therefore it can be written (not uniquely) as a sum
\begin{eqnarray}\label{BBB}
\ket{v}=\lambda _1\ket {v_1}+...+\lambda _n\ket {v_n};\;\;\;\ket{v_i}\in H_i
\end{eqnarray}
\item[(2)]
There is no vector in $H(d)$, which is perpendicular to all subspaces $H_1,...,H_n$.
In other words, if 
\begin{eqnarray}\label{14}
\langle a\ket{v}=0;\;\;\;\ket{v}=\lambda _1\ket {v_1}+...+\lambda _n\ket {v_n}
\end{eqnarray}
for all $\ket{v_i} \in H_i$ and all $\lambda _i\in {\mathbb C}$, 
then $\ket {a}$ is the zero vector.
\end{itemize}
\end{proposition}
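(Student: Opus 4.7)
The plan is to prove the two statements are equivalent by an orthocomplementation argument, mirroring the duality already used in Proposition \ref{pro39} and Proposition \ref{def12} to pass between independence and totalness. Statement (1) expresses a condition on the disjunction $H_1\vee\ldots\vee H_n$, while statement (2) expresses the vanishing of its orthocomplement; the de Morgan rule from Eq.(\ref{3}) bridges these.

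First I would unpack statement (1). By the definition of disjunction in ${\cal L}(d)$, the subspace $H_1\vee \ldots\vee H_n$ equals ${\rm span}(H_1\cup\ldots\cup H_n)$, i.e., the set of all finite linear combinations of vectors drawn from the $H_i$. Thus the equality $H_1\vee \ldots\vee H_n=H(d)$ is literally the statement that every $\ket{v}\in H(d)$ admits a (non-unique) decomposition of the form $\ket{v}=\lambda_1\ket{v_1}+\ldots+\lambda_n\ket{v_n}$ with $\ket{v_i}\in H_i$, which is exactly Eq.(\ref{BBB}). So the interpretation inside statement (1) requires no separate argument.

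Next I would prove (1)$\Leftrightarrow$(2). Setting $W=H_1\vee\ldots\vee H_n$, the de Morgan identity $(H_1\vee\ldots\vee H_n)^{\perp}=H_1^{\perp}\wedge\ldots\wedge H_n^{\perp}$ from Eq.(\ref{3}) gives
\begin{eqnarray}
W^{\perp}=H_1^{\perp}\wedge\ldots\wedge H_n^{\perp}=\{\ket{a}\in H(d)\;:\;\ket{a}\perp H_i\;\text{for all}\;i\}.
\end{eqnarray}
Using $(W^\perp)^\perp=W$ and $\dim(W)+\dim(W^\perp)=d$ (both from Eq.(\ref{3})), we have $W=H(d)$ if and only if $W^{\perp}={\cal O}$. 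The latter says exactly that the only vector perpendicular to every $H_i$ is the zero vector, which is statement (2) (since a vector $\ket{a}$ perpendicular to all of $H_1,\ldots,H_n$ is automatically perpendicular to every superposition $\lambda_1\ket{v_1}+\ldots+\lambda_n\ket{v_n}$, and conversely).

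I do not expect any real obstacle here: the argument is a direct application of the orthocomplementation properties already recorded in Eq.(\ref{3}), and it parallels the proof of Proposition \ref{def12} almost verbatim, with the single subtlety being to observe carefully that ``perpendicular to all $H_i$'' is the same condition as ``perpendicular to every element of the span,'' which is immediate from linearity of the inner product.
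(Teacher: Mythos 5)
Your proposal is correct and follows essentially the same route as the paper's proof: orthocomplementing $H_1\vee\ldots\vee H_n=H(d)$ via de Morgan to obtain $H_1^{\perp}\wedge\ldots\wedge H_n^{\perp}={\cal O}$, and identifying that conjunction with the set of vectors perpendicular to all the $H_i$. Your additional remarks (unpacking the span in statement (1), and noting that perpendicularity to each $H_i$ is equivalent to perpendicularity to every superposition) are correct elaborations of steps the paper leaves implicit.
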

\begin{proof}
We assume that the first part of the proposition holds, and prove that the second part also holds.
Taking the orthocomplement of both sides in Eq.(\ref{13}), we get $H_1^{\perp}\wedge ...\wedge H_n^{\perp}={\cal O}$.
This shows that there is no vector perpendicular to all subspaces $H_1,...,H_n$, and Eq.(\ref{14}).

Conversely, if the second part of the proposition holds, then $H_1^{\perp}\wedge ...\wedge H_n^{\perp}={\cal O}$.
The orthocomplement of this proves Eq.(\ref{13}), and proves the first part of the proposition.

\end{proof}
\begin{proposition}
If $H_1,...,H_n$ is a weakly total set of independent subspaces of $H(d)$, then
the expansion in Eq.(\ref{BBB}) is unique.
\end{proposition}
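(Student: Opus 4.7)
The plan is to combine existence (supplied by weak totalness) with independence to force uniqueness. Weak totalness $H_1\vee\cdots\vee H_n=H(d)$ says that every $\ket{v}\in H(d)$ lies in $\mathrm{span}(H_1\cup\cdots\cup H_n)$, so at least one expansion of the form in Eq.(\ref{BBB}) exists.

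For uniqueness I would suppose two such expansions
\begin{eqnarray}
\ket{v}=\lambda_1\ket{v_1}+\cdots+\lambda_n\ket{v_n}=\mu_1\ket{u_1}+\cdots+\mu_n\ket{u_n},
\end{eqnarray}
with $\ket{v_i},\ket{u_i}\in H_i$. Setting $\ket{w_i}=\lambda_i\ket{v_i}-\mu_i\ket{u_i}\in H_i$ (using that each $H_i$ is a linear subspace), subtraction yields $\ket{w_1}+\cdots+\ket{w_n}=0$. The goal is to show $\ket{w_i}=0$ for every $i$, which is equivalent to $\lambda_i\ket{v_i}=\mu_i\ket{u_i}$; that is, the component of $\ket{v}$ living in $H_i$ is unambiguous.

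The key step is to invoke independence through part (1) of Proposition \ref{pro11}, namely $H_i\wedge{\mathfrak H}_i={\cal O}$. Suppose some $\ket{w_j}\neq 0$. Then $-\ket{w_j}=\sum_{i\neq j}\ket{w_i}$ expresses the same vector both as an element of $H_j$ (left side) and as an element of ${\mathfrak H}_j=\bigvee_{i\neq j}H_i$ (right side), forcing $\ket{w_j}\in H_j\wedge{\mathfrak H}_j={\cal O}$, a contradiction. Alternatively, one could write each nonzero $\ket{w_i}$ as a scalar multiple of a chosen unit vector $\ket{e_i}\in H_i$ and apply part (3) of Proposition \ref{pro11} directly.

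I do not expect a serious obstacle here; the only point to flag is the interpretation of ``uniqueness''. One cannot separately pin down the scalar $\lambda_i$ and the vector $\ket{v_i}$, since rescaling $(\lambda_i,\ket{v_i})\mapsto(c\lambda_i,c^{-1}\ket{v_i})$ leaves the expansion invariant. What is unique is the product $\lambda_i\ket{v_i}$, i.e.\ the projection-like component of $\ket{v}$ inside $H_i$. Making this clarification explicit in the statement/proof is the only real care needed.
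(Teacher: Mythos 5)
Your proof is correct and follows essentially the same route as the paper: subtract the two expansions to obtain a vanishing linear combination with one term from each $H_i$, then invoke independence ($H_j\wedge{\mathfrak H}_j={\cal O}$) to kill each term. In fact your version is slightly more careful than the paper's, which writes both expansions with the same vectors $\ket{v_i}$ and only varies the scalars, whereas you correctly allow different vectors $\ket{u_i}$ and observe that only the products $\lambda_i\ket{v_i}$ are determined.
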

\begin{proof}
The $H_1,...,H_n$ are a total set, and therefore there are expansions of an arbitrary vector $\ket{v}$ in $H(d)$, as 
\begin{eqnarray}
\ket{v}=\lambda _1\ket {v_1}+...+\lambda _n\ket {v_n}=\mu _1\ket {v_1}+...+\mu _n\ket {v_n};\;\;\;\ket{v_i}\in H_i.
\end{eqnarray}
This implies that
\begin{eqnarray}
(\lambda _1-\mu _1)\ket {v_1}+...+(\lambda _n-\mu _n)\ket {v_n}=0.
\end{eqnarray}
Since the $H_1,...,H_n$ are independent subspaces, it follows that $\lambda _i=\mu _i$, and this proves the uniqueness of the expansion.
\end{proof}

\begin{proposition}
A total set of subspaces, is also a weakly total set of subspaces. The converse is true when $n=2$, but it is not true when $n\ge 3$.
\end{proposition}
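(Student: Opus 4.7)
The plan is to exploit the orthocomplement duality established in proposition \ref{pro39}, which converts totalness of $\{H_1,\ldots,H_n\}$ into independence of $\{H_1^{\perp},\ldots,H_n^{\perp}\}$, and to reduce each claim to the corresponding statement about (weak) independence already proved in section \ref{sec49}. The key observation is that, by de Morgan, the weak totalness condition $H_1\vee\cdots\vee H_n=H(d)$ is equivalent to $H_1^{\perp}\wedge\cdots\wedge H_n^{\perp}={\cal O}$, which is exactly weak independence of the orthocomplements in the sense of definition \ref{def21}. Together with proposition \ref{pro39}, this means the three assertions here translate respectively to: independence of $\{H_i^{\perp}\}$ implies weak independence, the two notions coincide when $n=2$, and the implication is strict for $n\ge 3$. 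All of these are the content of the weak-independence proposition of section \ref{sec49}.

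For the first claim I would also record a direct argument that avoids invoking duality: for each $i$, $H_i\prec H_1\vee\cdots\vee H_n$ and ${\mathfrak h}_i\prec H_j$ for every $j\ne i$, hence ${\mathfrak h}_i\vee H_i\prec H_1\vee\cdots\vee H_n$, so if the left side equals $H(d)$ the right side does too. For $n=2$ we have ${\mathfrak h}_1=H_2$ and ${\mathfrak h}_2=H_1$, so the totalness condition ${\mathfrak h}_i\vee H_i=H(d)$ is literally $H_1\vee H_2=H(d)$, i.e.\ weak totalness; thus the two notions agree in this case.

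The only substantive step, and the place where the strictness of the hierarchy must be exhibited, is the failure of the converse for $n\ge 3$. Here I would produce an explicit counterexample: take three one-dimensional subspaces $H_1,H_2,H_3$ of $H(3)$ in general position, chosen so that $H_1\vee H_2\vee H_3=H(3)$. Weak totalness then holds, but any two distinct one-dimensional subspaces meet only in ${\cal O}$, so ${\mathfrak h}_i=H_j\wedge H_k={\cal O}$ and ${\mathfrak h}_i\vee H_i=H_i$ is one-dimensional, properly contained in $H(3)$; totalness fails. This example is the orthocomplement image of the standard counterexample separating weak independence from independence for $n=3$, which is consistent with the duality-based derivation; I expect no obstacle beyond presenting it cleanly.
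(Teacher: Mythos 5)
Your direct argument for the forward implication and for the $n=2$ case is essentially the paper's own proof: the paper notes ${\mathfrak h}_i\prec {\mathfrak H}_i$, hence $H(d)={\mathfrak h}_i\vee H_i\prec {\mathfrak H}_i\vee H_i=H_1\vee\cdots\vee H_n$, and observes that ${\mathfrak h}_i={\mathfrak H}_i$ when $n=2$; your chain ${\mathfrak h}_i\vee H_i\prec H_1\vee\cdots\vee H_n$ is the same inequality. What you add is genuinely different in two respects. First, the duality route through proposition \ref{pro39} and de Morgan, reducing the whole statement to the weak-independence proposition of section \ref{sec49}, is a clean alternative that the paper does not use here (although it uses exactly this kind of orthocomplementation transfer elsewhere, e.g.\ in proposition \ref{pro56}); it buys economy, since the three assertions follow at once from their already-proved duals. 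Second, and more substantively, you supply an explicit counterexample for the failure of the converse when $n\ge 3$ (three one-dimensional subspaces of $H(3)$ in general position, for which ${\mathfrak h}_i={\cal O}$ and ${\mathfrak h}_i\vee H_i=H_i\ne H(3)$ while $H_1\vee H_2\vee H_3=H(3)$). The paper merely asserts this direction from the fact that ${\mathfrak h}_i\prec{\mathfrak H}_i$ is strict in general, so your counterexample actually closes a small gap in the published argument. Both your routes are correct; the verification that generic one-dimensional subspaces meet pairwise only in ${\cal O}$ is immediate, so there is no obstacle.
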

\begin{proof}
By definition of a total set of subspaces ${\mathfrak h}_i\vee H_i=H(d)$. Also
the obvious relation ${\mathfrak h}_i\prec {\mathfrak H}_i$
gives
\begin{eqnarray}
H(d)={\mathfrak h}_i\vee H_i \prec {\mathfrak H}_i\vee H_i =(H_1\vee...\vee H_n).
\end{eqnarray}
Therefore $H_1\vee...\vee H_n=H(d)$.
For $n=2$, we get ${\mathfrak h}_i= {\mathfrak H}_i$.
This means that the converse is true when $n=2$, but it is not true when $n\ge 3$.
\end{proof}

It is seen that when $n\ge 3$, totalness is a stronger concept than weak totalness.
For $n=2$, they are the same.

\begin{proposition}
The $H_1,...,H_n$ are a total set of weakly independent subspaces of $H(d)$, if and only if
the  $H_1^{\perp},...,H_n^{\perp}$ are a weakly total set of independent subspaces of $H(d)$.
In this case the expansion in Eq.(\ref{29A}) is unique.
\end{proposition}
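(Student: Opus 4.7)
The plan is to derive this as a direct corollary of the orthocomplementation duality that has already been established, and then secure uniqueness from a simple intersection argument using weak independence. Everything reduces to showing that (totalness + weak independence) for a family transforms under orthocomplementation into (independence + weak totalness) for the orthocomplementary family, and conversely.

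First I would combine two duality statements that the paper already provides. Proposition \ref{pro39} gives that $H_1,\ldots,H_n$ is a total set if and only if $H_1^{\perp},\ldots,H_n^{\perp}$ is an independent set. It remains to pair this with the corresponding duality between weak independence and weak totalness. By definition, weak independence of $H_1,\ldots,H_n$ means $H_1\wedge\cdots\wedge H_n={\cal O}$; taking orthocomplements of both sides and using the de Morgan identity $(H_1\wedge\cdots\wedge H_n)^{\perp}=H_1^{\perp}\vee\cdots\vee H_n^{\perp}$ from Eq.(\ref{3}), this becomes $H_1^{\perp}\vee\cdots\vee H_n^{\perp}=H(d)$, which is exactly weak totalness of $H_1^{\perp},\ldots,H_n^{\perp}$. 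Since $(H_i^{\perp})^{\perp}=H_i$ the argument is reversible, giving the full equivalence between the two properties in the statement.

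For the uniqueness claim I would argue directly. Suppose $\ket{v}\in H(d)$ admits two decompositions from Eq.(\ref{29A}), say
\begin{eqnarray}
\ket{v}=\lambda_i\ket{a_i}+\mu_i\ket{b_i}=\lambda_i'\ket{a_i'}+\mu_i'\ket{b_i'},
\end{eqnarray}
with $\ket{a_i},\ket{a_i'}\in H_i$ and $\ket{b_i},\ket{b_i'}\in{\mathfrak h}_i$. Rearranging, the vector $\lambda_i\ket{a_i}-\lambda_i'\ket{a_i'}\in H_i$ coincides with $\mu_i'\ket{b_i'}-\mu_i\ket{b_i}\in{\mathfrak h}_i$, so it lies in ${\mathfrak h}_i\wedge H_i$. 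But ${\mathfrak h}_i\wedge H_i=H_1\wedge\cdots\wedge H_n={\cal O}$ by weak independence, so both sides are the zero vector, i.e.\ the $H_i$-component and the ${\mathfrak h}_i$-component of the decomposition are each uniquely determined.

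I do not expect any serious obstacle: the whole statement is essentially a bookkeeping exercise in orthocomplementation, and the only subtle point is recognising that weak independence is precisely what forces ${\mathfrak h}_i\wedge H_i={\cal O}$ for every $i$, which is exactly the condition needed to upgrade the (not generally unique) decomposition in the definition of the disjunction to a unique one. The mildest care is required in stating the biconditional symmetrically, but since each of the two equivalences (Proposition \ref{pro39} and the de Morgan step) is itself an iff, the combined equivalence follows with no further work.
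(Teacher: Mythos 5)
Your proposal is correct and follows essentially the same route as the paper: the equivalence is obtained by combining Proposition \ref{pro39} with the de Morgan duality between $H_1\wedge\cdots\wedge H_n={\cal O}$ and $H_1^{\perp}\vee\cdots\vee H_n^{\perp}=H(d)$, and uniqueness comes from ${\mathfrak h}_i\wedge H_i=H_1\wedge\cdots\wedge H_n={\cal O}$. Your uniqueness step is in fact marginally more careful than the paper's, since you allow the two decompositions to use different vectors $\ket{a_i},\ket{a_i'}$ and $\ket{b_i},\ket{b_i'}$ rather than only different scalar coefficients, but the key idea is identical.
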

\begin{proof}
\begin{itemize}
\item[(1)]
We have already shown that the $H_1,...,H_n$ are  a total set of subspaces, if and only if the $H_1^{\perp},...,H_n^{\perp}$ are independent.
If in addition to that the $H_1,...,H_n$ are weakly independent, $H_1 \wedge ...\wedge H_n={\cal O}$ and
$H_1^{\perp}\vee ...\vee H_n^{\perp}=H(d)$, which implies that $H_1^{\perp},...,H_n^{\perp}$ are a weakly total set of subspaces of $H(d)$.
The converse is also true.

\item[(2)]
Let
\begin{eqnarray}
\ket{v}=\lambda _i\ket {a_i}+\mu _i\ket {b_i}=\lambda _i^\prime \ket {a_i}+\mu _i^\prime \ket {b_i}
;\;\;\;\ket{a_i}\in H_i
\end{eqnarray}
be two expansions of a vector $\ket{v}$, 
where $\ket{b_i}\in (H_1\wedge...\wedge H_{i-1}\wedge H_{i+1}\wedge...\wedge H_n)$. Then
\begin{eqnarray}
(\lambda _i-\lambda _i^\prime)\ket {a_i}=(\mu _i^\prime-\mu _i)\ket {b_i};\;\;\;\ket{a_i}\in H_i;\;\;\;
\ket{b_i}\in (H_1\wedge...\wedge H_{i-1}\wedge H_{i+1}\wedge...\wedge H_n).
\end{eqnarray}
It follows that $\lambda _i-\lambda _i^\prime=\mu _i^\prime-\mu _i=0$, and therefore 
the expansion is unique.
\end{itemize}
\end{proof}

It is seen that with orthocomplementation, independence and weak independence, become totalness, and weak totalness, correspondingly.

\subsection{Orthogonalization}

In addition to the expansion in Eq.(\ref{BBB}) which involves non-orthogonal components, we can have an orthogonal expansion as discussed below.
\begin{proposition}
Let $H_1,...,H_n$ be a weakly total set of independent subspaces of $H(d)$.
We introduce the following spaces and the corresponding projectors:
\begin{eqnarray}
&&{\cal H}_1=H_1;\;\;\;{\mathfrak P}_1=\Pi(H_1)\nonumber\\
&&{\cal H}_2=(H_1\vee H_2)\wedge H_1^{\perp};\;\;\;{\mathfrak P}_2=\Pi(H_1\vee H_2)-\Pi(H_1)\nonumber\\
&&......\nonumber\\
&&{\cal H}_i=[(H_1\vee...\vee H_{i-1})\vee H_i]\wedge (H_1\vee...\vee H_{i-1})^{\perp}
;\;\;\;{\mathfrak P}_i=\Pi(H_1\vee...\vee H_i)-\Pi(H_1\vee...\vee H_{i-1})
\nonumber\\
&&......\nonumber\\
&&{\cal H}_n=(H_1\vee...\vee H_{n-1})^{\perp};\;\;\;
{\mathfrak P}_n={\bf 1}-\Pi(H_1\vee...\vee H_{n-1})
\end{eqnarray}
Then:
\begin{eqnarray}\label{edc}
{\cal H}_1\vee ...\vee {\cal H}_n=H(d);\;\;\;
{\mathfrak P}_1+...+{\mathfrak P}_n={\bf 1};\;\;\;{\mathfrak P}_i{\mathfrak P}_j={\mathfrak P}_i\delta (i,j).
\end{eqnarray}
\end{proposition}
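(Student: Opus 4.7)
The plan is to recognize this as the subspace-level analogue of Gram--Schmidt orthogonalization, in which the possibly overlapping subspaces $H_1,\ldots,H_n$ are turned into a chain of mutually orthogonal pieces ${\cal H}_1,\ldots,{\cal H}_n$ carved out of the ascending flag $V_k := H_1\vee\cdots\vee H_k$ (with $V_0={\cal O}$). The key observation is that $V_{k-1}\prec V_k$ at every stage, so the nested projectors $\Pi(V_{k-1})$ and $\Pi(V_k)$ commute (using the fact noted in the excerpt that $H_1\prec H_2$ implies $H_1{\cal C} H_2$), and the whole proof collapses to straightforward bookkeeping of telescoping sums. Weak totalness enters only at the final index to guarantee $V_n=H(d)$, which in turn lets ${\mathfrak P}_n={\bf 1}-\Pi(V_{n-1})$ be rewritten in the uniform form $\Pi(V_n)-\Pi(V_{n-1})$.

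First I would check that each ${\mathfrak P}_i$ really is the orthogonal projector onto ${\cal H}_i$. Since $V_{i-1}\prec V_i$, the commutativity of the nested projectors gives $\Pi(V_{i-1})\Pi(V_i)=\Pi(V_i)\Pi(V_{i-1})=\Pi(V_{i-1})$. A direct expansion then yields ${\mathfrak P}_i^2={\mathfrak P}_i={\mathfrak P}_i^{\dagger}$, and modularity (Eq.(\ref{2})) provides the orthogonal decomposition $V_i=V_{i-1}\vee(V_i\wedge V_{i-1}^{\perp})=V_{i-1}\vee {\cal H}_i$, so the range of ${\mathfrak P}_i$ is exactly ${\cal H}_i$.

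Second, the sum ${\mathfrak P}_1+\cdots+{\mathfrak P}_n$ telescopes in the obvious way, giving $\Pi(V_{n-1})+({\bf 1}-\Pi(V_{n-1}))={\bf 1}$; here is where weak totalness $V_n=H(d)$ justifies treating the ``missing'' term $\Pi(V_n)$ as ${\bf 1}$. For pairwise orthogonality with $i<j$, I would write ${\mathfrak P}_i{\mathfrak P}_j=(\Pi(V_i)-\Pi(V_{i-1}))(\Pi(V_j)-\Pi(V_{j-1}))$ and use $V_{i-1}\prec V_i\prec V_{j-1}\prec V_j$ together with the absorption rule $\Pi(V_a)\Pi(V_b)=\Pi(V_a)$ whenever $V_a\prec V_b$; the four resulting terms cancel in pairs, yielding $0$.

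The disjunction ${\cal H}_1\vee\cdots\vee{\cal H}_n=H(d)$ then follows by induction on $k$ of the identity ${\cal H}_1\vee\cdots\vee{\cal H}_k=V_k$, whose inductive step is exactly the orthogonal decomposition $V_k=V_{k-1}\vee{\cal H}_k$ established above; at $k=n-1$ one has $V_{n-1}$, and appending ${\cal H}_n=V_{n-1}^{\perp}$ yields $V_{n-1}\vee V_{n-1}^{\perp}=H(d)$. The main obstacle is purely notational rather than conceptual: one has to treat ${\mathfrak P}_n$ on equal footing with the others despite its slightly different-looking definition, and one must invoke commutativity of \emph{nested} projectors (stronger than the generic lattice-theoretic commutativity discussed around Eq.(\ref{45})) when multiplying them out. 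Once this bookkeeping is in place the argument reduces to the chain of identities above; it is worth remarking that the independence hypothesis is not actually needed for Eq.(\ref{edc}), it only enters to guarantee that the pieces ${\cal H}_i$ have the expected dimensions $\dim(H_i)$.
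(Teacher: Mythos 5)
Your proposal is correct and follows essentially the same route as the paper: modularity gives the decomposition $V_i=V_{i-1}\vee{\cal H}_i$ (the paper's computation ${\cal H}_1\vee{\cal H}_2=H_1\vee H_2$, iterated), and the projector identities follow by direct multiplication using the absorption rule for nested projectors, with the sum telescoping to ${\bf 1}$. Your closing observation that independence is not needed for Eq.(\ref{edc}) itself, but only to guarantee $\dim({\cal H}_i)=\dim(H_i)$, is a correct refinement the paper does not state explicitly.
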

\begin{proof}
We prove that ${\cal H}_1\vee ...\vee {\cal H}_n=H(d)$ using the modularity property in Eq.(\ref{2}).
For example
\begin{eqnarray}
{\cal H}_1\vee {\cal H}_2=H_1\vee [H_1^{\perp}\wedge (H_1\vee H_2)]=(H_1\vee H_1^{\perp})\wedge (H_1\vee H_2)=H(d)\wedge (H_1\vee H_2)=
H_1\vee H_2.
\end{eqnarray}
We continue in the same way and we prove that
\begin{eqnarray}
{\cal H}_1\vee ...\vee {\cal H}_n=H_1\vee ...\vee H_n=H(d).
\end{eqnarray}

The fact that ${\cal H}_1\vee {\cal H}_2=H_1\vee H_2$ and ${\cal H}_1 \wedge {\cal H}_2={\cal O}$
shows that ${\mathfrak P}_2=\Pi(H_1\vee H_2)-\Pi(H_1)$. In analogous way we prove the expressions given above, for the rest of the projectors.

Direct multiplication proves that they are orthogonal  projectors. For example
\begin{eqnarray}
{\mathfrak P}_1{\mathfrak P}_2=\Pi(H_1)[\Pi(H_1\vee H_2)-\Pi(H_1)]=\Pi(H_1)-\Pi(H_1)=0,
\end{eqnarray}
and
\begin{eqnarray}
{\mathfrak P}_2{\mathfrak P}_2&=&[\Pi(H_1\vee H_2)-\Pi(H_1)][\Pi(H_1\vee H_2)-\Pi(H_1)]\nonumber\\&=&\Pi(H_1\vee H_2)-\Pi(H_1)-\Pi(H_1)+\Pi(H_1)={\mathfrak P}_2.
\end{eqnarray}
This completes the proof.
\end{proof}
Using Eq.(\ref{edc}) we can express an arbitrary state in terms of orthogonal components:
\begin{eqnarray}
\ket{v}=\sum {\mathfrak P}_i\ket{v}=\sum \lambda _i\ket{v_i};\;\;\;|\lambda _i|^2=\bra{v}{\mathfrak P}_i\ket{v};\;\;\;\ket{v_i}\in {\cal H}_i.
\end{eqnarray}
If we change the order of the subspaces, we get different projectors and a different expansion.

\section{Informationally independent subspaces and measurements}

It has been pointed out in a pure mathematics context \cite{A0,A1,A2,A3,A4,A5,A6}, that the lattices describing finite quantum systems (and also the normal subgroups of a group), 
are a special case of modular orthocomplemented lattices, with extra stronger properties.
They are lattices of commuting equivalence relations (also called linear lattices by Rota and collaborators\cite{A2,A3,A4,A5,A6}).
The lattices of commuting equivalence relations are modular, but the converse is not true in general.

Equivalence relations are intimately related to partitions of the Hilbert space $H(d)$, and it is the language of partitions that we use below.
Two partitions are independent, if knowledge of the block of the first partition to which an element belongs, provides no information about  
the block of the second partition to which this element belongs.

Based on the concept of independent partitions, we introduce in this section
informationally independent subspaces, in a physical context. 
Physically, each subspace $H_1$ leads naturally to a partition $\varpi(H_1)$ of the Hilbert space $H(d)$, into blocks which are sets but not subspaces.
Measurement with the projector $\Pi(H_1)$ gives the same result for all states in each block of the 
partition $\varpi(H_1)$ (when the outcome is `yes').

We show that informational independence is equivalent to independence.
Weaker concepts of independence, are not informationally independent.

\subsection{Partitions of the Hilbert space $H(d)$ and their role in quantum measurements}

\begin{definition}
If $H_1$ is a subspace of $H(d)$, 
$\varpi(H_1)$ is the partition of $H(d)$ into `blocks' $\ket{v}+H_1$ that contain
vectors $\ket{v}\in H_1^{\perp}$ modulo vectors $\ket{a}\in H_1$:
\begin{eqnarray}
\ket{v}+H_1=\{\ket{v}+\ket{a}\;|\;\ket{a}\in H_1\};\;\;\;\ket{v}\in H_1^{\perp}.
\end{eqnarray}
The blocks are sets, but they are {\bf not} subspaces.
The partition $\varpi [H(d)]=\varpi ({\cal I})$ has only one block, the $H(d)$.  
In the partition $\varpi({\cal O})$, each block contains one vector $\ket{v}$ only.
\end{definition}

Although it is not essential, it is convinient to consider below normalized vectors:
\begin{eqnarray}
\langle v\ket{v}+\langle a\ket{a}=1;\;\;\;\langle v\ket{a}=0.
\end{eqnarray}

\begin{remark}
\mbox{}
\begin{itemize}
\item
Each normalized vector $\ket{s}$ in $H(d)$, can be written uniquely as
\begin{eqnarray}
\ket{s}=\Pi (H_1^{\perp})\ket{s}+\Pi (H_1)\ket{s}.
\end{eqnarray}
Therefore $\ket{s}$ belongs to exactly one block within the partition $\varpi(H_1)$ (the block $\Pi (H_1^{\perp})\ket{s}+H_1$).
$\ket{s}$ also belongs to exactly one block within the partition $\varpi(H_1^{\perp})$ (the block $\Pi (H_1)\ket{s}+H_1^{\perp}$).

\item
With a measurement $\Pi(H_1^{\perp})$:
\begin{itemize}
\item
The block $\ket{v}+H_1$ in the partition $\varpi (H_1)$, contains states $\ket{s}$ which when the outcome is `yes', collapse into the same state
\begin{eqnarray}\label{63}
\frac{1}{\sqrt p}\ket{v}\in H_1^{\perp};\;\;\;p=\langle v\ket{v}=\bra{s}\Pi (H_1^{\perp})\ket{s}
\end{eqnarray}
with the same probability $p$. 
The measurement $\Pi(H_1^{\perp})$, cannot distinguish the states in the block  $\ket{v}+H_1$, when the outcome is `yes'.
\item
The block $\ket{a}+H_1^{\perp}$ in the partition $\varpi (H_1^{\perp})$, contains states $\ket{s}$ which when the outcome is `no', collapse into the same state
\begin{eqnarray}\label{630}
\frac{1}{\sqrt {1-p}}\ket{a}\in H_1;\;\;\;1-p=\langle a\ket{a}=\bra{s}{\bf 1}-\Pi (H_1^{\perp})\ket{s}
\end{eqnarray}
with the same probability $1-p$. 
The measurement $\Pi(H_1^{\perp})$, cannot distinguish the states in the block  $\ket{a}+H_1^{\perp}$, when the outcome is `no'.
\end{itemize}
\item
Vectors in the two blocks $\ket{v}+H_1$ and $\lambda \ket{v}+H_1$ (with $|\lambda|\le 1$)
with a measurement $\Pi(H_1^{\perp})$ that gives the outcome `yes', will collapse into the same state given in Eq.(\ref{63}),
with different probabilities $p$ and $|\lambda |^2p$, correspondingly.

\end{itemize}
\end{remark}

\begin{proposition}
The partition $\varpi(H_1)$, with the following operation defining superpositions between its blocks
\begin{eqnarray}\label{55}
\lambda_1(\ket{s_1}+H_1)+\lambda _2(\ket{s_2}+H_1)=(\lambda_1\ket{s_1}+\lambda_2\ket{s_2})+H_1;\;\;\;\ket{s_1}, \ket{s_2}\in H_1^{\perp},
\end{eqnarray}
is a Hilbert space isomorphic to $H_1^{\perp}$. The zero vector in $\varpi(H_1)$, is the block $H_1$.
\end{proposition}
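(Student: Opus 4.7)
The plan is to construct an explicit linear bijection $\phi : \varpi(H_1) \to H_1^{\perp}$ and transport the Hilbert space structure through it. First I would observe that every block of $\varpi(H_1)$ admits a \emph{unique} representative in $H_1^{\perp}$: if $\ket{v_1}, \ket{v_2}\in H_1^{\perp}$ produce the same block, i.e.\ $\ket{v_1}+H_1=\ket{v_2}+H_1$, then $\ket{v_1}-\ket{v_2}\in H_1\wedge H_1^{\perp}={\cal O}$, so $\ket{v_1}=\ket{v_2}$. Conversely every $\ket{v}\in H_1^{\perp}$ labels some block. Hence the assignment $\phi(\ket{v}+H_1):=\ket{v}$ is a well-defined bijection onto $H_1^{\perp}$.

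Next I would verify that the operation defined in Eq.(\ref{55}) is well-defined, and that $\phi$ intertwines it with ordinary vector addition in $H_1^{\perp}$. Because Eq.(\ref{55}) uses representatives drawn from $H_1^{\perp}$, and these representatives are uniquely determined by the previous step, the block $(\lambda_1\ket{s_1}+\lambda_2\ket{s_2})+H_1$ depends only on the two input blocks. Linearity of $\phi$ then reads
\begin{equation}
\phi\bigl(\lambda_1(\ket{s_1}+H_1)+\lambda_2(\ket{s_2}+H_1)\bigr)=\lambda_1\ket{s_1}+\lambda_2\ket{s_2}=\lambda_1\phi(\ket{s_1}+H_1)+\lambda_2\phi(\ket{s_2}+H_1),
\end{equation}
which is immediate from the definition.

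Third, I would transport the inner product by setting
\begin{equation}
\langle \ket{v_1}+H_1\mid \ket{v_2}+H_1\rangle\;:=\;\langle v_1\mid v_2\rangle,\qquad \ket{v_1},\ket{v_2}\in H_1^{\perp}.
\end{equation}
Well-definedness follows once again from uniqueness of the $H_1^{\perp}$-representative, and all Hilbert space axioms (sesquilinearity, conjugate symmetry, positive definiteness, and completeness, automatic in finite dimension) descend directly from the corresponding axioms in $H_1^{\perp}$ via $\phi$. The additive identity is $0+H_1=H_1$, which coincides with the subspace $H_1$ regarded as one of its own cosets, as asserted.

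The main obstacle is essentially a bookkeeping one: one must be careful that the blocks are indexed by $H_1^{\perp}$ rather than by arbitrary vectors in $H(d)$, so that Eq.(\ref{55}) does not need to be checked across different coset representatives. Once this canonical labelling is established via $\phi$, the content of the proposition collapses to the trivial fact that $H_1^{\perp}$ is itself a Hilbert space, and the isomorphism $\varpi(H_1)\cong H_1^{\perp}$ is an isometry by construction.
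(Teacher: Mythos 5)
Your proposal is correct and follows essentially the same route as the paper: both establish the bijection between blocks and their unique representatives in $H_1^{\perp}$ (using $H_1\wedge H_1^{\perp}={\cal O}$) and observe that the block sum in Eq.(\ref{55}) corresponds to vector addition in $H_1^{\perp}$. You simply spell out the well-definedness and the transported inner product in more detail than the paper does.
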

\begin{proof}
The notation that we introduced above shows that there is a bijective map between the partition $\varpi(H_1)$ and
the Hilbert space $H_1^{\perp}$. It is also easily seen that the sum of blocks in $\varpi(H_1)$ defined in Eq.(\ref{55})
corresponds to the sum of vectors in $H_1^{\perp}$.
\end{proof}

\begin{cor}
In the set of partitions 
\begin{eqnarray}
\{\varpi (H_1)\;|\;H_1\prec H(d)\}
\end{eqnarray}
we define the following operations:
\begin{itemize}
\item[(1)]
Disjunction
\begin{eqnarray}
\varpi (H_1)\vee \varpi(H_2)=\varpi (H_1\wedge H_2)
\end{eqnarray}
The blocks in $\varpi (H_1)\vee \varpi(H_2)$ are
\begin{eqnarray}
\ket{v}+H_1\wedge H_2;\;\;\;\ket{v}\in (H_1\wedge H_2)^\perp=H_1^{\perp}\vee H_2^{\perp}.
\end{eqnarray}
Special cases are:
\begin{eqnarray}
\varpi (H_1)\vee \varpi ({\cal O})=\varpi ({\cal O});\;\;\;\varpi (H_1)\vee \varpi ({\cal I})=\varpi (H_1);\;\;\;\varpi (H_1)\vee \varpi (H_1^{\perp})=\varpi ({\cal O}).
\end{eqnarray}
\item[(2)]
Conjunction
\begin{eqnarray}
\varpi (H_1)\wedge \varpi(H_2)=\varpi (H_1\vee H_2)
\end{eqnarray}
The blocks in $\varpi (H_1)\wedge \varpi(H_2)$ are
\begin{eqnarray}
\ket{v}+H_1\vee H_2;\;\;\;\ket{v}\in (H_1\vee H_2)^\perp=H_1^{\perp}\wedge H_2^{\perp}.
\end{eqnarray}
Special cases are:
\begin{eqnarray}
\varpi (H_1)\wedge \varpi ({\cal O})=\varpi (H_1);\;\;\;\varpi (H_1)\wedge \varpi ({\cal I})=\varpi ({\cal I});\;\;\;\varpi (H_1)\wedge \varpi (H_1^{\perp})=\varpi ({\cal I}).
\end{eqnarray}
\item[(3)]
Orthocomplement
\begin{eqnarray}
[\varpi (H_1)]^\perp=\varpi (H_1^{\perp})
\end{eqnarray}
The blocks in $[\varpi (H_1)]^\perp$ are
$\ket{v}+H_1^{\perp}$ where $\ket{v}\in H_1$.
Special cases are:
\begin{eqnarray}
[\varpi ({\cal O})]^\perp =\varpi ({\cal I});\;\;\;[\varpi ({\cal I})]^\perp =\varpi ({\cal O}).
\end{eqnarray}
\item[(4)]
The partial order is `refinement'.
$\varpi (H_1)\prec \varpi(H_2)$ if $H_1\succ H_2$, in which case every block in $\varpi(H_2)$ is contained in some block in $\varpi(H_1)$.
For every $\varpi (H_1)$, we get $\varpi({\cal I})\prec \varpi(H_1)\prec \varpi({\cal O})$.
\end{itemize}
Then the set of partitions is a lattice dually isomorphic to ${\cal L}(d)$ (i.e., the conjunction and disjunction exchange roles).
\end{cor}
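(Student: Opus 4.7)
The plan is to introduce the map $\Phi: \mathcal{L}(d) \to \{\varpi(H) \mid H \prec H(d)\}$ defined by $\Phi(H)=\varpi(H)$, verify that $\Phi$ is a bijection that reverses the partial order, and then deduce that the stated operations on partitions are the lattice operations dual to those on $\mathcal{L}(d)$. Well-definedness of $\varpi(H)$ as a partition of $H(d)$ has essentially already been observed: each $\ket{s}\in H(d)$ decomposes uniquely as $\Pi(H^{\perp})\ket{s}+\Pi(H)\ket{s}$, so it lies in exactly one block $\Pi(H^{\perp})\ket{s}+H$. Injectivity of $\Phi$ follows because the block $H$ itself (the zero element of $\varpi(H)$) recovers $H$, and surjectivity is by construction.

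First I would show order-reversal: $H_1\prec H_2 \Leftrightarrow \varpi(H_2)\prec \varpi(H_1)$ in the refinement order defined in the corollary. If $H_2\succ H_1$, then every block $\ket{v}+H_2$ decomposes into blocks $\ket{v'}+H_1$ for $\ket{v'}$ running over representatives of $H_2\cap H_1^{\perp}$ added to $\ket{v}$, hence every block of $\varpi(H_1)$ is contained in some block of $\varpi(H_2)$, which by the stated convention is exactly $\varpi(H_2)\prec\varpi(H_1)$. The converse direction is obtained by noting that if every block of $\varpi(H_1)$ sits inside a block of $\varpi(H_2)$, then in particular the zero block $H_1$ must be contained in the zero block $H_2$, yielding $H_1\prec H_2$.

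Next I would transfer the lattice operations. Since $\Phi$ is an order-reversing bijection, meets and joins in $\mathcal{L}(d)$ must become joins and meets (respectively) in the image, and this is precisely the content of the displayed formulas $\varpi(H_1)\vee\varpi(H_2)=\varpi(H_1\wedge H_2)$ and $\varpi(H_1)\wedge\varpi(H_2)=\varpi(H_1\vee H_2)$: the blocks of $\varpi(H_1\wedge H_2)$ are the coarsest common refinement of the blocks of $\varpi(H_1)$ and $\varpi(H_2)$, while the blocks of $\varpi(H_1\vee H_2)$ are the finest common coarsening. The listed special cases (with ${\cal O}$ and ${\cal I}$) then follow immediately from the corresponding identities in $\mathcal{L}(d)$. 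For the orthocomplementation $[\varpi(H)]^{\perp}=\varpi(H^{\perp})$, one checks directly the axioms by transferring the properties of Eq.(\ref{3}) through $\Phi$: $\varpi(H)\vee\varpi(H^{\perp})=\varpi(H\wedge H^{\perp})=\varpi({\cal O})$ (the zero of the partition lattice under our order), $\varpi(H)\wedge\varpi(H^{\perp})=\varpi(H\vee H^{\perp})=\varpi({\cal I})$, and involution follows from $(H^{\perp})^{\perp}=H$.

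The main obstacle I expect is the order-reversal step, because it requires being careful about which convention of ``refinement'' is in force and checking that the block structure really does nest correctly when $H_2\succ H_1$; once this is in hand, the remaining verifications are formal consequences of the $\mathcal{L}(d)$-identities transferred through the bijection $\Phi$, and no further nontrivial calculation is required.
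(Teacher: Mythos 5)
Your proposal is correct and follows essentially the same route as the paper, whose own proof is a one-line appeal to the bijection $\varpi(H_1)\cong H_1^{\perp}$ (from the preceding proposition) together with the de Morgan relations in Eq.(\ref{3}). You simply make explicit what the paper leaves implicit, namely the direct verification at the level of blocks that $H\mapsto\varpi(H)$ reverses the refinement order.
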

\begin{proof}
This follows from the fact that $\varpi(H_1^{\perp} )$ is isomorphic to $H_1$, and Eqs.(\ref{3}).
\end{proof}
\begin{definition}
The $\varpi (H_1),...,\varpi(H_n)$ are independent, a total set or a strongly total set, if the
$H_1^{\perp},...,H_n^{\perp}$ are independent, a total set or a strongly total set, correspondingly.
\end{definition}

\subsection{Two informationally independent subspaces and measurements}

Two partitions $\varpi(H_1)$, $\varpi(H_2)$ are informationally independent, if knowledge that a vector belongs to a particular block of the partition $\varpi(H_1)$, gives no information about the block of the partition $\varpi(H_2)$ which contains this vector. This is the motivation for the definition below.
\begin{definition}\label{36}
Two distinct partitions $\varpi(H_1)$, $\varpi(H_2)$ are informationally independent, if the intersection of any block of $\varpi(H_1)$,  with any block of $\varpi(H_2)$, is non-empty:
\begin{eqnarray}
(\ket{v_1}+H_1)\cap (\ket{v_2}+H_2)\ne \emptyset;\;\;\;\ket {v_1}\in H_1^{\perp};\;\;\;\ket{v_2}\in H_2^{\perp}.
\end{eqnarray}
\end{definition}
If for some blocks $(\ket{v_1}+H_1)\cap (\ket{v_2}+H_2)= \emptyset$, then knowledge that a vector belongs to the block 
$\ket{v_1}+H_1$ of the partition $\varpi(H_1)$, implies that this vector does not belong to the block $\ket{v_2}+H_2$ of the partition $\varpi(H_2)$.
In this case the partitions $\varpi(H_1)$, $\varpi(H_2)$, are not informationally independent.

Partitions are isomorphic to subspaces, and
the concept of informationally independent partitions leads to the following definition of informationally independent subspaces.
\begin{definition}
Two distinct subspaces $H_1$, $H_2$ are informationally independent, if for any pair of vectors $\ket {v_1}\in H_1$ and $\ket {v_2}\in H_2$,
the intersection of the block $\ket{v_1}+H_1^{\perp}$ with the block $\ket{v_2}+H_2^{\perp}$, is non-empty:
\begin{eqnarray}\label{77}
(\ket{v_1}+H_1^{\perp})\cap (\ket{v_2}+H_2^{\perp})\ne \emptyset;\;\;\;\ket {v_1}\in H_1;\;\;\;\ket{v_2}\in H_2.
\end{eqnarray}
\end{definition}
The motivation for this definition in terms of quantum measurements, is as follows.
We asssume that for some blocks $(\ket{v_1}+H_1^{\perp})\cap (\ket{v_2}+H_2^{\perp})= \emptyset$, in which case 
if a state $\ket{s}$ belongs to the block $\ket{v_1}+H_1^{\perp}$, it cannot belong to the block $\ket{v_2}+H_2^{\perp}$. Then 
the measurement $\Pi(H_1)$ on $\ket{s}$ will collapse it
into ${\cal N}_1\ket{v_1}$, if the outcome is `yes'. But the measurement $\Pi(H_2)$ on $\ket{s}$, cannot collapse it into ${\cal N}_2\ket{v_2}$.
Therefore knowledge of the outcome of the $\Pi(H_1)$ measurement, gives information about the outcome of the $\Pi(H_2)$ measurement.
In this case, the $H_1$, $H_2$ are not informationally independent.

The following lemma will be used below to show that informational independence is the same concept as independence.

\begin{lemma}\label{pro2}
The following statements are equivalent:
\begin{itemize}
\item
The subspaces $H_1, H_2$ are informationally independent
\item
The $H_1, H_2$ are independent.
\item
The $H_1^{\perp}, H_2^{\perp}$ are a total set of subspaces.
\end{itemize}
\end{lemma}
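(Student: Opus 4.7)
The plan is to close a short cycle of equivalences, using the orthocomplementation properties of Eq.(\ref{3}) and a lattice-theoretic reformulation of Eq.(\ref{77}). The equivalence of the second and third statements is immediate from de Morgan's rule: $H_1 \wedge H_2 = {\cal O}$ if and only if $(H_1\wedge H_2)^{\perp}= H_1^{\perp}\vee H_2^{\perp}=H(d)$, and since $n=2$ the notions of pairwise totalness, totalness, and weak totalness all coincide, so $H_1^{\perp}\vee H_2^{\perp}=H(d)$ is exactly the statement that $\{H_1^{\perp},H_2^{\perp}\}$ is total. This is really just the $n=2$ case of proposition \ref{pro39} applied to $\{H_1^{\perp},H_2^{\perp}\}$ together with $(H_i^{\perp})^{\perp}=H_i$.

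The main work is relating informational independence to the other two statements. First I would rewrite the intersection condition of Eq.(\ref{77}) as a subspace inclusion. A vector in $(\ket{v_1}+H_1^{\perp})\cap(\ket{v_2}+H_2^{\perp})$ exists iff there are $\ket{a_1}\in H_1^{\perp}$ and $\ket{a_2}\in H_2^{\perp}$ with $\ket{v_1}+\ket{a_1}=\ket{v_2}+\ket{a_2}$, i.e., $\ket{v_1}-\ket{v_2}=\ket{a_2}-\ket{a_1}\in H_1^{\perp}\vee H_2^{\perp}$. Requiring this for all $\ket{v_1}\in H_1$ and all $\ket{v_2}\in H_2$ (and specialising to $\ket{v_2}=0$, respectively $\ket{v_1}=0$) is therefore equivalent to the single lattice inclusion
\begin{equation}
H_1\vee H_2 \;\prec\; H_1^{\perp}\vee H_2^{\perp}.
\end{equation}

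Finally I would show that this inclusion is equivalent to $H_1^{\perp}\vee H_2^{\perp}=H(d)$. One direction is trivial. For the other, decompose
\begin{equation}
H(d)=(H_1\vee H_2)\vee(H_1\vee H_2)^{\perp}=(H_1\vee H_2)\vee(H_1^{\perp}\wedge H_2^{\perp}),
\end{equation}
where the second equality is de Morgan. Under the hypothesis, the first summand lies in $H_1^{\perp}\vee H_2^{\perp}$, and the second summand obviously does as well since $H_1^{\perp}\wedge H_2^{\perp}\prec H_1^{\perp}\vee H_2^{\perp}$. Hence $H(d)\prec H_1^{\perp}\vee H_2^{\perp}$, forcing equality. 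Chaining this with the earlier step completes the cycle.

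I do not foresee a genuine obstacle: the proof is essentially an algebraic translation, with the only subtlety being the need to feed both $\ket{v_2}=0$ and $\ket{v_1}=0$ into the pointwise condition in order to obtain $H_1\vee H_2\prec H_1^{\perp}\vee H_2^{\perp}$ rather than merely $H_1-H_2\subset H_1^{\perp}\vee H_2^{\perp}$ (which is the same thing, but worth stating explicitly to avoid confusion).
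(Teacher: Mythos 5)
Your proposal is correct and follows essentially the same route as the paper: both reduce the intersection condition of Eq.(\ref{77}) to the solvability of $\ket{a_2}-\ket{a_1}=\ket{v_1}-\ket{v_2}$, i.e.\ to the lattice inclusion $H_1\vee H_2\prec H_1^{\perp}\vee H_2^{\perp}$, and then identify this with $H_1\wedge H_2={\cal O}$ via de Morgan, with the second--third equivalence delegated to proposition \ref{pro39}. The only cosmetic difference is in the converse direction, where the paper concludes from $H_1\wedge H_2\prec H_1\vee H_2\prec(H_1\wedge H_2)^{\perp}$ that $H_1\wedge H_2={\cal O}$, while you instead force $H_1^{\perp}\vee H_2^{\perp}=H(d)$ by decomposing $H(d)=(H_1\vee H_2)\vee(H_1^{\perp}\wedge H_2^{\perp})$.
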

\begin{proof}
We will prove the equivalence of the first two statements.
The equivalence of the last two statements has been proved earlier.

We assume that $H_1, H_2$ are independent, i.e., that $H_1\wedge H_2={\cal O}$.
For any vectors $\ket{v_1}\in H_1$ and $\ket{v_2}\in H_2$, we have to show that there exists vectors $\ket{a_1}\in H_1^{\perp}$ and $\ket{a_2}\in H_2^{\perp}$, such that
\begin{eqnarray}
\ket{v_1}+\ket{a_1}=\ket{v_2}+\ket{a_2}.
\end{eqnarray}
Because in this case Eq.(\ref{77}) holds.
In other words we have to show that there exists solution to the equation:
\begin{eqnarray}\label{60}
\ket{a_2}-\ket{a_1}=\ket{v_1}-\ket{v_2};\;\;\;\ket{v_1}-\ket{v_2}\in (H_1\vee H_2);\;\;\;\ket{a_2}-\ket{a_1}\in (H_1^{\perp} \vee H_2^{\perp}).
\end{eqnarray}
Here the unknowns are the vectors $\ket{a_2}, \ket{a_1}$.
We want to have a solution for all  vectors $\ket{v_1}\in H_1$ and $\ket{v_2}\in H_2$, and therefore the 
$\ket{v_1}-\ket{v_2}$ can be any vector in $H_1\vee H_2$.
In order to have a solution the $H_1\vee H_2$ should be a subspace of $H_1^{\perp} \vee H_2^{\perp}$.
If this is not the case then for $\ket{v_1}-\ket{v_2}$ in the set $(H_1\vee H_2)\setminus (H_1^{\perp} \vee H_2^{\perp})$, there exist
no solution.

The fact that  $H_1\wedge H_2={\cal O}$, implies that $H_1^{\perp} \vee H_2^{\perp}=H(d)$. 
Therefore in this case, $H_1\vee H_2\prec H_1^{\perp} \vee H_2^{\perp}=H(d)$, and
we can then find vectors $\ket{a_1}$, $\ket{a_2}$ which satisfy this equation.
This proves that $H_1, H_2$  are informationally independent.

Conversely, we assume that $H_1, H_2$ are informationally independent, i.e., that 
Eq.(\ref{60}) has a solution for all $\ket{v_1}-\ket{v_2}\in (H_1\vee H_2)$. Then
\begin{eqnarray}
H_1 \vee H_2\prec H_1^{\perp} \vee H_2^{\perp}=(H_1\wedge H_2)^{\perp}.
\end{eqnarray}
From this follows that
\begin{eqnarray}
H_1\wedge H_2\prec H_1 \vee H_2\prec (H_1\wedge H_2)^{\perp}.
\end{eqnarray}
This is possible only if $H_1\wedge H_2={\cal O}$.
Therefore the $H_1, H_2$ are independent.
\end{proof}

\subsection{Several informationally independent subspaces and measurements}

\begin{definition}\label{def290}
The subspaces $H_1,...,H_n$ of $H(d)$ are informationally independent, if for all $i=1,...,n$, the pairs of
subspaces ${\mathfrak H}_i$ and $H_i$ are 
informationally independent (according to the definition \ref{36}).
\end{definition}
\begin{proposition}\label{pro2}
The following statements are equivalent:
\begin{itemize}
\item
The subspaces $H_1, ...,H_n$ are informationally independent.
\item
The $H_1, ...,H_n$ are independent.
\item
The $H_1^{\perp}, ...,H_n^{\perp}$ are a strongly total set of subspaces.
\end{itemize}
\end{proposition}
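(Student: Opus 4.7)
The plan is to reduce the $n$-subspace statement to the two-subspace lemma (Lemma \ref{pro2}) applied block by block, and then invoke the orthocomplementation duality from Proposition \ref{pro39}.

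First I would unpack Definition \ref{def290}: informational independence of $H_1,\ldots,H_n$ means that for each $i$, the pair $({\mathfrak H}_i, H_i)$ is informationally independent in the sense of Definition \ref{36}. By Lemma \ref{pro2} applied to this pair, informational independence of $({\mathfrak H}_i, H_i)$ is equivalent to
\begin{eqnarray}
{\mathfrak H}_i \wedge H_i = {\cal O}.
\end{eqnarray}
Requiring this for every $i=1,\ldots,n$ is precisely statement (1) of Proposition \ref{pro11}, i.e.\ the definition of independence of the subspaces $H_1,\ldots,H_n$. This establishes the equivalence between the first two statements.

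Next, for the equivalence between independence of $H_1,\ldots,H_n$ and totalness of $H_1^{\perp},\ldots,H_n^{\perp}$, I would simply quote Proposition \ref{pro39}, which is exactly this duality via orthocomplementation. (The word \emph{strongly total} here means a total set in the sense of Proposition \ref{def12}, as distinguished from the weaker notions introduced afterwards; the term is used to emphasize the contrast with pairwise totalness and weak totalness.)

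The main subtlety, and where one should be careful, is to check that Lemma \ref{pro2} really applies to the pair $({\mathfrak H}_i, H_i)$ rather than to some pair of the original subspaces. The lemma is stated for an arbitrary pair of subspaces of $H(d)$, so this is legitimate: ${\mathfrak H}_i$ is itself a subspace, and the informational independence of the pair $({\mathfrak H}_i, H_i)$ amounts, by the lemma, to ${\mathfrak H}_i \wedge H_i = {\cal O}$. No additional hypothesis (commutativity, pairwise relations, etc.) is needed. So there is no real obstacle; the proposition is essentially the combination of Lemma \ref{pro2}, Proposition \ref{pro11}, and Proposition \ref{pro39}, and the proof fits in a few lines.
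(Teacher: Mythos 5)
Your proposal is correct and follows essentially the same route as the paper: the paper likewise reduces the $n$-subspace statement to the two-subspace lemma applied to each pair $({\mathfrak H}_i, H_i)$, identifies the resulting condition ${\mathfrak H}_i\wedge H_i={\cal O}$ with independence, and refers back to the earlier orthocomplementation duality for the third statement. Your reading of ``strongly total'' as the paper's notion of a total set is also the intended one.
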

\begin{proof}
We will prove the equivalence of the first two statements.
The equivalence of the last two statements has been proved earlier.

We assume that the $H_1, ...,H_n$ are independent, in which case according to
Eq.(\ref{33}) the 
${\mathfrak H}_i$ and $H_i$ are independent subspaces,
for all $i=1,...,n$.
Then lemma \ref{pro2} shows that they are also informationally independent subspaces.
Therefore according to the definition \ref{def290}, the $H_1,...,H_n$ are informationally independent subspaces.

This argument is also valid in the opposite direction, and it proves the converse.
\end{proof}

In view of this result we will use the simpler term independence for both independence and informational independence..

\section{Application: the pentagram in $H(3)$}\label{sec34}

An example of a formalism that requires a deeper understanding of the underlying concepts, is the pentagram which has been studied in the context of
contextuality \cite{C0,C1,C2,C3,C4,C5,C6,C7,C8}. 
In this section we apply our formalism to the pentagram.
\subsection{Background}

A context is a set ${\mathfrak C}=\{H_1,...,H_n\}$, of subspaces which commute pairwise ($H_i{\cal C} H_j$ for all $i,j$) or 
equivalently the corresponding projectors commute pairwise ($[\Pi(H_i),\Pi(H_j)]=0$ for all $i,j$).
The sublattice of ${\cal L}(d)$ generated by a context ${\mathfrak C}$ is distributive.
We consider two contexts 
\begin{eqnarray} 
{\mathfrak C}_1=\{H_1,H_2,..., H_n\};\;\;\;{\mathfrak C}_2=\{H_1, h_2,...,h_m\}.
\end{eqnarray} 
The subspaces in the first context commute pairwise, the subspaces in the second context commute pairwise,
but in general the $H_i$ does not commute with $h_j$. We will call them overlapping contexts because $H_1$ belongs to both of these contexts.

We consider the pentagram in $H(3)$, which has been studied originally in \cite{C4}.
To be specific, we consider the following states:
\begin{eqnarray}\label{g11}
\ket{s_0}=
\begin{pmatrix}
1\\
0\\
0\\
\end{pmatrix};\;\;\;
\ket{s_1}=\frac{1}{\sqrt{2}}
\begin{pmatrix}
0\\
1\\
1\\
\end{pmatrix};\;\;\;
\ket{s_2}=\frac{1}{\sqrt{3}}
\begin{pmatrix}
1\\
1\\
-1\\
\end{pmatrix};\;\;\;
\ket{s_3}=\frac{1}{\sqrt{6}}
\begin{pmatrix}
1\\
1\\
2\\
\end{pmatrix};\;\;\;
\ket{s_4}=\frac{1}{\sqrt{5}}
\begin{pmatrix}
0\\
-2\\
1\\
\end{pmatrix}.
\end{eqnarray}
The indices of these states belong to ${\mathbb Z}(5)$ (integers modulo $5$).
Any three of these vectors are independent.
It is easily seen that
\begin{eqnarray}\label{DFG}
\langle s_i\ket{s_{i+1}}=0;\;\;\;i\in {\mathbb Z}(5).
\end{eqnarray}
We call $H_i$ the one-dimensional subspace of $H(3)$, which contains the states $a\ket{s_i}$:
\begin{eqnarray}\label{vbn}
H_i=\{a\ket{s_i}\};\;\;\;i=0,...,4.
\end{eqnarray}
Their orthocomplements are the two-dimensional spaces 
\begin{eqnarray}\label{g12}
H_0^{\perp}=\left \{
\begin{pmatrix}
0\\
a\\
b\\
\end{pmatrix}\right \};\;\;
H_1^{\perp}=\left \{
\begin{pmatrix}
a\\
b\\
-b\\
\end{pmatrix}\right \};\;\;
H_2^{\perp}=\left \{
\begin{pmatrix}
a\\
b\\
a+b\\
\end{pmatrix}\right \};\;\;
H_3^{\perp}=\left \{
\begin{pmatrix}
2a\\
2b\\
-(a+b)\\
\end{pmatrix}\right \};\;\;
H_4^{\perp}=\left \{
\begin{pmatrix}
a\\
b\\
2b\\
\end{pmatrix}\right \}.
\end{eqnarray}
We also consider the projectors $\Pi(H_i)$ to the subspaces $H_i$:
\begin{eqnarray}
\Pi(H_i)=\ket{s_i}\bra{s_i};\;\;\;\Pi(H_i)\Pi(H_{i+1})=0;\;\;\;[\Pi(H_i), \Pi(H_{i+1})]=0;\;\;\;i\in {\mathbb Z}(5).
\end{eqnarray}
The terms exclusivity or local orthogonality are used in the literature, for the relation $\Pi(H_i)\Pi(H_{i+1})=0$ and its physical implications.

The ${\mathfrak C}_{i-1}=\{H_{i-1}, H_i\}$ and ${\mathfrak C}_i=\{H_i,H_{i+1}\}$ are overlapping contexts for all $i$,
and
\begin{eqnarray} 
[\Pi(H_{i-1}), \Pi(H_i)]=[\Pi(H_i), \Pi(H_{i+1})]=0;\;\;\;[\Pi(H_{i-1}), \Pi(H_{i+1})]\ne 0.
\end{eqnarray}

We perform measurements with the projector $\Pi(H_i)$ on an ensemble of states with density matrix $\rho$, and get the result `yes' with probability
$p_i={\rm Tr}[\rho \Pi(H_i)]$, and the result `no' with probability $1-p_i={\rm Tr}[\rho \Pi(H_i^{\perp})]$. 
We use the notation $1, 0$, for `yes' and `no' correspondingly.
The projectors $\Pi(H_i), \Pi(H_j)$ do not commute in general, and the corresponding probabilities will be measured using different ensembles of states described by the same density matrix $\rho$.
If $a_i=1,0$ is the outcome of the measurement $\Pi(H_i)$,
the distributions $p(a_i)$ and $p(a_i, a_{i+1})$ are measurable.

\subsection{A pentagram inequality within a non-contextual distributive hidden variable  theory}

In a non-contextual hidden variable theory, we assume that there exists a joint probability distribution
$p(a_0, a_1, a_2, a_3, a_4)$, for the outcomes of the measurements in the previous subsection, which has as marginals the measurable
distributions $p(a_i)$:
\begin{eqnarray}\label{cde}
p(a_i)=\sum  _{a_j\ne a_i}p(a_0, a_1, a_2, a_3, a_4).
\end{eqnarray}
We emphasize in this paper that this uses the law of total probability (proposition \ref{pro1q}), 
which is based on distributivity within the theory of Kolmogorov probabilities. 
So the non-contextual hidden variable theory, is assumed to be distributive,
and this is consistent with the classical nature of the hidden variable theory.

Since $\Pi(H_i)\Pi(H_{i+1})=0$,
the commuting measurements $\Pi(H_i), \Pi(H_{i+1})$ cannot both give $1$,
i.e., the $a_i, a_{i+1}$ cannot both be equal to 1. 
Consequently, if the $(a_0, a_1, a_2, a_3, a_4)$ has more than two `1', the probability $p(a_0, a_1, a_2, a_3, a_4)$ is zero.
This shows that the average number of `yes' answers, satisfies the inequality
\begin{eqnarray} 
\sum _{a_i} (a_0+a_1+a_2+a_3+a_4) p(a_0, a_1, a_2, a_3, a_4)\le 2.
\end{eqnarray}
Using Eq.(\ref{cde}) (which is based on distributivity), we rewrite this in terms of the marginal distributions, as
\begin{eqnarray} 
\sum _i a_ip(a_i)\le 2.
\end{eqnarray}
In the quantum language, this is
\begin{eqnarray} \label{bound}
\sum _{i=0}^4 {\rm Tr}[\rho \Pi(H_i)]\le 2,
\end{eqnarray}
where $\rho$ is the density matrix of the system.
It is known  \cite{C0,C1,C2,C3,C4,C5,C6,C7,C8} that quantum mechanics violates this inequality.
The left hand side can take the maximum value $\sqrt 5$, and this proves that 
there exists no joint probability distribution $p(a_0, a_1, a_2, a_3, a_4)$.
Quantum mechanics is a contextual theory where distributivity is replaced by the weaker property of modularity..

\subsection{Degree of independence in  the pentagram}\label{ex12}

We have explained above, that lack of distributivity makes problematic the use of marginal distributions in Eq.(\ref{cde}).
We now delve deeper into this, and we stress that 
in set theory there is a unique concept of disjointness, which leads to a clear concept of partition, and to the law of the total probability.
In the lattice of subspaces we have various levels of disjointness (independence),
and consequently, the meaning of the joint probability and  its marginals in Eq.(\ref{cde}), become problematic.
This is another way of expressing the problems associated with joint probabilities that involve non-commuting observables.
Only in the case of commuting observables, disjointness and pairwise disjointness are equivalent, the corresponding sublattice is distributive, and joint probabilities and their marginals are well defined.

We next show that in the pentagram we have the lowest level of disjointness (pairwise disjointness).
It is easily seen that any pair of the subspaces $H_0, H_1, H_2, H_3, H_4$ 
(given in Eq.(\ref{vbn})), are  independent.
Also ${\mathfrak H}_i=H(3)$ and ${\mathfrak H}_i^{\perp}={\cal O}$. Therefore
${\mathfrak H}_i^{\perp}\wedge H_i={\cal O}$ for all $i$, and
\begin{eqnarray}\label{ABC}
{\cal A}=\frac{1}{5}\sum _i\Pi({H}_i)=
\begin{pmatrix}
0.30&0.10&0\\
0.10&0.36&0.02\\
0&0.02&0.34\\
\end{pmatrix};\;\;\;\eta (\rho)={\rm Tr}({\cal A}\rho).
\end{eqnarray}
Our degree of independence $\eta (\rho)$, is the quantity in the inequality of Eq.(\ref{bound}) divided by $5$, for normalization purposes.
The $H_0, H_1, H_2, H_3, H_4$  are pairwise independent, which is the weakest form of independence.

The eigenvalues of the matrix ${\cal A}$ are
\begin{eqnarray}
\lambda _1=0.225;\;\;\;\lambda _2=0.338;\;\;\;\lambda _3=0.437.
\end{eqnarray}
Therefore $\eta (\rho)={\rm Tr}({\cal A}\rho)$ can reach the value $0.437$, which violates the inequality in Eq.(\ref{bound})
that has normalized upper bound $2/5$. In fact the value $0.437$
is only slightly lower than the maximum value ${\sqrt 5}/5=0.447$ given in the literature.

\subsection{A pentagram inequality within quantum  theory}
The following inequality gives an upper bound $2.5$ for $\sum {\rm Tr}[\rho \Pi(H_i)]$.
The same upper bound has also been given in \cite{UB} through a different argument. Here it is easily proved using the language of lattice theory.
This upper bound is of course higher than the maximum value for this quantity which is known to be ${\sqrt 5}$. 
\begin{proposition}
We consider the subspaces $H_i$ of the pentagram (an example of which 
is given in Eqs.(\ref{g11}), (\ref{vbn})). If $\rho$ is the density matrix of the system, then
\begin{eqnarray} 
\sum _{i=0}^4 {\rm Tr}[\rho \Pi(H_i)]\le 2.5
\end{eqnarray}
\end{proposition}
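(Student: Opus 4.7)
The plan is to exploit the cyclic orthogonality relations $\langle s_i|s_{i+1}\rangle=0$ for $i\in\mathbb{Z}(5)$ (Eq.(\ref{DFG})), which translate into $\Pi(H_i)\Pi(H_{i+1})=0$ at the level of projectors. Because two projectors with vanishing product are mutually orthogonal, their sum $\Pi(H_i)+\Pi(H_{i+1})$ is again a projector, and in particular a positive operator bounded above by $\mathbf{1}$ in the operator order.

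The next step is to sum these pairwise bounds over all five consecutive pairs indexed by $\mathbb{Z}(5)$. Since each index $j\in\mathbb{Z}(5)$ appears in exactly two adjacent pairs, namely the pairs $(j-1,j)$ and $(j,j+1)$, I obtain
\begin{equation*}
\sum_{i\in\mathbb{Z}(5)}\bigl[\Pi(H_i)+\Pi(H_{i+1})\bigr]=2\sum_{i=0}^{4}\Pi(H_i),
\end{equation*}
while the left-hand side is a sum of five operators each dominated by $\mathbf{1}$, hence dominated by $5\,\mathbf{1}$. This yields the operator inequality $\sum_{i=0}^{4}\Pi(H_i)\preceq \tfrac{5}{2}\mathbf{1}$.

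Finally, I take the trace against the density matrix $\rho$. Positivity of $\rho$ together with the operator inequality above gives
\begin{equation*}
\sum_{i=0}^{4}\mathrm{Tr}[\rho\,\Pi(H_i)]\le\tfrac{5}{2}\,\mathrm{Tr}(\rho)=2.5,
\end{equation*}
which is the claimed bound. There is no real obstacle here: the only ingredient beyond trivial operator algebra is the observation that adjacency in the pentagram gives orthogonality of projectors, and this is explicitly recorded in Eq.(\ref{DFG}). The bound is tight for this combinatorial argument (every edge of the pentagon is used, and each vertex has degree two), which is also why the cruder pairing into two disjoint adjacent pairs plus a singleton only gives the weaker bound $3$; using all five cyclic edges and dividing by the degree is what produces the sharper constant $5/2$.
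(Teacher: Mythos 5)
Your proof is correct and is essentially the paper's argument: the paper writes the same per-edge orthogonality as $H_i\prec H_{i+1}^{\perp}$, hence ${\rm Tr}[\rho\,\Pi(H_i)]\le {\rm Tr}[\rho\,\Pi(H_{i+1}^{\perp})]=1-{\rm Tr}[\rho\,\Pi(H_{i+1})]$, and sums over the five cyclic edges to get $2\sum_i{\rm Tr}[\rho\,\Pi(H_i)]\le 5$. You merely perform the same summation at the operator level (obtaining $\sum_i\Pi(H_i)\preceq\tfrac{5}{2}\mathbf{1}$) before taking the trace, which is a cosmetic rearrangement of the identical idea.
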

\begin{proof}
The fact that $\Pi(H_i)\Pi(H_{i+1})=0$, implies that $H_i\prec H_{i+1}^{\perp}$.
Consequently,
\begin{eqnarray} 
{\rm Tr}[\rho \Pi(H_i)]\le {\rm Tr}[\rho \Pi(H_{i+1}^{\perp})].
\end{eqnarray}
We use these inequalities for all values $i$, and adding them we prove that
\begin{eqnarray} 
\sum _{i=0}^4 {\rm Tr}[\rho \Pi(H_i)]\le \sum _{i=0}^4 {\rm Tr}[\rho \Pi(H_i^{\perp})].
\end{eqnarray}
But we also have
\begin{eqnarray} 
\sum _{i=0}^4 {\rm Tr}[\rho \Pi(H_i)]+\sum _{i=0}^4 {\rm Tr}[\rho \Pi(H_i^{\perp})]=5.
\end{eqnarray}
From the last two relations, follows the inequality in the proposition.
\end{proof}

\section{Discussion}

An important property in classical physics and classical logic, formalized with Boolean algebra, is distributivity.
In quantum physics and quantum logic, formalized with the Birkhoff-von Neumann lattice of subspaces, it is replaced by the weaker property of modularity (in systems with finite-dimensional Hilbert space). This has profound implications, some of which are discussed in this paper.
Of course, within the lattice ${\cal L}(d)$ there are sublattices which are distributive (e.g., when the subspaces commute), and in those `islands' results similar to classical physics do hold.

Within the formalism of phase space methods, we have considered the subspaces $H_1,...,H_n$ of $H(d)$, and the quasi-probability distributions $R(i)$ in Eq.(\ref{AA}).
We also introduced the quasi-probability distributions ${\widetilde R}(i)$ and ${\widehat R}(i)$ in Eq.(\ref{AAA}), and discussed their physical meaning in terms of measurements.

In this general context, we have introduced the concepts of independence and totalness.
We have shown that in quantum theory there are many levels of independence,
from pairwise independence up to independence, and they are quantified with the degree of independence that compares the distributions
$R(i)$ and ${\widetilde R}(i)$.
There are also many levels of totalness, 
from pairwise totalness  up to totalness, and they are quantified with the degree of totalness that compares the distributions
$R(i)$ and ${\widehat R}(i)$.
The existence of various levels of independence and totalness, is intimately related to the lack of distributivity in quantum theory.
In set theory where distributivity holds, there is a single concept of independence and a single concept of totalness.

There is a duality between a set of independent subspaces, and the total set of 
the orthocomplements of these subspaces.
Orthocomplementation (logical NOT operation) transforms independence into totalness.

One application of these ideas, is the law of total probability which is used to define marginals of probability distributions.
We have explained that its proof relies on the distributivity property, and its application in non-distributive structures is problematic.
This has been used in the pentagram, where a non-contextual distributive hidden variable theory leads to the inequality in Eq.(\ref{bound}), which is violated by quantum mechanics.  

The work studies quantum theory from the angle of non-distributivity.
It introduces novel concepts like the various levels of independence and the various levels of totalness,
which can play a complementary role to non-commutativity, for the description of quantum phenomena.


\begin{thebibliography}{999}
\bibitem{P1}
C. Zachos, D. Fairlie, T. Cartright, (Ed.)`Quantum Mechanics in Phase space' (World Scientific, Singapore, 2005)
\bibitem{P2}
W. Schleich, `Quantum Optics in Phase Space' (Wiley, Berlin, 2001)
\bibitem{P3}
A. Vourdas, J. Phys. A39, R65 (2006)
\bibitem{MA1}
H. Whitney, Amer. J. Math., 55, 509 (1935) 
\bibitem{MA2}
S. MacLane, Amer. J. Math., 58, 236 (1936) 
\bibitem{MA3}
J. Oxley, `Matroid theory' (Oxford Univ. Press, Oxford, 1992)
\bibitem{CG1}
J. von Neumann, `Continuous geometry' (Princeton Univ. Press, Princeton, 1960)
\bibitem{CG2}
I. Halperin, Trans. Am. Math. Soc. 44, 537 (1938)
\bibitem{A0}
B. Jonsson, Math. Scand. 1, 13 (1953)
\bibitem{A1}
B. Jonsson, Math. Scand. 2, 295 (1954)
\bibitem{A2}
G-C Rota, Notices of the Am. Math. Soc. 44, 1440 (1997)
\bibitem{A3}
D. Finberg, M. Mainetti, G.C. Rota, in `Lecture Notes in Pure and Applied Mathematics, 180, 69 (1996)
\bibitem{A4}
M. Haiman, Adv. Math. 58, 209 (1985)
\bibitem{A5}
C.H. Yan, Discrete Mathematics, 181, 295 (1998) 
\bibitem{A6}
C.H. Yan, Discrete Mathematics, 183, 285 (1998) 
\bibitem{C0}
J.S. Bell, Physics 1, 195 (1964)
\bibitem{C1}
S. Kochen, E.P.Specker, J. Math. Mech. 17, 59 (1967)
\bibitem{C2}
J.F. Clauser, M.A. Horne, A. Shimony, R.A. Holt, Phys. Rev. Lett., 23, 880 (1969)
\bibitem{C3}
N.D. Mermin, Rev. Mod. Phys. 65, 803 (1993)
\bibitem{C4}
A.A. Klyachko, M.A. Can, S. Binicioglu, A.S. Shumovsky, Phys. Rev. Lett. 101, 020403 (2008)
\bibitem{C5}
H. Bartosik, et al,  Phys. Rev. Lett. 103, 40403 (2009)
\bibitem{C6}
G. Kirchmair, et al, Nature 460, 494 (2009)
\bibitem{C7}
T. Fritz, A.B. Sainz, R. Augusiak, J. B. Brask, R. Chaves, A. Leverrier, A. Acin,
Phys. Rev. A89, 032117
\bibitem{C8}
A. Cabello, S. Severini, A. Winter, Phys. Rev. Lett. 112, 040401 (2014)
\bibitem{VJGP}
A. Vourdas, J. Geom. Phys. 101, 38 (2016)
\bibitem{vour}
A. Vourdas, Rep. Prog. Phys. 67, 1 (2004)
\bibitem{vour2}
A. Vourdas, {\em Finite and profinite quantum systems}, (Springer, Berlin, 2017)
\bibitem{LO1}
G. Birkhoff, J. von Neumann, Ann. Math. 37, 823 (1936)
\bibitem{LO2}
C. Piron, `Foundations of quantum physics', Benjamin, New York, 1976
\bibitem{LO3}
G.W.  Mackey, `Mathematical foundations of quantum mechanics', Benjamin, New York, 1963
\bibitem{LO4}
J. Jauch, `Foundations of quantum mechanics', Addison-Wesley, Reading, 1968
\bibitem{LO5}
V.S. Varadarajan, `Geometry of quantum theory' (Springer, Berlin, 1968)
\bibitem{LO6}
E. Beltrametti, G. Cassinelli, `The Logic of Quantum Mechanics', Addison-Wesley, Reading, 1981
\bibitem{la1}
G. Birkhoff `Lattice theory' (Amer. Math. Soc., Rhode Island, 1995)
\bibitem{la2}
G. Szasz, `Introduction to lattice theory' (Academic, London, 1963)
\bibitem{la3}
G.A.Gratzer, `General lattice theory' (Springer, Berlin, 2003)
\bibitem{la4}
G. Kalmbach, `Orthomodular lattices', Academic, London, 1983
\bibitem{la5}
P. Ptak, S. Pulmannova, `Orthomodular lattices as quantum logics', Kluwer, Dordrecht, 1991
\bibitem{UB}
R. Ramanathan, A. Soeda, P. Kurzynski, D. Kaszlikowski, Phys. Rev. Lett. 109, 050404 (2012)


\end{thebibliography}
\end{document}